 \definecolor{lin}{RGB}{240,0,0}
 \definecolor{paleblue}{RGB}{0,9,255}
\newcommand{\maps}[3]{#1: #2 \mapsto #3}
\newcommand{\map}[3]{#1: #2 \rightarrow #3}
\newcommand{\setdef}[2]{\{#1 \; | \; #2\}}
\newcommand{\st}{\ensuremath{\operatorname{s.t.}}}
\newcommand{\real}{\ensuremath{\mathbb{R}}}
\newcommand{\probP}{\ensuremath{\mathbb{P}}}
\newcommand{\probQ}{\ensuremath{\mathbb{Q}}}
\newcommand{\unitcircle}{\ensuremath{\mathbb{S}^1}}
\newcommand{\realnonnegative}{\ensuremath{\mathbb{R}}_{\ge 0}}
\newcommand{\integernonnegative}{\ensuremath{\mathbb{Z}}_{\ge 0}}
\newcommand{\integerpositive}{\ensuremath{\mathbb{Z}}_{> 0}}
\newcommand{\until}[1]{\{1,\dots, #1\}}
\newcommand{\subscr}[2]{#1_{\textup{#2}}}
\newcommand{\supscr}[2]{#1^{\textup{#2}}}
\newcommand{\vect}[1]{\boldsymbol{#1}}
\newcommand{\vectorones}[1]{\vect{1}_{#1}}
\newcommand{\vectorzeros}[1]{\vect{0}_{#1}}
\newcommand{\Norm}[1]{\|#1\|}
\newcommand{\trans}[1]{{#1}^\top}
\newcommand{\Prob}{\ensuremath{\operatorname{Prob}}}
\newcommand{\dom}{\ensuremath{\operatorname{dom}}}
\newcommand{\untilinterval}[2]{\{{#1},\dots, {#2}\}}
\newcommand{\Lip}{\ensuremath{\operatorname{Lip}}}
\newcommand{\ow}{\ensuremath{\operatorname{otherwise}}}
\newcommand{\qed}{~\hfill{ $\blacksquare$}}
\DeclareMathOperator*{\argmin}{argmin}
\newtheoremstyle{breaknote}  {\item{\theorem@headerfont
          ##1\ ##2\theorem@separator}\hskip\labelsep\relax}  {\item{\theorem@headerfont
          ##1\ ##2\ (##3)\theorem@separator}\hskip\labelsep\relax}
\theoremstyle{breaknote}
\newtheorem{assumption}{Assumption}[section]
\newtheorem{theorem}{Theorem}[section]
\newtheorem{definition}{Definition}[section]
\newtheorem{lemma}{Lemma}[section]
\newtheorem{remark}{Remark}[section]
\title{Online Optimization and Ambiguity-based Learning of Distributionally Uncertain Dynamic Systems}
\author{Dan Li$^{1}$, Dariush Fooladivanda$^{1}$ and Sonia
Mart{\'\i}nez$^{1}$ \thanks{*
This research was developed with funding from ONR N00014-19-1-2471, and AFOSR FA9550-19-1-0235. }
\thanks{$^{1}$ D. Li and S. Mart{\'\i}nez are with the Department of
Mechanical and Aerospace Engineering, University of California San
Diego, La Jolla, CA 92092, USA. D. Fooladivanda is with the
Department of Electrical Engineering and Computer Sciences,
University of California at Berkeley, Berkeley, CA 94720, USA. {\tt
dal027.me@gmail.com; dfooladi@berkeley.edu; soniamd@ucsd.edu}}}
\begin{document}

\maketitle
\begin{abstract}
  This paper proposes a novel approach to construct data-driven online solutions to
  optimization
  problems (P)
  subject to a class of distributionally uncertain dynamical
  systems. The introduced framework allows for the simultaneous learning of distributional system uncertainty via a parameterized, control-dependent ambiguity set using a finite historical data set, and its use to make online decisions with probabilistic regret function bounds. Leveraging the merits of Machine Learning, the main technical approach relies on the theory of Distributional Robust Optimization (DRO), to hedge against uncertainty and provide less conservative results than standard Robust Optimization approaches.
  Starting from recent results that describe ambiguity sets via parameterized, and control-dependent empirical distributions as well as ambiguity radii, we first present a tractable reformulation of the corresponding optimization problem while maintaining the probabilistic guarantees. We then specialize these problems to the cases of 1) optimal one-stage control of distributionally uncertain nonlinear systems, and 2) resource allocation under distributional uncertainty. A novelty of this work is that it
    extends DRO to online optimization problems subject to a  distributionally uncertain dynamical system constraint, handled via a control-dependent ambiguity set that leads to
    online-tractable optimization with probabilistic guarantees on regret bounds.
  Further, we introduce an online version of the
  Nesterov's accelerated-gradient algorithm, and analyze its performance  to solve this class of problems via dissipativity theory.          \end{abstract}

\section{Introduction}
Online optimization has attracted significant attention from various
fields, including Machine Learning,
Information Theory, Robotics  and Smart Power Systems; see~\cite{SSS:12,AR-KS-AT:10,AS-ED-SP-GL-GBG:20} and references therein.
A basic online optimization setting
involves the minimization of time-varying convex loss functions,
resulting into Online Convex Programming (OCP). Typically, loss
objectives in OCP are functions of non-stationary stochastic
processes~\cite{MZ:03,EH:16}. Regret minimization aims to deal with non-stationarity by reducing the
difference between an optimal decision made with information in
hindsight, and one made as information is increasingly revealed.
Thus, several online algorithms and techniques are aimed at minimizing
various types of regret functions~\cite{AM-SS-AJ-AR:16,
  AJ-AR-SS-KS:15}. More recently, and with the aim of further reducing
the cost, regret-based OCP has integrated prediction models of loss
functions~\cite{AR-KS:13,ECH-RMW:15, NC-AA-AW-SB_LA:15,YL-GQ-NL:18}.
However, exact models of evolving loss functions may not be available,
while alternative data-based approximate models may require large
amounts of data that are hard to obtain.
This motivates the need of developing new learning algorithms for loss
functions that can employ finite data sets, while guaranteeing a
precise performance of the corresponding optimization.

\textit{Literature Review.}
Due to recent advances in Data Science and Machine Learning, the question of learning system models as well as distributional uncertainty from data is gaining significant attention.  From the early work on Systems Identification~\cite{LL:99}, Willem's Behavioral Theory and
fundamental lemma~\cite{JCW-PR-IM-BDM:05,TM-PR:17} have been recently  leveraged to learn linear, time-invariant system models in predictive control applications~\cite{TM-PR:17,CDP-PT:19,JC-JL-FD:19,JB-JK-MM-FA:20,AA-JC:20}. The aforementioned works rely on the use of Hankel system representations of the LTI system, and may be subject or not to additional uncertainty. In particular, the work~\cite{MN-MAM:22} leverages the behavioral theory to obtain sub-linear regret bounds for the online optimization of discrete-time unknown but deterministic linear systems. Other approaches to learn LTI systems from input-output data employ  concentration inequalities and finite samples, and include, for example,~\cite{SO-NO:19}, exploiting least squares and the Ho-Kalman algorithm,~\cite{AT-GJP:19}, using subspace identification techniques for LTI systems subject to unknown Gaussian disturbances, and~\cite{SF-NM-SS:19}, resorting to Lasso-like methods that exploit the sparse representation of LTI systems.

On the other hand, classical online optimization relies on Sample Averaging Approximation (SAA) (with bootstrap) to derive optimal value and/or policy approximations. However, SAA usually requires large amounts of data  to provide good approximations of the stochastic cost, which leads to non-robust solutions to unseen data.
In contrast, recent developments on measure-of-concentration results~\cite{NF-AG:15}  have lead to a new type of Distributionally Robust Optimization (DRO)~\cite{RG-AJK:16,PME-DK:17,SS-DK-PME:19}, which aims to bridge this gap. Particularly, the DRO framework enables finite-sample, performance-guaranteed optimization under distributional uncertainty~\cite{RG-AJK:16,PME-DK:17}, and paves the way to dealing with the control and estimation of system dynamics subject to distributional uncertainty.
Motivated by this, the works~\cite{DB-JC-SM:21,DB-JC-SM:20-acc} consider the time evolution of Wasserstein ambiguity sets and their updates under streaming data for estimation. However, the nominal dynamic constraints defined in these problems are assumed to be known, while in practice, these models also need to be identified. The previous work~\cite{DL-DF-SM:20-lcss} proposes a method for integrating the learning of an unknown and nominal parameterized system dynamics with Wasserstein ambiguity sets. These ambiguity sets are given by a parameter and control-dependent ambiguity ball center as well as a corresponding radius.  Taking this as a starting point, and motivated by the direct use of these ambiguity sets  in a  type of ``distributionally robust control'', here we further extend this setup in connection with online optimization problems.
Precisely, what distinguishes this work from other approaches is the focus  on learning the transition system dynamics itself via control-dependent ambiguity sets. The control method is derived from an online optimization method~\cite{AM-SS-AJ-AR:16}, and, therefore, it does not aim to calculate exactly an optimal control, but to find an approximate solution that leads to a low instantaneous regret function value w.r.t. standard, online and regret-based optimization problems.  Finally, this manuscript connects with the topic online optimization using decision-dependent distributions~\cite{DD-LX:23,KW-GB-EDA:22}, where the uncertainty distribution changes with the decision variable. As these problems are intractable, \cite{DD-LX:23,KW-GB-EDA:22} solve for alternative \textit{stable solutions}, or optimal solutions wrt to the distribution they induce. In addition to this, and while~\cite{DD-LX:23,KW-GB-EDA:22} can handle dynamic systems, a main difference with this work is that a dynamic system structure that is being learned is not exploited, which can help reduce uncertainty more effectively.

\textit{Statement of Contributions.}  In this
work, we propose a novel approach to solve a class of online optimization problems subject
  to distributionally uncertain dynamical systems.   Our end goal is to produce an online controller that results in bounded instantaneous regrets with high confidence.   Our proposed framework is
  unique in that it enables the online learning of the underlying nominal system,
  maintains online-problem tractability, and simultaneously provides
  finite-sample, probabilistic guarantee bounds on the resulting regret. This is achieved by
  considering a worst-case-system formulation that employs
  novel parameterized and control-dependent, Wasserstein ambiguity sets. Our learning method precisely consists of updating this ambiguity set.
  The proposed formulation is valid for a wide class
  of problems, including but not limited to 1) a class of optimal control
  problems subject to distributionally uncertain dynamical system, and 2)
  online resource allocation under distributional uncertainty.    To do this, we first obtain tractable problem reformulations for these two cases, which results in online and non-smooth convex problem optimizations.
  For each of these
  categories, and smoothed-out versions of these problems, we propose an online control algorithm dynamics,
which extends Nesterov's accelerated-gradient method. Adapting dissipativity theory, we prove  optimal first-order convergence rate for these algorithms under smoothness and convexity assumptions. This result is crucial to guarantee that the online controller can provide probabilistic guarantees on their regret bounds via the  control-dependent ambiguity set. We thus finish our work by quantifying these dynamic
regret bounds, and by
explicitly characterizing the effect of learning parameters with finite historical samples.

\section{Notations}\label{sec:Notation}
We denote by $\real^m$, $\realnonnegative^m$, $\integernonnegative^m$
and $\real^{m \times n}$ the $m$-dimensional real space, nonnegative
orthant, nonnegative integer-orthant space, and the space
of $m \times n$ matrices, respectively. The transpose of a column vector
$\vect{x} \in \real^m$ is
$\trans{\vect{x}}$, and $\vectorones{m}$ is a
  shorthand for $\trans{(1,\cdots,1)} \in \real^m$.  We index vectors
with subscripts, i.e., $\vect{x}_k \in \real^m$ with $k \in
\integernonnegative$, and given $\vect{x}\in \real^m$ we denote its
$\supscr{i}{th}$ component by $x_i$. We denote by $\Norm{\vect{x}}$
and $\Norm{\vect{x}}_{\infty}$ the $2$-norm and $\infty$-norm,
respectively.  The inner product of $\real^m$ is given as $\langle \vect{x}, \vect{y} \rangle:= \trans{\vect{x}}
\vect{y}$, $\vect{x}, \vect{y} \in \real^m$; thus,
$\Norm{\vect{x}}:=\sqrt{\langle \vect{x}, \vect{x} \rangle}$.
The gradient of a real-valued function $\map{\ell}{\real^m}{\real}$ is denoted as
$\nabla \ell(\vect{x})$ and $\nabla_{x}\ell(\vect{x})$ is the
partial derivative w.r.t.~$x$. In what follows,
$\dom(\ell):=\setdef{\vect{x} \in \real^m}{-\infty < \ell(\vect{x}) <+
  \infty}$.  A function $\map{\ell}{\dom({\ell})}{\real}$ is
$M$-strongly convex, if for any $\vect{y}, \vect{z} \in \dom({\ell})$
there exists $\vect{g}$ $\in \real^m$ such that
    $\ell(\vect{y}) \geq \ell(\vect{z}) + \trans{\vect{g}}
(\vect{y}-\vect{z}) + {M} \Norm{ \vect{y}-\vect{z} }^2 /{2}, \textrm{
  for some } M > 0$.      The function $\ell$ is convex if $M \geq 0$. We call a vector $\vect{g}$
a subgradient of $\ell$ at $\vect{z}$ and denote by $\partial
\ell(\vect{z})$ the subgradient set. If $\ell$ is
differentiable at $\vect{z}$, then $\partial \ell(\vect{z}) = \{
\nabla \ell(\vect{z}) \}$. Finally, the operation
$\map{\Pi_{\mathcal{U}}(\mathcal{X})}{\mathcal{X}}{\mathcal{U}}$
projects the set $\mathcal{X} \subseteq{\real^m}$ onto
$\mathcal{U} \subseteq{\real^m}$ under the Euclidean
norm. We write $\Pi_{\mathcal{\mathcal{U}}}(\vect{x}):=
\argmin_{\vect{z} } \Norm{\vect{x} -\vect{z} }^2/2 +
\chi_{\mathcal{U}}(\vect{z}) $, where $\vect{x} \in \mathcal{X}$, and
$\chi_{\mathcal{U}}(\vect{z})=0$ if $\vect{z} \in \mathcal{U}$,
otherwise $+\infty$.
Endow $\real^n$ with the Borel $\sigma$-algebra $\mathcal{B}$, and let $\mathcal{P}(\real^n)$ be the set of probability measures (or distributions) over $(\real^n,\mathcal{B})$.  The set  of probability distributions with bounded first moments is $\mathcal{M} =\{ \mathbb{Q} \in \mathcal{P}(\real^n)\,|\, \int_{\real^n} \|\vect{x}\| \text{d}\mathbb{Q} < +\infty\}$.
We use the Wasserstein
metric~\cite{KLV-RGS:58} to define a distance in
$\mathcal{M}$, and the dual version of the
$1$-Wasserstein metric $\map{d_W}{\mathcal{M} \times
  \mathcal{M}}{\realnonnegative}$, is defined by
  $d_W(\probQ_1,\probQ_2):= \sup_{f \in \mathcal{L}}\int
 {f(\vect{x})\text{d}\probQ_1} -\int
 {f(\vect{x})\text{d}\probQ_2} $,
  where $\mathcal{L}$ is the space of all Lipschitz functions with
 Lipschitz constant 1. We denote a closed Wasserstein ball of radius
 $\epsilon$ (also called an ambiguity set) centered at a distribution $\probP \in \mathcal{M}$ by
 $\mathbb{B}_{\epsilon}(\probP):=\setdef{\probQ \in
   \mathcal{M}}{d_W(\probP,\probQ) \le \epsilon}$.  The Dirac measure
 at $\vect{x}_0 \in \real^n$ is a distribution in $\mathcal{P}(\real^n)$ denoted by
 $\delta_{\{\vect{x}_0\}}$. Given $A \in
 \mathcal{B}$, we have $\delta_{\{\vect{x}_0\}}(A)=1$, if $\vect{x}_0 \in A$,
 otherwise $0$.
 A random vector $\vect{x} \in \real^m$ with probability distribution $\probQ$ is
   sub-Gaussian if there are positive constants $C, v$ such that $\probQ(\|\vect{x}\| > t) \le C \text{e}^{-v t^2}$.
 Equivalently, a zero-mean random vector $\vect{x} \in \real^n$ is sub-Gaussian if for any $a\in\real^n$ we have $\mathbb{E} \left[ \exp ( \trans{a} {\vect{x}}) \right] \leq \exp ( {\Norm{a}^2
     \nu^2}/{2})$ for some~$\nu$.
\section{Problem Statement, Motivation, and Approach based
    on  Ambiguity Set Learning}\label{sec:ProbStat}
We
  start by introducing a class of online optimization problems, where
the objective function is time-varying according to an
unknown dynamical system subject to an unknown
  disturbance.  Consider a dynamical system that evolves according to
unknown stochastic dynamics \begin{equation}
  \begin{aligned}
    \vect{x}_{t+1}=&f(t,\vect{x}_{t},\vect{u}_{t}) + \vect{w}_{t}, \;
    {\textrm{ from a given }}
    \vect{x}_0 \in \real^n,       \end{aligned} \label{eq:env}
\end{equation}
where $\vect{u}_t \in \mathcal{U} \subset \real^m$ is an online decision or control action at
time~$t$, $\map{f}{\realnonnegative \times \real^n \times \real^m}{\real^n}$ is a
measurable, but unknown state transition function, and $\vect{w}_{t} \in \real^n$, is an unknown, random, disturbance vector. Due to the Markov assumption, $\vect{x}_t \in
\real^n$ can be described by an unknown transition probability measure $\probP_{t|t-1}\in \mathcal{P}(\real^n)$, conditioned on the system state and control at time $t-1$. Denote by $\map{\ell}{\real^m  \times \real^n}{\real}$,
${(\vect{u},\vect{x})} \mapsto {\ell(\vect{u},\vect{x})}$ \textit{an
  a-priori} selected, measurable loss
function. Assume that $\mathcal{U}$ is compact,
  and we are interested in selecting $\vect{u}_t \in \mathcal{U}$ that minimizes the loss
\begin{equation*}
  \begin{aligned}
    \min\limits_{\vect{u}_t \in \mathcal{U}}  & \quad \left\{
      \mathbb{E}_{\probP_{t+1|t}} \left[ \ell(\vect{u}_t, \vect{x})\right]:=
      \int_{\real^n} \ell(\vect{u}_t, \vect{x}) \; \probP_{t+1|t}(d\vect{x})
    \right\} . \\
        \end{aligned}
\end{equation*} This objective value is inaccessible since the state distribution $\probP_{t+1|t}$ is unknown,
and its evolution is highly dependent on the
system, disturbance, and as well as on the decisions taken.
In this work, we aim to propose an effective online optimization and
learning algorithm which tracks the minimizers of the time-varying
objective function with low regret in high
probability. Thus, at each time $t$, we aim to
find $\vect{u}:=\vect{u}_t$ that minimizes the loss
in the immediate future at $t+1$ \begin{equation}
  \begin{aligned}
    \min\limits_{\vect{u} \in \mathcal{U}}  & \; \mathbb{E}_{\probP_{t+1|t}} \left[ \ell(\vect{u}, \vect{x})\right], \\
    \st \; & \;  \vect{x} \sim \probP_{t+1|t}, \textrm{ evolves according to~\eqref{eq:env}}.
  \end{aligned} \label{eq:P} \tag{P}
\end{equation}
This problem formulation is similar to a one-stage optimization problems with unknown system transitions~\cite{DB:12}.
The expectation operator with respect to $\probP_{t+1|t}$ is conditional on the historical realizations $\hat{\vect{x}}_k$, $k\leq t$, the adopted decisions $\hat{\vect{u}}_k$, $k\leq t-1$, the yet-to-be-learned unknown dynamical system $f$, and realizations $\hat{\vect{w}}_k$, $k\leq t-1$. We will identify $\probP_{t+1|t}(d\vect{x}) \equiv \probP_{t+1}(d\vect{x}|\vect{u}_t, \vect{x}_t=\hat{\vect{x}}_t, \vect{x}_k=\hat{\vect{x}}_k, \vect{u}_k=\hat{\vect{u}}_k, k \leq t-1)$ which, by the Markovian property, satisfies  $\probP_{t+1|t}(d\vect{x}) \equiv \probP_{t+1}(d\vect{x}|\vect{u}_t, \vect{x}_t=\hat{\vect{x}}_t)$.
At time $t$, let $\vect{u}^{\star}:=\vect{u}_t^{\star}$ denote an optimizer of Problem~\eqref{eq:P} and consider the instantaneous regret \begin{equation*}
  R_t:= \mathbb{E}_{\probP_{t+1|t}} \left[ \ell(\vect{u}, \vect{x})\right] -
  \mathbb{E}_{\probP_{t+1|t}} \left[ \ell(\vect{u}^{\star}, \vect{x})\right],
\end{equation*}
which is the loss incurred if the selected $\vect{u}$ is different from an optimal decision. Our goal will be to develop a robust online algorithm which ensures a probabilistic bound on the regret. That is, with high probability $\rho$, the regret $R_t$ is upper bounded by a sum of terms,
a first one depending on the initial condition $\vect{x}_0$; a second one depending on the instantaneous variation of the loss of~\eqref{eq:P}; and a third term related to how well the unknown system $f$ and the uncertainty are characterized; please see Theorem~\ref{thm:regret}. While the second and third terms are inherent to the system, the effect of the second one can be reduced by considering a predicted loss of the system~\cite{YL-GQ-NL:18}. In this work, we aim to bound the third term and minimize it by estimating the distribution $\probP_{t+1|t}$ via an ambiguity set of distributions. We will show that, as historical data are assimilated over time, this third term asymptotically decays to zero.
This is achieved under the following assumption \begin{assumption}[Independent and stationary sub-Gaussian
  distributions] \label{assump:subG} {\rm
  The vectors $\vect{w}_t  \in \real^n$, $t \in \integernonnegative$, are i.i.d. with $\vect{w}_t \sim \probQ$ and zero-mean $\sigma$ sub-Gaussian\footnote{That is, for all unit vector $\vect{v}$, we have  $\mathbb{E}[\text{e}^{\lambda \vect{v}^\top \vect{w}_t}] \le \text{e}^{\lambda^2 \sigma^2}/2 $,  $\forall \lambda \in \real$. Equivalently, $\probQ(\|\vect{w}_t\| > \lambda) \le  \text{e}^{-\lambda^2/(4\sigma^2)}$, $\forall \lambda$.}}.

\end{assumption}

\begin{remark}[On sub-Gaussian distributions] {\rm
Sub-Gaussian distributions include Gaussian random variables and all distributions of bounded support.                 } \end{remark}

  \noindent \textbf{Example 1 (Vehicle path planning and tracking):} A two-wheeled vehicle moves in an unknown
  2D environment. Assume that an
  accessible path-planner provides a control signal for the
  vehicle to track a desired reference trajectory under ideal
  conditions, see Fig.~\ref{fig:subG}.   Fig.~\ref{fig:path_plan} shows two examples where, first, the vehicle
  implements a series of lane changes, and, second, navigates   through a planned circular/loopy route. Since both the environment
  and dynamics are uncertain, exact tracking is rare.    Our goal
  is to learn the real-time road conditions,       and by solving the online
  problem~\eqref{eq:P},  derive a control signal that  enables path
  following minimizing the tracking error with high probability.
  \begin{figure}[tbp]
    \centering
  \includegraphics[width=0.4\textwidth]{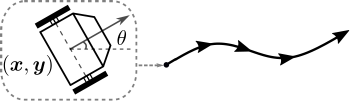}  \caption{\footnotesize A two-wheeled vehicle model with
      $(x,y)\in \real^2$ the position of the center and $\theta$ the
      direction.
    }  \label{fig:subG}
  \end{figure}
  \begin{figure}[tbp]
    \centering
  \includegraphics[width=0.125\textwidth]{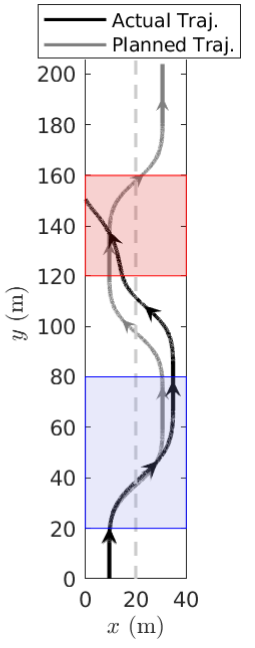}%
   \includegraphics[width=0.405\textwidth]{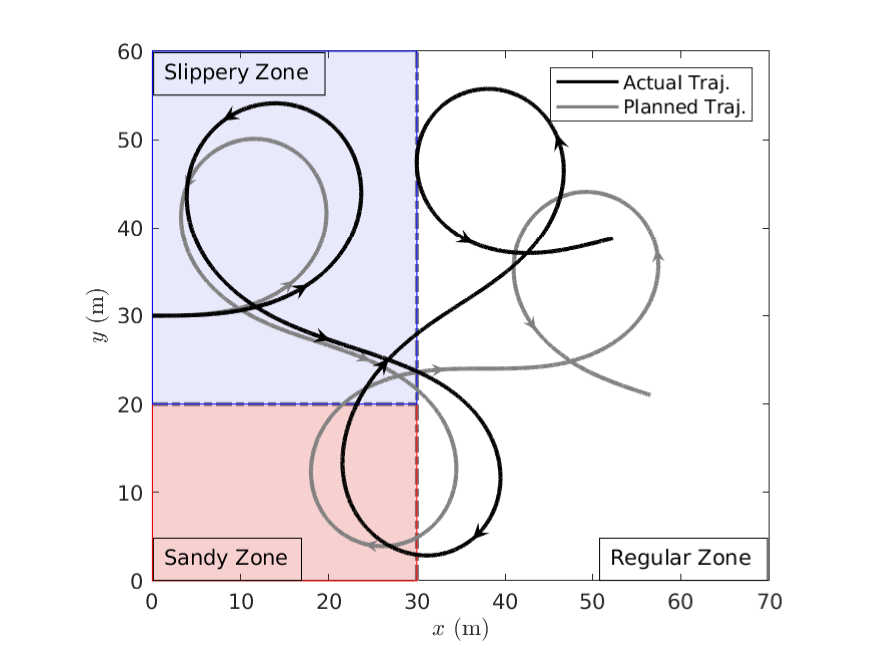}

   \caption{\footnotesize The (gray) planned trajectory and
      (black) actual system trajectory in various road zones, with the
      system state $\vect{x}=(x,y,\theta) \in \real^2 \times
      [-\pi,\pi)$. The red region indicates sandy zone while the blue
      region indicates the slippery zone. Due to unknown road
      conditions, the actual system trajectories deviate from planned
      trajectories.     }
      \label{fig:path_plan}
    \end{figure}

    \noindent \textbf{Example 2 (Online resource allocation in the
      stock market):} An agent aims to achieve a target
    profit of $r_0=130\%$ in a highly-fluctuating trading market. Thus, it actively allocates wealth to multiple risky assets
    while trying to balance resources among assets. As
    asset-prices are uncertain, modeling the return rate of each asset
    is specially challenging.         To solve this, an agent can aim
    to learn the real-time returns responsively, estimate the
    distributions of immediate returns, and then allocate wealth
    wisely to maximize the expected profit with high probability. This
    problem fits in the proposed formulation, resulting in online,
    balanced resource allocation with low regrets.
\subsection{Online Constructions of Ambiguity Sets
} \label{sec:ambiguitysets}
Our main approach to obtain a suitable control signal is
  based on learning a set of distributions or ambiguity set that
  characterizes system uncertainty.
More precisely, we employ the dynamic
ambiguity set $\mathcal{P}_{t+1}$ proposed
in~\cite{DL-DF-SM:20-lcss}. The set $\mathcal{P}_{t+1}$ contains a
class of distributions, which is, in high probability, large enough to
include the unknown $\probP_{t+1|t}$ under certain conditions. Thus, we
can use it to formulate a robust version of the problem at each time
instant $t$. Such characterization enables an online-tractable
reformulation of~\eqref{eq:P} later.
We summarize next the construction of these ambiguity sets $\mathcal{P}_{t+1}$. First, we assume the following on the unknown $f$.
\begin{assumption}[System parametrization] \label{assump:predictor} {\rm
Given $p\in \integerpositive$, the system $f$ can be expressed as
\begin{equation*}
  f(t,\vect{x},\vect{u})
  = \sum\limits_{i=1}^{p} \alpha^\star_i f^{(i)}(t,\vect{x},\vect{u}) ,
\end{equation*}
where $\vect{\alpha}^\star :=\trans{(\alpha_1^\star, \ldots, \alpha_{p}^\star)} \in \real^p$ is an unknown parameter, and $f^{(i)}:\realnonnegative \times \real^n \times \real^m \rightarrow \real^n $, $(t,\vect{x},\vect{u}) \mapsto
f^{(i)}(t,\vect{x},\vect{u})$, $i \in \until{p}$ is a set of $p$ linearly independent known basis functions or \textit{predictors} chosen \textit{a priori}.
}
\end{assumption}

Now, given arbitrary $(\vect{\alpha},\vect{u})$,  the set $\mathcal{P}_{t+1}$ is a Wasserstein ball centered at a parametric-dependent distribution $\hat{\probP}_{t+1|t}$ for each $t$; that is,
\begin{equation*}
  \mathcal{P}_{t+1}   := \mathbb{B}_{\hat{\epsilon}  }(\hat{\probP}_{t+1|t}   ) =\setdef{\probQ}{ d_W(\probQ, \hat{\probP}_{t+1|t}) \leq \hat{\epsilon} }.
\end{equation*}
 Here, $\hat{\epsilon}$ will be a time-varying function $\hat{\epsilon}\equiv\hat{\epsilon}(t, T, \beta, \vect{\alpha}, \vect{u})$ which depends on a number of $T$ measurements, and a confidence $\beta \in (0,1)$. More precisely,

\begin{equation} \label{eq:empdistn}
  \hat{\probP}_{t+1|t} (\vect{\alpha},\vect{u})
  := \frac{1}{T} \sum\limits_{k\in \mathcal{T}} \delta_{\{
 \sum_{i=1}^{p} \alpha_i \xi_k^{(i)}(\vect{\alpha},\vect{u})\}};
\end{equation}
see the footnote\footnote{
$\displaystyle {\xi}_{k}^{(i)}(\vect{\alpha},\vect{u}):= f^{(i)}({t},\hat{\vect{x}}_{t}, \vect{u}) + {\hat{\vect{x}}_{k+1}}/{(\trans{\vect{\alpha}} \vectorones{p})}-f^{(i)}({k},\hat{\vect{x}}_{k},\vect{u}_{k})$, with $\hat{\vect{x}}_{t}$, $\hat{\vect{x}}_{k+1}$, $\hat{\vect{x}}_{k}$ being the state measurements at time $t$, $k+1$, $k$ and $\vect{u}_{k}$ being the past input at $k$, $k \in \mathcal{T}$.}, where $\mathcal{T} = \{t-T,\dots, t\}$, for $t \ge T+1$. If $\vect{\alpha} = \vect{\alpha}^\star$, then $\sum_{i=1}^p \alpha_i {\xi}_{k}^{(i)}(\vect{\alpha},\vect{u}) \in \real^n$ provides an outcome $\vect{x}_{t+1}^{(k)} := f(t,\vect{x}_{t},\vect{u}) + \vect{w}_k $ $= \sum_{i=1}^p \alpha_i^\star f^{(i)}(t,\vect{x}_{t},\vect{u}) + \vect{w}_k$, for each $k$. For a general $\vect{\alpha} \approx \vect{\alpha}^\star$, the value $\sum_{i=1}^{p} \alpha_i \xi_k^{(i)}(\vect{\alpha},\vect{u})$ provides  ``approximated" outcomes  $\vect{x}_{t+1}^{(k)}$, for each $k = 1,\dots,T$.
Then, we claim the probabilistic guarantee of $\mathcal{P}_{t+1}$ by a selection of the parameter $\vect{\alpha}$ and $\hat{\epsilon}$ for any $\vect{u}$.
\begin{theorem}[Online probabilistic guarantee~{\cite[Application of Theorem 1]{DL-DF-SM:20-lcss}}] \label{thm:ambiguityset}
 Let Assumptions~\ref{assump:subG} and~\ref{assump:predictor} hold.
 For a given $T \in \integerpositive$, historical data $\{ \hat{\vect{x}}_k \}_{k\in\mathcal{T}}$ and $\{ \vect{u}_k \}_{k\in\mathcal{T}\setminus\{t\}}$, $\mathcal{T}= \{t-T,\dots, t\}$, we select $\hat{\probP}_{t+1|t}$ as in~\eqref{eq:empdistn} where $\vect{\alpha}$ is selected in~\cite[Theorem 2 (Learning of $\vect{\alpha}^{\star}$)]{DL-DF-SM:20-lcss}\footnote{In~\cite[Theorem 2]{DL-DF-SM:20-lcss}, the value $\vect{d}$ plays the role of $\vect{u}$ in this work.}. Then, for given $\vect{u}$ and a confidence-related value
$\beta\in(0,1)$, a radius $\hat{\epsilon}:=\hat{\epsilon}(t,T,\beta,\vect{\alpha},\vect{u})$ can be chosen such that
\begin{equation}
  \begin{aligned}
    &{\Prob}\left( \probP_{t+1|t} \in
      \mathcal{P}_{t+1}     \right) \geq \rho(t) .
\end{aligned} \label{eq:modgua}
\end{equation}
Here, the left-hand-side expression is a shorthand for the probability of the event $ \{(\vect{x}_{t+1}^{(1)}, \dots, \vect{x}_{t+1}^{(T)}) \in \real^n \times \dots \times \real^n \,|\, \probP_{t+1|t} \in \mathbb{B}_{\hat{\epsilon}}(\hat{\probP}_{t+1|t})\}$ and $\Prob:=\probP_{t+1|t}^T$ denotes the probability measure defined
on the $T$-fold product of $\probP_{t+1|t}$, which evaluates the probability that the selection of samples define an ambiguity ball which contains the true distribution. In particular, the confidence value is
\begin{equation*}
     \rho(t) :=    \left( 1- \beta \right) \left(   1- \exp \left( - \frac{ ( \gamma^2 - \sqrt{2}c \gamma) T }{2 \sqrt{2} \left( c \gamma  + \sqrt{2} c^2 \right)} \right)  \right),
   \end{equation*}
   where $c$ is a data-dependent positive constant and $\gamma> \sqrt{2}c$ is a user selected parameter. Further, the radius is
\begin{equation} \label{eq:epsilon} \small
\begin{aligned}
      \hat{\epsilon}  :=& \sqrt{ \frac{2n M\sigma^2}{T} \ln(\frac{1}{\beta})} + C_1 T^{- {1}/{\max \{ n, 2  \} }} + \gamma H(t,T,{\vect{u}}) , \\
  \end{aligned}
\end{equation}
where $M$ and $C_1$ are positive constants, and
\begin{equation*}
  H(t,T,{\vect{u}}):= \frac{1}{T} \sum\limits_{i=1}^{p} \sum\limits_{k\in \mathcal{T}} \Norm{ f^{(i)}(k,\hat{\vect{x}}_{k},{\vect{u}}_{k}) - f^{(i)}(t,\hat{\vect{x}}_{t},{\vect{u}}) },
\end{equation*}
which bounds the variation of predicted system trajectories.

\end{theorem}

\textit{Idea of the Proof.} The probabilistic guarantees~\eqref{eq:modgua} are a consequence of Lemma 1, Theorem 1, Theorem 2 and Eqn.~(7) in~\cite{DL-DF-SM:20-lcss} with Assumptions~\ref{assump:subG} and~\ref{assump:predictor}. Precisely, we achieve this by upper bounding the metric $d_W({\probP}_{t+1|t}, \hat{\probP}_{t+1|t}(\vect{\alpha},\vect{u}))$ using $d_W({\probP}_{t+1|t}, \hat{\probP}_{t+1|t}(\vect{\alpha}^\star,\vect{u}))$ plus $d_W(\hat{\probP}_{t+1|t}(\vect{\alpha}^\star,\vect{u}),\hat{\probP}_{t+1|t}(\vect{\alpha},\vect{u}))$.
Then, the first distance is handled via ~\cite[Lemma 1]{DL-DF-SM:20-lcss} using standard measure of concentration results\footnote{Lemma~1 in~\cite{DL-DF-SM:20-lcss} makes use of a stronger Assumption~\ref{assump:subG}, which requires $\vect{w}_k$ to be white. However, this can be relaxed to the current assumption by multiplying the upper bound in the lemma with a constant $M>0$ associated with noise whitening via an appropriate linear transformation.}, contributing to the first two terms of the radius $\hat{\epsilon}$ in~\eqref{eq:epsilon}.
Next, the second distance $d_W(\hat{\probP}_{t+1|t}(\vect{\alpha}^\star,\vect{u}),\hat{\probP}_{t+1|t}(\vect{\alpha},\vect{u}))$ can be bounded in terms of the difference $\|\vect{\alpha} - \vect{\alpha}^\star\|$ via ~\cite[Theorem 1]{DL-DF-SM:20-lcss}, contributing to the third term in $\hat{\epsilon}$. Notice that the third term depends on Assumption~\ref{assump:predictor} and the selected parameter $\gamma$ which relies on the selection of $\vect{\alpha}$ via~\cite[Theorem 2 (Learning of $\vect{\alpha}^{\star}$)]{DL-DF-SM:20-lcss}. The confidence value $\rho(t)$ is achieved by Assumption~\ref{assump:subG} applying to the same procedure as in~\cite[Theorem 2]{DL-DF-SM:20-lcss}, which essentially bounds $\Norm{\vect{\alpha} -\vect{\alpha}^{\star}}_{\infty}$ in probability. Precisely, by Assumption~\ref{assump:subG}, we have $\probQ(\|\vect{w}_t\|_{\infty} > \eta) \le  \text{e}^{-\eta^2/(4\sigma^2)}$,  $\forall \; \eta$, resulting in $\mathbb{E}\left[ \Norm{\vect{w}_t}_{\infty}^{l} \right] \leq 2^{\frac{l}{2}-1}\sigma^{l} l^{\frac{l}{2}+1}, \;\; \forall \, l \in
  \integernonnegative$, analogous to~~\cite[Lemma 2]{DL-DF-SM:20-lcss-arxiv}. Then, with the proof similar to~\cite[Theorem IV.2]{DL-DF-SM:20-lcss-arxiv}, we achieve
 \begin{equation*}
{\small
  \Prob \left( \frac{1}{T} \sum\limits_{k \in \mathcal{T}} \left(    \Norm{\vect{w}_k}_{\infty}  \right) \geq \gamma  \right)
\leq \exp \left( -\gamma \lambda +  \frac{ T \sqrt{2}
 \sigma e \lambda }{ 2T- 2 \sqrt{2} \sigma e \lambda}  \right) .
}
\end{equation*}
By selecting
\begin{equation*} \small
\lambda=
\begin{cases}
  \frac{T}{2 \sqrt{2}
  \sigma e} -\frac{T}{2 \gamma}, &\textrm{ if } \gamma \geq \sqrt{2}
   \sigma  e , \\
    0, &\textrm{ if } \gamma < \sqrt{2}
     \sigma e ,
\end{cases}
\end{equation*}
we follow the proof~\cite[Theorem IV.2]{DL-DF-SM:20-lcss-arxiv} to achieve
\begin{equation*}
  \begin{aligned}
  & \Prob \left( \frac{1}{T} \sum\limits_{k \in \mathcal{T}} \left(    \Norm{\vect{w}_k}_{\infty}  \right) \geq \gamma  \right)    \\
  & \hspace{10ex}  \leq \begin{cases}
    \exp \left( - \frac{ T (\gamma^2 - \sqrt{2}\sigma e \gamma) }{2 \sqrt{2} \sigma e (\gamma + \sqrt{2} \sigma e)  } \right) ,  &\textrm{ if } \gamma \geq
        \sqrt{2} \sigma e , \\
    1 , &\textrm{ if } \gamma <
     \sqrt{2} \sigma e .
\end{cases}
  \end{aligned}
\end{equation*}
By bound propagation, we have
\begin{equation*} \small
  \begin{aligned}
  &  \Prob \left( \Norm{\vect{\alpha} - \vect{\alpha}^{\star}}_{\infty}  \leq \gamma \right)  \geq
   1- \exp \left( - \frac{ ( \gamma^2 - \sqrt{2}c \gamma) T }{2 \sqrt{2} \left( c \gamma  + \sqrt{2} c^2 \right)} \right) ,
  \end{aligned}
\end{equation*}
with $\gamma> \sqrt{2} c$ and $c$ is selected as in~\cite[Theorem 2]{DL-DF-SM:20-lcss}.
Finally, the combination of all the above considerations complete the proof.  \qed

Theorem~\ref{thm:ambiguityset} provides a methodology to construct online ambiguity sets with guarantees in probability. In general, $\rho(t)$ is strictly smaller than 1 unless there is a way of making $\vect{\alpha}(t) \rightarrow \vect{\alpha}^\star$. This is implemented in~\cite{DL-DF-SM:20-lcss} via an online learning algorithm which leads to $\rho(t) \rightarrow 1- \beta$ via Eqn.~(7) in the same work. Notice how these constructions are related to the decision variable $\vect{u}$ and,
in the following, we leverage the probabilistic
characterization
$\mathcal{P}_{t+1}:=\mathcal{P}_{t+1}(\vect{\alpha},\vect{u})$ of the
distribution $\probP_{t+1|t}$ for solutions
to~\eqref{eq:P}.

\section{A Tractable Problem Reformulation
  and Its Specialization to Two  Problem Classes}
 \label{sec:tractable}
 In this section, we start by describing how to deal with
   the unknown $\probP_{t+1|t}$ in Problem~\eqref{eq:P}, via
   ambiguity sets, which results in~\eqref{eq:P1}.
 By doing this, the solution of~\eqref{eq:P1} provides
 guarantees on the performance
 of~\eqref{eq:P}. Unfortunately, this results into an
   online intractable problem. Thus, we find a tractable
 reformulation~\eqref{eq:P2} which is equivalent to~\eqref{eq:P1}
 under certain conditions. After this, we focus the rest of
   our work on two problem sub-classes, which allows us to present and
   analyze the online algorithms for these problems in the following
   section. Formally, let us consider
\begin{equation}
  \begin{aligned}
    \min\limits_{\vect{u} \in \mathcal{U}}  \sup\limits_{\probQ \in
      \mathcal{P}_{t+1}(\vect{\alpha},\vect{u})} & \;
    \mathbb{E}_{\probQ} \left[ \ell(\vect{u}, \vect{x})\right],         \end{aligned} \label{eq:P1} \tag{P1}
\end{equation}
where, for a fixed $\vect{\alpha}:=\vect{\alpha}_t$ and
${\vect{u}}:={\vect{u}_t} \in \mathcal{U}$, it holds that
$\probP_{t+1|t} \in \mathcal{P}_{t+1}(\vect{\alpha},\vect{u})$ with high
probability. This results in
\begin{equation*}
  \begin{aligned}
    & \Prob \left( \mathbb{E}_{\probP_{t+1|t}} \left[ \ell( {\vect{u}},
        \vect{x})\right] \leq \sup\limits_{\probQ \in
        \mathcal{P}_{t+1}       }\; \mathbb{E}_{\probQ} \left[ \ell( {\vect{u}},
        \vect{x})\right] \right) \geq \rho (t).
\end{aligned}   \end{equation*}
Observe that, the probability measure $\Prob$ and the bound $\rho(t)$ coincides
with that in~\eqref{eq:modgua} and notice how the value
$\rho(t)$ changes for various data-set sizes $T$
in Theorem~\ref{thm:ambiguityset}.

The solution $\vect{u}$ and the objective value of~\eqref{eq:P1}
ensure that, when we select $\vect{u}$ to be the decision
for~\eqref{eq:P}, the expected loss of~\eqref{eq:P}
  is no worse than that from~\eqref{eq:P1} with high
  probability.
      The
formulation~\eqref{eq:P1} still requires expensive online computations
due to its semi-infinite inner optimization problem.
Thus, we propose an equivalent reformulation of~\eqref{eq:P1} for a
class of loss functions as in the following assumption.

\begin{assumption}[Lipschitz loss
  functions] \label{assump:cvxloss} {\rm Consider the loss function
    $\map{\ell}{\real^m \times \real^n}{\real}$,
    ${(\vect{u},\vect{x})} \mapsto {\ell(\vect{u},\vect{x})}$. There
    exists a Lipschitz function $\map{L}{\real^m}{\realnonnegative}$
    such that for each $\vect{u} \in \real^m$, it holds that
    $\Norm{\ell(\vect{u}, \vect{x}) - \ell(\vect{u}, \vect{y})} \leq L
    (\vect{u}) \Norm{\vect{x} - \vect{y}}$ for any $\vect{x}, \vect{y}
    \in \real^n$.  }
\end{assumption}

With this, we obtain the following upper bound:
\begin{lemma}[An upper bound of~\eqref{eq:P1}]
  Let Assumption~\ref{assump:cvxloss} hold. Then, for each $\vect{u}$,
  $\vect{\alpha}$, $\beta$, $T$ and $t$, we have \begin{equation*}
  \begin{aligned}
    \sup\limits_{\probQ \in \mathcal{P}_{t+1}(\vect{\alpha},\vect{u})} & \; \mathbb{E}_{\probQ} \left[ \ell(\vect{u}, \vect{x})\right]  \\
    & \hspace{0ex} \leq
    \mathbb{E}_{\hat{\probP}_{t+1|t}(\vect{\alpha},\vect{u})} \left[
      \ell(\vect{u}, \vect{x})\right] +
    \hat{\epsilon}(t,T,\beta,\vect{\alpha},\vect{u}) L(\vect{u}),
  \end{aligned}
\end{equation*}
where the empirical distribution
$\hat{\probP}_{t+1|t}(\vect{\alpha},\vect{u})$ and scalar
$\hat{\epsilon}(t,T,\beta,\vect{\alpha},\vect{u})$ are described as in Section~\ref{sec:ambiguitysets}.
\label{lemma:Reform}
\end{lemma}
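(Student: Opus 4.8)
The plan is to exploit the Kantorovich–Rubinstein dual representation of the $1$-Wasserstein metric, which is exactly the form of $d_W$ adopted in the paper. The key observation is that Assumption~\ref{assump:cvxloss} makes the map $\vect{x} \mapsto \ell(\vect{u},\vect{x})$ Lipschitz with constant $L(\vect{u})$ for each fixed $\vect{u}$, so a rescaled copy of this map is an admissible test function in the supremum that defines $d_W$.

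First I would treat the generic case $L(\vect{u}) > 0$. Setting $g(\vect{x}) := \ell(\vect{u},\vect{x})/L(\vect{u})$, Assumption~\ref{assump:cvxloss} guarantees that $g$ is $1$-Lipschitz, hence $g \in \mathcal{L}$. For any $\probQ \in \mathcal{P}_{t+1}(\vect{\alpha},\vect{u})$, the definition of $\mathcal{P}_{t+1}$ as the Wasserstein ball of radius $\hat{\epsilon}$ gives $d_W(\probQ,\hat{\probP}_{t+1}) \le \hat{\epsilon}$. I would then write $\mathbb{E}_{\probQ}[\ell(\vect{u},\vect{x})] - \mathbb{E}_{\hat{\probP}_{t+1}}[\ell(\vect{u},\vect{x})] = L(\vect{u})\bigl(\int g\,d\probQ - \int g\,d\hat{\probP}_{t+1}\bigr)$ and bound the parenthesized term by the supremum over all $f \in \mathcal{L}$ of the corresponding difference of integrals, which is precisely $d_W(\probQ,\hat{\probP}_{t+1})$ by the definition of $d_W$. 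This yields $\mathbb{E}_{\probQ}[\ell(\vect{u},\vect{x})] \le \mathbb{E}_{\hat{\probP}_{t+1}}[\ell(\vect{u},\vect{x})] + L(\vect{u})\,\hat{\epsilon}$; since the right-hand side does not depend on $\probQ$, taking the supremum over $\probQ \in \mathcal{P}_{t+1}$ finishes this case.

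The only degenerate situation is $L(\vect{u}) = 0$, where the rescaling is undefined. Here I would note that Assumption~\ref{assump:cvxloss} forces $\ell(\vect{u},\cdot)$ to be constant in $\vect{x}$, so $\mathbb{E}_{\probQ}[\ell(\vect{u},\vect{x})] = \mathbb{E}_{\hat{\probP}_{t+1}}[\ell(\vect{u},\vect{x})]$ for every $\probQ$, and the stated inequality holds with equality because $\hat{\epsilon}\,L(\vect{u}) = 0$.

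I do not expect a genuine obstacle: the result is essentially a one-line consequence of the dual form of $d_W$ together with the Lipschitz hypothesis. The only points requiring care are ensuring that the normalization produces a function in $\mathcal{L}$ (Lipschitz constant at most $1$, which is why dividing by $L(\vect{u})$ and not any other constant is what makes the bound tight), handling the degenerate $L(\vect{u})=0$ branch separately, and observing that all expectations are finite since every distribution in $\mathcal{M}$ has finite first moment and, by Assumption~\ref{assump:cvxloss}, $\ell(\vect{u},\cdot)$ grows at most linearly.
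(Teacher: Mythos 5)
Your proposal is correct and follows essentially the same route as the paper's proof: both normalize $\ell(\vect{u},\cdot)$ by $L(\vect{u})$ to obtain an admissible $1$-Lipschitz test function in the dual form of $d_W$, apply the Wasserstein-ball constraint $d_W(\probQ,\hat{\probP}_{t+1})\le\hat{\epsilon}$, and take the supremum over $\probQ$. Your separate treatment of the degenerate case $L(\vect{u})=0$ is a small additional care the paper sidesteps by implicitly taking $L$ positive, but it does not change the argument.
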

Hereafter, see the appendix for all proofs.

Next, we claim that the upper bound in Lemma~\ref{lemma:Reform} is
tight if the following assumption holds.
\begin{assumption}[Convex and gradient-accessible
  functions] \label{assump:gradient} {\rm The loss function $\ell$ is
    convex in $\vect{x}$ for each
      $\vect{u}$. Further, for each time
    $t$ with given $\vect{u}(=\vect{u}_t) \in \mathcal{U}$ and $\vect{\alpha}(=\vect{\alpha_t}) \in
    \real^p$, there is a system prediction $\sum_{i=1}^{p}
    \alpha_i \xi_k^{(i)}(\vect{\alpha},\vect{u})$ for some     $k \in \mathcal{T}$ such that $\nabla_{\vect{x}} \ell$ exists and
    $L(\vect{u})$ is equal to $\Norm{\nabla_{\vect{x}}
      \ell}$     at $(\vect{u}, \sum_{i=1}^{p} \alpha_i
    \xi_k^{(i)}(\vect{\alpha},\vect{u}))$.   }
          \end{assumption}
The above statement enables the following theorem.
\begin{theorem}[Equivalent reformulation
  of~\eqref{eq:P1}] \label{thm:Reform}
  Let
  Assumptions~\ref{assump:cvxloss}
 and~\ref{assump:gradient} hold. Let
  $\Xi_{t+1}$ denote the support of the distribution
  $\probP_{t+1|t}$. Then, if
    $\Xi_{t+1}=\real^n$,~\eqref{eq:P1} is equivalent to the
  following   problem
\begin{equation}
  \begin{aligned}
                        \min\limits_{\vect{u} \in \mathcal{U} } \;
    \mathbb{E}_{\hat{\probP}_{t+1|t}(\vect{\alpha},\vect{u})} \left[
      \ell(\vect{u}, \vect{x})\right] +
    \hat{\epsilon}(t,T,\beta,\vect{\alpha},\vect{u}) L(\vect{u}).
  \end{aligned} \label{eq:P2} \tag{P2}
\end{equation}
\end{theorem}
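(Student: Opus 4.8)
The plan is to upgrade the one-sided estimate of Lemma~\ref{lemma:Reform} into an equality. Since that lemma already gives, for every fixed $\vect{u}$,
\[
\sup_{\probQ \in \mathcal{P}_{t+1}(\vect{\alpha},\vect{u})} \mathbb{E}_{\probQ}\left[\ell(\vect{u},\vect{x})\right] \;\le\; \mathbb{E}_{\hat{\probP}_{t+1}(\vect{\alpha},\vect{u})}\left[\ell(\vect{u},\vect{x})\right] + \hat{\epsilon}(t,T,\beta,\vect{\alpha},\vect{u})\, L(\vect{u}),
\]
it suffices to produce, for each $\vect{u}\in\mathcal{U}$, a feasible distribution in $\mathcal{P}_{t+1}$ whose expected loss attains the right-hand side. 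Proving the matching lower bound pointwise in $\vect{u}$ shows that the inner worst-case value of~\eqref{eq:P1} coincides with the objective of~\eqref{eq:P2} as functions of $\vect{u}$; the two outer problems then share the same objective and are therefore equivalent.

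I would build the extremal measure by displacing a single atom of the empirical distribution along the gradient direction. Write $\vect{y}_k := \sum_{i=1}^{p}\alpha_i\,\xi_k^{(i)}(\vect{\alpha},\vect{u})$ for the atoms of $\hat{\probP}_{t+1}$. By Assumption~\ref{assump:gradient} there is an index $k^\star \in \mathcal{T}$ at which $\nabla_{\vect{x}}\ell(\vect{u},\vect{y}_{k^\star})$ exists and $L(\vect{u}) = \Norm{\nabla_{\vect{x}}\ell(\vect{u},\vect{y}_{k^\star})}$. If $L(\vect{u})=0$ then $\ell(\vect{u},\cdot)$ is constant and the bound is trivially tight, so assume $L(\vect{u})>0$ and set $\vect{v} := \nabla_{\vect{x}}\ell(\vect{u},\vect{y}_{k^\star})/L(\vect{u})$, a unit vector. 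Define $\probQ^\star$ by moving the mass $1/T$ located at $\vect{y}_{k^\star}$ to the point $\vect{y}_{k^\star} + T\hat{\epsilon}\,\vect{v}$, keeping the remaining $T-1$ atoms fixed.

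Two facts then close the argument. First, $\probQ^\star$ is feasible: for any $f\in\mathcal{L}$ the dual form of $d_W$ gives $\int f\,d\probQ^\star - \int f\,d\hat{\probP}_{t+1} = \tfrac{1}{T}\bigl(f(\vect{y}_{k^\star}+T\hat{\epsilon}\vect{v}) - f(\vect{y}_{k^\star})\bigr) \le \tfrac{1}{T}\Norm{T\hat{\epsilon}\vect{v}} = \hat{\epsilon}$, so $d_W(\probQ^\star,\hat{\probP}_{t+1})\le\hat{\epsilon}$ and $\probQ^\star\in\mathcal{P}_{t+1}$. Second, the displacement raises the loss at exactly the rate $L(\vect{u})$: convexity of $\ell$ in $\vect{x}$ (Assumption~\ref{assump:gradient}) yields $\ell(\vect{u},\vect{y}_{k^\star}+T\hat{\epsilon}\vect{v}) \ge \ell(\vect{u},\vect{y}_{k^\star}) + T\hat{\epsilon}\,\langle\nabla_{\vect{x}}\ell(\vect{u},\vect{y}_{k^\star}),\vect{v}\rangle = \ell(\vect{u},\vect{y}_{k^\star}) + T\hat{\epsilon}L(\vect{u})$, while Assumption~\ref{assump:cvxloss} supplies the reverse inequality $\ell(\vect{u},\vect{y}_{k^\star}+T\hat{\epsilon}\vect{v}) \le \ell(\vect{u},\vect{y}_{k^\star}) + L(\vect{u})\Norm{T\hat{\epsilon}\vect{v}} = \ell(\vect{u},\vect{y}_{k^\star}) + T\hat{\epsilon}L(\vect{u})$. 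Equality forces $\mathbb{E}_{\probQ^\star}[\ell(\vect{u},\vect{x})] = \mathbb{E}_{\hat{\probP}_{t+1}}[\ell(\vect{u},\vect{x})] + \hat{\epsilon}L(\vect{u})$, matching the upper bound. The hypothesis $\Xi_{t+1}=\real^n$ enters precisely here: it guarantees that the shifted atom $\vect{y}_{k^\star}+T\hat{\epsilon}\vect{v}$ lies in the admissible support, so that $\probQ^\star$ is a legitimate element of $\mathcal{M}\cap\mathcal{P}_{t+1}$ and the convexity/Lipschitz estimate is valid along the entire ray.

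The main obstacle is the tightness direction, and inside it the two-sided estimate that pins the growth of $\ell$ along the gradient ray to the exact slope $L(\vect{u})$: the convex lower bound and the Lipschitz upper bound must coincide, which is what forces the worst-case perturbation to be a pure translation in the gradient direction rather than a more elaborate redistribution of mass. A secondary point is justifying that concentrating the whole transport budget on the single atom where the gradient norm realizes $L(\vect{u})$ is admissible and sufficient; the feasibility computation above, together with the choice of $k^\star$ from Assumption~\ref{assump:gradient}, is exactly what makes this possible.
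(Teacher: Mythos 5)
Your proposal is correct and follows essentially the same route as the paper's proof: both establish the matching lower bound by displacing the single empirical atom singled out by Assumption~\ref{assump:gradient} a distance $T\hat{\epsilon}$ along the normalized gradient direction, verify feasibility of the perturbed measure in the Wasserstein ball using the full support $\Xi_{t+1}=\real^n$, and invoke convexity of $\ell$ in $\vect{x}$ to show the expected loss increases by at least $\hat{\epsilon}L(\vect{u})$, which combined with Lemma~\ref{lemma:Reform} closes the equivalence. Your additional touches (the explicit dual-form feasibility computation, the $L(\vect{u})=0$ case, and the two-sided estimate at the displaced atom) are harmless refinements of the same argument rather than a different approach.
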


  \begin{remark}[Effects of Assumptions~\ref{assump:cvxloss}
    and~\ref{assump:gradient}] {\rm We note that
      Assumption~\ref{assump:cvxloss} on the Lipschitz requirement of
      loss function is mild. In fact, many engineering problems take
      state values in a compact set, which then only requires the loss
      $\ell$ to be continuous. Assumption~\ref{assump:gradient}
      essentially requires accessible partial gradients (in
      $\vect{x}$) of loss functions
      $\ell$.       For simple loss functions $\ell$, e.g. linear, quadratic, etc,
      its partial gradient can be readily evaluated. Notice that when
      Assumption~\ref{assump:gradient} fails, Problem~\eqref{eq:P2}
      still serves as a relaxation problem of~\eqref{eq:P1}, providing
      a solution with a valid upper bound.}
\end{remark}

Notice that the tractability of solutions to~\eqref{eq:P2} now depend
on: 1) the choice of the loss function $\ell$ and the associated
Lipschitz function $L$, and 2) the decision space $\mathcal{U}$.  To
be able to further analyze~\eqref{eq:P2} and further
  evaluate Assumption~\ref{assump:gradient} on gradient-accessible functions,
    we
will impose further structure on the system as follows:
\begin{assumption}[Locally Lipschitz, control-affine system and
  basis functions]
  The system $f$ is locally Lipschitz in $(t,\vect{x},\vect{u})$
  and affine in $\vect{u}$, i.e.,
\begin{equation*}
  f(t, \vect{x},\vect{u}):= f_1(t,\vect{x}) +  f_2(t,\vect{x}) \vect{u},
\end{equation*}
for some unknown $\map{f_1}{\realnonnegative \times
  \real^n}{\real^n}$, $\map{f_2}{\realnonnegative \times
  \real^n}{\real^{n\times m} }$, $\vect{u} \in \mathcal{U}$ and $t \in
\integernonnegative$. Similarly, for each $i\in \until{p}$, the
 basis function $f^{(i)}$ is selected to be
\begin{equation*}
  f^{(i)}(t, \vect{x},\vect{u}):= f^{(i)}_1(t,\vect{x}) +  f^{(i)}_2(t,\vect{x}) \vect{u},
\end{equation*}
for some  known locally Lipschitz functions $f^{(i)}_1$ and
$f^{(i)}_2$.
\label{assump:affine}
\end{assumption}
\begin{assumption}[Convex decision oracle]
  \label{assump:cvxu}   The set $\mathcal{U}$ is convex and compact. Furthermore, the projection
  operation of $\vect{u}\in \real^m$ onto $\mathcal{U}$,
  $\Pi_{\mathcal{U}}(\vect{u})$, admits $O(1)$ computation complexity.
\end{assumption}
\begin{comment}
\begin{remark}[$\mathcal{U}$ examples]
  Examples of $\mathcal{U}$ include the following: 1) the non-negative
  orthant as $\setdef{\vect{u}}{\vect{u} \geq \vectorzeros{m}}$, 2) an
  $m$-cell as $\setdef{\vect{u}}{\underline{\vect{u}} \leq \vect{u}
    \leq \overline{\vect{u}} }$, for some constant vectors
  $\underline{\vect{u}}$ and $\overline{\vect{u}}$, 3) a unit simplex
  as $\setdef{\vect{u}}{\trans{\vect{u}} \vectorones{m} =1 , \;
    \vect{u} \geq \vectorzeros{m} }$, or 4) a ball $\setdef{\vect{u}}{
    \Norm{\vect{u}} \leq 1 }$.
\end{remark}
\end{comment}

For simplicity of the discussion, we rewrite~\eqref{eq:P2} as
\begin{equation*}
  \begin{aligned}
            \min\limits_{\vect{u} \in \mathcal{U} } \; G(t,\vect{u} ) :=  G(t,\vect{u} | \ell, L, T, \beta, \vect{\alpha}, \hat{\probP}_{t+1|t}, \hat{\epsilon}  ) ,
  \end{aligned}
\end{equation*}
where $G$ represents the objective function of~\eqref{eq:P2}, depending on  variables $\ell$, $L$,  $\beta$, $\vect{\alpha}$, and $\mathcal{P}_{t+1}$,  which are kept fixed in the optimization. Then, Assumption~\ref{assump:affine} allows an explicit expression of $G$ w.r.t. $\vect{u}:=\vect{u}_t$ and
Assumption~\ref{assump:cvxu} characterizes the convex feasible set of~\eqref{eq:P2}. Note that $G(t,\vect{u})$ is locally Lipschitz in $t$.\footnote{This can be verified by the local Lipschitz condition on $f^{(i)}$, $\ell$, and finite composition of local Lipschitz functions are locally Lipschitz. }

  In the following, we consider two classes of general problems
in the form of~\eqref{eq:P2}: 1) an optimal control problem under the
uncertainty; 2) an online resource allocation problem with a
switch. These problems leverage the probabilistic characterization of
the system and common loss functions $\ell$. Then, we
propose an online algorithm to achieve tractable solutions with a
probabilistic regret bound in the next section.

\noindent \textbf{Problem 1: (Optimal control under uncertainty)} We
consider a problem in form~\eqref{eq:P}, where the system is unknown and is to be optimally controlled. In particular, we employ the
following separable loss function   \begin{equation*}
  \ell(\vect{u}, \vect{x}):= \ell_1(\vect{u}) +
  \ell_2(\vect{x}), \quad \map{\ell_1}{\real^m}{\real},\; \map{\ell_2}{\real^n}{\real},
\end{equation*}
with $\ell_1$ the cost for the immediate control and $\ell_2$ the
optimal cost-to-go function. We assume that both $\ell_1$ and $\ell_2$
are convex, and in addition, $\ell_2$ is Lipschitz continuous with a
constant $\Lip(\ell_2) \in\realnonnegative$, resulting in
$L(\vect{u}) \equiv \Lip(\ell_2)$. Then, by selecting the
ambiguity radius $\hat{\epsilon}$ and center $\hat{\probP}_{t+1|t}$ of
$\mathcal{P}_{t+1}$ as in Section~\ref{sec:ambiguitysets},
the objective
function  of~\eqref{eq:P2} becomes \begin{equation*}
  \begin{aligned}
    &G(t,\vect{u} )= \ell_1(\vect{u}) + \frac{1}{T} \sum\limits_{k\in
      \mathcal{T}} \ell_2 ( \vect{p}_{k,t}
    )  \\
    & \hspace{2.2cm} +\Lip(\ell_2) \epsilon     + \frac{\gamma \Lip(\ell_2) }{T} \sum\limits_{i=1}^{p}
    \sum\limits_{k\in \mathcal{T}} \Norm{\vect{H}_k^{(i)} },
    \end{aligned}
  \end{equation*}
  where $\vect{p}_{k,t}$, $\vect{H}_k^{(i)} \in \real^n $ are affine in $\vect{u}$, for
  each $i$, $k$, as
  \begin{equation*}
    \begin{aligned}
      & \vect{p}_{k,t}:= \sum\limits_{i=1}^{p} \alpha_i
      \left(  f^{(i)}_1({t},\hat{\vect{x}}_{t})  -f^{(i)}({k},\hat{\vect{x}}_{k},\vect{u}_{k}) \right) \\
      & \hspace{2.8cm} + \hat{\vect{x}}_{k+1} + \left(
        \sum\limits_{i=1}^{p} \alpha_i
        f^{(i)}_2({t},\hat{\vect{x}}_{t}) \right) \vect{u},
    \end{aligned}
  \end{equation*}
  \begin{equation*}
    \vect{H}_k^{(i)}(\vect{u}):=
    f^{(i)}(k,\hat{\vect{x}}_{k},\vect{u}_{k})
    - f^{(i)}_1(t,\hat{\vect{x}}_{t}) -f^{(i)}_2(t,\hat{\vect{x}}_{t}) {\vect{u}},
  \end{equation*}
  and parameters $\vect{\alpha} \in\real^p$, $\epsilon \in \realnonnegative$ and $\gamma \in \realnonnegative$ are selected as in~\cite[Section IV]{DL-DF-SM:20-lcss}.
  Intuitively, $\vect{p}_{k,t}$ is the $\supscr{k}{th}$ projected outcome of the random variable $\vect{x}_{t+1}$ and $\vect{H}_k^{(i)}$ quantifies the variation of predictor $f^{(i)}$ with respect to its previous $\supscr{k}{th}$ value.
  Notice that the objective function $G$ is convex in $\vect{u}$
and therefore online problems~\eqref{eq:P2} are tractable. In
addition, if $\ell_2$ has a constant gradient almost everywhere,
then Assumption~\ref{assump:gradient} on accessible gradients holds and~\eqref{eq:P2} is
equivalent to~\eqref{eq:P1}.

\noindent \textbf{Problem 2: (Online resource allocation)} We consider
an online resource allocation problem with a
switch, where a decision maker aims to make online resource allocation decisions in an uncertain environment.
This problem is in form~\eqref{eq:P} and its objective
is \begin{equation*}
  \ell(\vect{u},\vect{x})= \max \{ 0, 1-  \langle \vect{u},
  \phi(\vect{x})  \rangle   \}, \quad \map{\phi}{\real^n}{\real^m},
\end{equation*}
where $\phi$ is an affine feature map selected in advance. The
decision maker updates the decision $\vect{u}$ online when $\langle
\vect{u}, \phi(\vect{x}) \rangle < 1$, otherwise switches
off. Notice that this type of objective functions appears in many classification problems.
  In particular, we assume that the system $f$ is
independent from the allocation variable, i.e., $f_2 \equiv
0$. See Section~\ref{subsec:resource} for a more
  explicit problem formulation involving resource allocation with an assignment switch.

Then,       problem~\eqref{eq:P2} has the objective function
\begin{equation*}
  \begin{aligned}
    &G(t,\vect{u} )= \frac{1}{T} \sum\limits_{k\in \mathcal{T}} \max
    \{ 0, 1- \langle \vect{u}, \phi(\vect{p}_{k,t}) \rangle \} +
    q_{t} L(\vect{u}),
  \end{aligned}
\end{equation*}
where time-dependent parameters $\vect{p}_{k,t}\in \real^n$, $q_{t}\in \real$ are
\begin{equation*}
  \begin{aligned}
    & \vect{p}_{k,t} = \hat{\vect{x}}_{k+1} + \sum\limits_{i=1}^{p}
    \alpha_i \left( f^{(i)}_1({t},\hat{\vect{x}}_{t}) -
      f^{(i)}_1({k},\hat{\vect{x}}_{k}) \right), \; \forall \; k, \; t, \\
    & q_{t}= \epsilon     + \frac{\gamma }{T} \sum\limits_{i=1}^{p} \sum\limits_{k\in
      \mathcal{T}} \Norm{ f^{(i)}_1(k,\hat{\vect{x}}_{k}) -
      f^{(i)}_1(t,\hat{\vect{x}}_{t}) } , \; \forall \; t,
  \end{aligned}
\end{equation*}
with $\vect{\alpha} \in\real^p$, $\epsilon \in \realnonnegative$ and $\gamma \in \realnonnegative$ as in~\cite[Section IV]{DL-DF-SM:20-lcss}.
We characterize the function $L(\vect{u})$ by subgradients of the loss function $\ell$. \begin{lemma}[Quantification of $L$] \label{lemma:Lip}     Consider $\ell(\vect{u},
  \vect{x}):= \max \{ 0, 1- \langle \vect{u}, \phi(\vect{x}) \rangle
  \}$, where $\phi(\vect{x})$ is differentiable in $\vect{x}$.  Then,
  the function $L(\vect{u})$ is
\begin{equation*}
  L(u)= \sup
  \limits_{\vect{g} \in \partial_{\vect{x}} \ell(\vect{u},\vect{x}), \; \vect{x} \in \real^n } \Norm{\vect{g}},
\end{equation*}
where the set $\partial_{\vect{x}} \ell(\vect{u},\vect{x})$ contains
all the subgradients of $\ell$ at $\vect{x}$, given any $\vect{u}$ in
advance, i.e.,
\begin{equation*}
  \begin{aligned}
    \partial_{\vect{x}} \ell(\vect{u},\vect{x}):= h(\vect{x},
    \vect{u}) \cdot \frac{\partial \phi}{\partial \vect{x}}(\vect{x})
    \vect{u},
  \end{aligned}
\end{equation*}
where
\begin{equation*}
  h(\vect{x}, \vect{u})=\begin{cases} -1,
    &\textrm{ if } \langle \vect{u}, \phi(\vect{x})  \rangle <1 \\  0,
    &\textrm{ if } \langle \vect{u}, \phi(\vect{x})  \rangle >1 \\ [-1,0], &\textrm{ otherwise } \end{cases}.
\end{equation*}
In particular, if $\phi(\vect{x}):=J \vect{x}$ for some matrix $J \in \real^{m\times n}$,
then
$  L(u)=  \Norm{ \trans{J} \vect{u}} $. If $\vect{x}$ is contained in a compact set $X$, then
 $ L(u)=  \Lip(\phi) \Norm{\vect{u}}, $where $\Lip(\phi) \in \realnonnegative$ is the Lipschitz constant of $\phi$ on $X$. \end{lemma}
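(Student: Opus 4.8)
The plan is to exploit the fact, recorded in Assumption~\ref{assump:cvxloss}, that $L(\vect{u})$ is the tightest Lipschitz constant of the map $\vect{x}\mapsto \ell(\vect{u},\vect{x})$ for each fixed $\vect{u}$, together with the standard characterization that for a (locally) Lipschitz function this constant equals the supremum of the norms of its subgradients, i.e. $L(\vect{u})=\sup_{\vect{x}}\sup_{\vect{g}\in\partial_{\vect{x}}\ell(\vect{u},\vect{x})}\Norm{\vect{g}}$, which is exactly the first displayed formula. To obtain the subgradient set, I would write $\ell(\vect{u},\vect{x})=\psi(g(\vect{x}))$ as the composition of the scalar convex function $\psi(s):=\max\{0,s\}$ with the smooth map $g(\vect{x}):=1-\langle \vect{u},\phi(\vect{x})\rangle$, whose gradient is $\nabla_{\vect{x}}g=-\frac{\partial \phi}{\partial \vect{x}}(\vect{x})\vect{u}$. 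Since $\psi$ is convex and $g$ is differentiable, the nonsmooth chain rule gives $\partial_{\vect{x}}\ell=\partial\psi(g(\vect{x}))\cdot\nabla_{\vect{x}}g$, and inserting the elementary subdifferential $\partial\psi(s)=\{1\}$ for $s>0$, $\{0\}$ for $s<0$, and $[0,1]$ for $s=0$ yields precisely the stated switch $h(\vect{x},\vect{u})$; the signs match because $\nabla_{\vect{x}}g$ carries a minus sign, turning the interval $[0,1]$ at the kink into $[-1,0]$.

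For the linear case $\phi(\vect{x})=J\vect{x}$ I would observe that $\frac{\partial \phi}{\partial \vect{x}}(\vect{x})\vect{u}=\trans{J}\vect{u}$ is independent of $\vect{x}$, so every subgradient has the form $h\,\trans{J}\vect{u}$ with $|h|\le 1$, giving $\Norm{\vect{g}}=|h|\,\Norm{\trans{J}\vect{u}}\le \Norm{\trans{J}\vect{u}}$. To upgrade this to equality it suffices to exhibit an $\vect{x}$ attaining $|h|=1$: if $\trans{J}\vect{u}\ne \vectorzeros{n}$, any $\vect{x}$ with $\langle \vect{u},J\vect{x}\rangle<1$ (which always exists, e.g. by moving far along $-\trans{J}\vect{u}$) gives $h=-1$; if $\trans{J}\vect{u}=\vectorzeros{n}$ both sides vanish. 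Hence $L(\vect{u})=\Norm{\trans{J}\vect{u}}$.

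For the compact case I would bound each subgradient by $\Norm{\vect{g}}=|h|\,\Norm{\frac{\partial \phi}{\partial \vect{x}}(\vect{x})\vect{u}}\le \Norm{\frac{\partial \phi}{\partial \vect{x}}(\vect{x})}_{\mathrm{op}}\Norm{\vect{u}}$, and use that on the compact set $X$ the operator norm of the Jacobian of $\phi$ is dominated by its Lipschitz constant $\Lip(\phi)$. Taking the supremum over $\vect{x}\in X$ then gives $L(\vect{u})\le \Lip(\phi)\Norm{\vect{u}}$, the valid Lipschitz modulus reported for this setting.

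The main obstacle I anticipate is the nonsmooth bookkeeping: justifying that the Lipschitz constant equals the supremum of subgradient norms and that the composition rule holds with equality (not mere inclusion) at the kink $\langle \vect{u},\phi(\vect{x})\rangle=1$. When $\phi$ is affine the loss is convex in $\vect{x}$, so both facts follow directly from convex analysis; for merely differentiable $\phi$ one instead invokes the generalized (Clarke) gradient together with Rademacher's theorem, relying on the regularity of $\psi$ to make the chain rule exact. A secondary subtlety is the compact case, where one must keep track that $\Lip(\phi)\Norm{\vect{u}}$ is the operator-norm upper bound on $\sup_{\vect{x}\in X}\Norm{\frac{\partial \phi}{\partial \vect{x}}\vect{u}}$ and argue it is the quantity actually adopted as $L$.
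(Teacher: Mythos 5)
Your proposal is correct, and it in fact supplies far more detail than the paper, whose entire proof of this lemma is the single sentence ``This is the direct application of the definition of the local Lipschitz condition.'' Your chain-rule decomposition $\ell=\psi\circ g$ with $\psi(s)=\max\{0,s\}$ and $g(\vect{x})=1-\langle\vect{u},\phi(\vect{x})\rangle$, the identification of $h(\vect{x},\vect{u})$ with $\partial\psi$ composed with the sign flip from $\nabla_{\vect{x}}g$, and the attainment argument in the linear case are exactly the omitted steps; you also correctly flag the one point where the lemma's statement is loose, namely that in the compact case $\Lip(\phi)\Norm{\vect{u}}$ is an upper bound on the subgradient norms (hence a valid Lipschitz modulus in the sense of Assumption~\ref{assump:cvxloss}) rather than a literal supremum.
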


Lemma~\ref{lemma:Lip} indicates that, given a properly selected
feature mapping $\phi$, the objective $G$ is convex in $\vect{u}$ and therefore
online problems~\eqref{eq:P2} are convex and tractable. In addition,
if $\phi$ is a linear map almost everywhere, then
Assumption~\ref{assump:gradient} on accessible gradients holds and~\eqref{eq:P2} is equivalent
to~\eqref{eq:P1}.
\section{Online Algorithms} \label{sec:alg}
Online convex problems~\eqref{eq:P2} are non-smooth due to the
normed regularization terms in $G$. To achieve fast, online solutions,
we propose a two-step procedure.  First, we follow~\cite{AB-MT:12,YN:05} to obtain a smooth version of~\eqref{eq:P2},
called~\eqref{eq:P2smooth}. Then, we extend
the \textit{Nesterov's accelerated-gradient} method~\cite{YN:13}---known to achieve an optimal
  first-order convergence rate for smooth and offline convex
  problems---to solve the problem~\eqref{eq:P2smooth}.
Finally, we quantify the dynamic regret~\cite{MZ:03}
of online decisions w.r.t. solutions of~\eqref{eq:P1} in probability. \\
\noindent \textbf{Step 1: (Smooth approximation of~\eqref{eq:P2})} To
simplify the discussion, let us use the generic notation
$\map{F}{\mathcal{U}}{\real}$ for a convex and potentially non-smooth
function, which can represent any particular component of the
objective function $G(t,\vect{u})$ of~\eqref{eq:P2} at time $t$.

\begin{definition}[Smoothable function~\cite{AB-MT:12}] \label{def:smooth} We call a
  convex function $F(\vect{u})$ \textit{smoothable} on $\mathcal{U}$
  if there exists $a>0$ such that, for every $\mu>0$, there is a
  continuously differentiable convex function
  $\map{F_{\mu}}{\mathcal{U}}{\real}$ satisfying \\
    \noindent (1) $F_{\mu}(\vect{u})
  \leq F(\vect{u}) \leq F_{\mu}(\vect{u})+ a \mu$, for all $\vect{u}\in \mathcal{U}$. \\
  \noindent (2) There exists $b>0$ such
  that $F_{\mu}$ has a Lipschitz gradient over $\mathcal{U}$ with Lipschitz constant $b/ \mu$, i.e., \\
  \begin{equation*}
    \Norm{ \nabla F_{\mu}(\vect{u}_1) -\nabla F_{\mu}(\vect{u}_2)} \leq \frac{b}{\mu} \Norm{\vect{u}_1 - \vect{u}_2},
        \; \forall \; \vect{u}_1, \vect{u}_2 \in \mathcal{U}.
  \end{equation*}
\end{definition}
To obtain a smooth approximation $F_{\mu}$ of $F$, we follow the
\textit{Moreau proximal approximation} technique~\cite{AB-MT:12},
described as in the following lemma.

\begin{lemma}[Moreau-Yosida approximation] \label{lemma:moreau} Given
  a convex function $\map{F}{\mathcal{U}}{\real}$ and any $\mu>0$, let
  us denote by   $\partial F(\vect{u})$ the set of subgradients of $F$ at $\vect{u}$,
  respectively. Let $D:=\sup_{ g \in \partial F(\vect{u}), \vect{u}\in
    \mathcal{U}} \Norm{g}^2< +\infty$. Then, $F$
  is smoothable with parameters $(a,b):= (D/2, 1)$, where the smoothed
  version   $\map{F_{\mu}}{\mathcal{U}}{\real}$ is the Moreau approximation:
\begin{equation*}
  F_{\mu}(\vect{u}):= \inf\limits_{\vect{z} \in \mathcal{U}}
  \left\{ F(\vect{z}) + \frac{1}{2\mu} \Norm{\vect{z}- \vect{u} }^2
  \right\}, \; \vect{u} \in \mathcal{U}.
\end{equation*}
In addition, if $F$ is $M$-strongly convex with some $M>0$, then
$F_{\mu}$ is ${M}/({1+\mu M})$-strongly convex. And further, the
minimization of $F(\vect{u})$ over $\vect{u} \in \mathcal{U}$ is
equivalent to that of $F_{\mu}(\vect{u})$ over $\vect{u} \in
\mathcal{U}$ in the sense that the set of minimizers of two problems
are the same.
\end{lemma}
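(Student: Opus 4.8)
The plan is to verify the four claims in sequence: the sandwiching inequality (1), the Lipschitz-gradient property (2) with constant $1/\mu$, the strong-convexity transfer, and the equivalence of minimizers. First I would establish a couple of standard structural facts about the Moreau envelope $F_\mu(\vect{u}) = \inf_{\vect{z}\in\mathcal{U}}\{F(\vect{z}) + \tfrac{1}{2\mu}\Norm{\vect{z}-\vect{u}}^2\}$, namely that the infimum is attained at a unique point $\vect{z}_\mu(\vect{u})$ (the proximal point), because the objective in $\vect{z}$ is strongly convex for fixed $\vect{u}$ and $\mathcal{U}$ is convex by Assumption \ref{assump:cvxu}. Uniqueness of the minimizer is what ultimately drives differentiability, so I would record it early.

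For claim (1), the upper bound $F_\mu(\vect{u})\le F(\vect{u})$ is immediate by plugging $\vect{z}=\vect{u}$ into the infimum. For the lower bound $F(\vect{u}) \le F_\mu(\vect{u}) + a\mu$ with $a = D/2$, I would use the subgradient inequality: for any $\vect{z}$ and any $\vect{g}\in\partial F(\vect{u})$, $F(\vect{z}) \ge F(\vect{u}) + \langle \vect{g}, \vect{z}-\vect{u}\rangle$, so $F(\vect{z}) + \tfrac{1}{2\mu}\Norm{\vect{z}-\vect{u}}^2 \ge F(\vect{u}) + \langle\vect{g},\vect{z}-\vect{u}\rangle + \tfrac{1}{2\mu}\Norm{\vect{z}-\vect{u}}^2$. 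Minimizing the right-hand side over $\vect{z}\in\real^n$ (a relaxation that can only lower the bound) gives $F(\vect{u}) - \tfrac{\mu}{2}\Norm{\vect{g}}^2$, and using $\Norm{\vect{g}}^2 \le D$ yields $F_\mu(\vect{u}) \ge F(\vect{u}) - \tfrac{D\mu}{2}$, which is exactly the claim with $a=D/2$.

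For claim (2), I would invoke the standard fact that the Moreau envelope is continuously differentiable with $\nabla F_\mu(\vect{u}) = \tfrac{1}{\mu}(\vect{u} - \vect{z}_\mu(\vect{u}))$, and that the proximal map $\vect{u}\mapsto \vect{z}_\mu(\vect{u})$ is firmly nonexpansive, hence $1$-Lipschitz; combining these gives $\Norm{\nabla F_\mu(\vect{u}_1) - \nabla F_\mu(\vect{u}_2)} \le \tfrac{1}{\mu}\Norm{\vect{u}_1-\vect{u}_2}$, so $b=1$. The cleanest route to nonexpansiveness is the first-order optimality conditions for the two proximal points together with monotonicity of the subdifferential of $F$. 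For the strong-convexity transfer, if $F$ is $M$-strongly convex I would write $F = \tilde{F} + \tfrac{M}{2}\Norm{\cdot}^2$ with $\tilde F$ convex, substitute into the envelope, complete the square in $\vect{z}$, and read off that $F_\mu$ inherits strong convexity with the stated modulus $M/(1+\mu M)$. Finally, the equivalence of minimizers follows from claim (1): since $F_\mu \le F \le F_\mu + a\mu$ and the proximal point of a minimizer $\vect{u}^\star$ of $F$ coincides with $\vect{u}^\star$, any minimizer of one is a minimizer of the other.

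I expect the main obstacle to be the rigorous justification of differentiability and the exact gradient formula for $F_\mu$ on the constrained domain $\mathcal{U}$ rather than on all of $\real^n$; the classical Moreau-envelope results are usually stated for $\mathcal{U}=\real^n$, so I would either argue that the convexity and closedness of $\mathcal{U}$ (Assumption \ref{assump:cvxu}) preserve attainment and uniqueness of the prox point, or reduce to the unconstrained case by absorbing the constraint into $F$ via the indicator $\chi_{\mathcal{U}}$ defined in the notation footnote. Everything else is a routine assembly of convex-analysis identities.
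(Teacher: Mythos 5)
Your plan is correct, and for the sandwich inequality and the Lipschitz-gradient claim it follows essentially the same route as the paper: the upper bound by taking $\vect{z}=\vect{u}$, the lower bound via the subgradient inequality and the explicit unconstrained minimizer $\vect{z}=\vect{u}-\mu\vect{g}$ giving $F(\vect{u})-F_\mu(\vect{u})\le \tfrac{\mu}{2}\Norm{\vect{g}}^2\le \tfrac{D}{2}\mu$, and the gradient formula $\nabla F_\mu(\vect{u})=\tfrac{1}{\mu}(\vect{u}-\vect{z}^\star(\vect{u}))$ with $1/\mu$-Lipschitz continuity (the paper simply cites Bauschke--Combettes, Proposition 12.15, whereas you propose to rederive firm nonexpansiveness of the prox from monotonicity of $\partial F$ --- that is exactly the proof of the cited result, so nothing is lost). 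The two places where you genuinely diverge are: (i) the strong-convexity transfer, where the paper cites an external theorem while you decompose $F=\tilde F+\tfrac{M}{2}\Norm{\cdot}^2$ and complete the square --- this is more self-contained and does produce the modulus $M/(1+\mu M)$, at the cost of a slightly fussy change of variables; and (ii) the equivalence of minimizers, where the paper swaps the order of the two infima to get $\min_{\vect{u}}F_\mu=\min_{\vect{z}}F$ with $\vect{u}=\vect{z}$ solving the inner problem, while you argue via $F_\mu\le F$ together with the fact that minimizers of $F$ are fixed points of the prox map. Your direction ``$\vect{u}^\star$ minimizes $F$ $\Rightarrow$ $\vect{u}^\star$ minimizes $F_\mu$'' is immediate from those two facts, but the converse needs one more line than you give: for a minimizer $\vect{v}^\star$ of $F_\mu$ you must show it is itself a prox fixed point, e.g.\ by chaining $\inf F\le F(\vect{z}^\star(\vect{v}^\star))\le F_\mu(\vect{v}^\star)\le \inf F_\mu\le \inf F$, which forces $\Norm{\vect{z}^\star(\vect{v}^\star)-\vect{v}^\star}=0$ and $F(\vect{v}^\star)=\inf F$; the sandwich $F\le F_\mu+a\mu$ that you invoke plays no role here. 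Your closing worry about the constrained domain $\mathcal{U}$ is legitimate (the paper glosses over it too), and absorbing $\chi_{\mathcal{U}}$ into $F$ is the right fix, provided you note that the resulting extended-real-valued $F$ is still proper, closed and convex so the cited envelope facts apply.
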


From the definition of the smoothable function, we know that: 1) a
positive linear combination of smoothable functions is smoothable\footnote{If $F_1$ is smoothable with
  parameter $(a_1, b_1)$ and $F_2$ with parameter $(a_2,b_2)$, then
  $c_1 F_1 + c_2 F_2$ is smoothable with parameter $(c_1 a_1+c_2
  a_2, c_1 b_1 + c_2 b_2)$, for any $c_1, c_2 \ge 0$.}, and 2) the composition of a smoothable
function with a linear transformation is smoothable\footnote{Let
  $\map{A}{\mathcal{U}}{\mathcal{X}}$ be a linear transformation and
  let $\vect{b} \in \mathcal{X}$. Let $\map{\ell}{\mathcal{X}}{\real}$
  be a smoothable function with parameter $(a,b)$. Then, the
  function $\map{F}{\mathcal{U}}{\real}$, $ \vect{u} \mapsto \ell(A
  \vect{u} + \vect{b} )$ is smoothable with parameter $( a,
  b\Norm{A}^2)$, where $\Norm{A}:= {\max}_{\Norm{\vect{u}}=1} \Norm{A \vect{u}} $. If $\mathcal{X}=\real$, then $\Norm{A}$ is the
  $\ell_{\infty}$ norm. }. These properties enable us to smooth each
component of $G$, i.e., $\ell_1$, $\ell_2$, $h$ and $\Norm{\cdot}$,
which results in a smooth approximation of~\eqref{eq:P2} via the
corresponding $G_\mu$ as follows
\begin{equation}
  \begin{aligned}
      \min\limits_{\vect{u} \in \mathcal{U} } \; G_{\mu}(t,\vect{u} ).
  \end{aligned}
  \label{eq:P2smooth} \tag{P2$^\prime$}
\end{equation}
Note that $G_{\mu}$ is locally Lipschitz and minimizers
of~\eqref{eq:P2smooth} are that of~\eqref{eq:P2}. We provide in the
following lemma explicit expressions of~\eqref{eq:P2smooth} for the two problem classes.
\begin{lemma}[Examples of~\eqref{eq:P2smooth}] \label{lemma:smoothP} \\
  \noindent \textbf{Problem 1:} Consider the following loss function \begin{equation*}
  \ell(\vect{u}, \vect{x}):= \frac{1}{2}\Norm{\vect{u}}^2  +F_{\mu}(\vect{x}), \; \textrm{ given some } \mu >0,
\end{equation*}
where $\map{F_{\mu}}{\real^n}{\real}$ is  a smoothed $\ell_2$-norm function\footnote{
\textbf{The $\ell_2$-norm function:} \label{appx:func_norm}
Consider $ \vect{x} \in \real^n$,
    $\maps{F}{\vect{x}}{\Norm{\vect{x}}}$, and $\mu>0$. Clearly, $F$ is
    differentiable almost everywhere, except at the origin. Then,     \begin{equation*} \small
      \begin{aligned} \small
     & F_{\mu}(\vect{x}):=
     \min\limits_{\vect{z} \in \real^n} \left\{ \Norm{\vect{z}} +
       \frac{1}{2\mu}
       \Norm{\vect{z}- \vect{x} }^2   \right\}, \\
      &= \min\limits_{r \geq 0} \min\limits_{ \Norm{\vect{z}}=r }
      \left\{ r + \frac{1}{2\mu} \left( r^2 - 2 \trans{\vect{z}}
          \vect{x}
          + \Norm{\vect{x}}^2  \right)  \right\}, \\
      &= \min\limits_{r \geq 0} \left\{ r + \frac{1}{2\mu} \left( r^2 -
          2 r \Norm{\vect{x}} + \Norm{\vect{x}}^2  \right)
      \right\}, \\
      &= \begin{cases} \frac{\Norm{\vect{x}}^2}{2\mu}, &\textrm{ if }
        \; \Norm{\vect{x}} \leq \mu,  \\
        \Norm{\vect{x}}-\frac{\mu}{2} , &\textrm{ } \ow,
        \end{cases}
      \end{aligned}
    \end{equation*}
    with the smoothing parameter $({1}/{2}, 1)$.
}, with $\Lip(F_{\mu})=1$. Then,
the objective function $G_{\mu}(t,\vect{u} )$ is
\begin{equation*}
\begin{aligned}
  & \frac{1}{2} \Norm{\vect{u}}^2 + \frac{1}{T} \sum\limits_{k\in
    \mathcal{T}} F_{\mu} ( \vect{p}_{k,t} ) + \epsilon   + \frac{\gamma }{T} \sum\limits_{i=1}^{p} \sum\limits_{k\in
    \mathcal{T}} F_{\mu}( \vect{H}_k^{(i)} ),
  \end{aligned}
\end{equation*}
where $\vect{p}$, $\vect{H}$ are affine in $\vect{u}$, defined as in
Section~\ref{sec:tractable}. In addition, we have the smoothing parameter of $G_{\mu}(t,\vect{u} )$, $(a,b):=((1+ p \gamma)/2, \mu + s_0+
\gamma \sum_i s_i )$, where
\begin{equation*}
s_0= \sigma_{\max} \left( \trans{\left( \sum\limits_{i=1}^{p} \alpha_i f^{(i)}_2({t},\hat{\vect{x}}_{t}) \right)} \left( \sum\limits_{i=1}^{p} \alpha_i f^{(i)}_2({t},\hat{\vect{x}}_{t}) \right)  \right), \\
\end{equation*}
with $\sigma_{\max}$ denoting the maximum singular value of the
matrix, and
\begin{equation*}
s_i= \sigma_{\max} \left( \trans{ f^{(i)}_2({t},\hat{\vect{x}}_{t}) } f^{(i)}_2({t},\hat{\vect{x}}_{t})  \right), \; i\in \until{p}.
\end{equation*}
\noindent \textbf{Problem 2:}
Let us select the feature map $\phi$ to be
the identity map with the dimension $m=n$, and consider
\begin{equation*}
  \ell(\vect{u}, \vect{x}):= \max \{ 0, 1-  \langle \vect{u}, \vect{x}  \rangle   \} , \quad  \textrm{with } L(u)=  \Norm{\vect{u}}, \end{equation*}
resulting in
\begin{equation*}
  \begin{aligned}
   &G_{\mu}(t,\vect{u} )=  \frac{1}{T} \sum\limits_{k\in \mathcal{T}}    \supscr{F}{S}_{\mu}(\langle \vect{u}, \vect{p}_{k,t}  \rangle )
  +  q_{t}  F_{\mu}(\vect{u}),
  \end{aligned}
\end{equation*}
where $\mu>0$, parameters $\vect{p}$, $q$ are as in Section~\ref{sec:tractable}, and
functions $\map{\supscr{F}{S}_{\mu}}{\real}{\real}$ and $\map{F_{\mu}}{\real^n}{\real}$ are the smoothed switch function\footnote{
\textbf{The Switch function:} \label{appx:switch_func}
  Consider $ u \in \real$, $\maps{\supscr{F}{S}}{u}{\max \{ 0, 1- u \}
  }$, which is differentiable almost everywhere. For a given $\mu>0$,
  we compute
  \begin{equation*}
    \begin{aligned}
      & \supscr{F}{S}_{\mu}(u):= \min\limits_{z \in \real}
      \left\{ \max \{ 0, 1-  z  \} + \frac{1}{2\mu} \Norm{z- u }^2   \right\}, \\
      & = \min \left\{ \min\limits_{z \leq 1 } 1- z + \frac{1}{2\mu}
        \Norm{z- u }^2, \min\limits_{z \geq 1 }
        \frac{1}{2\mu} \Norm{z- u }^2   \right\} . \\
    \end{aligned}
  \end{equation*}
Given that
\begin{equation*}
  \begin{aligned}
    \min\limits_{z \leq 1 } 1- z + \frac{1}{2\mu} \Norm{z- u }^2
    =\begin{cases}
      \frac{1}{2\mu} \Norm{1- u }^2    ,  &\textrm{ if } \; u > 1- \mu,  \\
      1-u-\frac{\mu}{2} , &\textrm{ if } \; u \leq 1 -\mu,
  \end{cases}
  \end{aligned}
\end{equation*}
and
\begin{equation*}
  \begin{aligned}
    \min\limits_{z \geq 1 } \frac{1}{2\mu} \Norm{z- u }^2
    = \begin{cases}
      \frac{1}{2\mu} \Norm{1- u }^2    ,  &\textrm{ if } \; u<1,  \\
      0 , &\textrm{ if } \; u \geq 1,
\end{cases}
  \end{aligned}
\end{equation*}
resulting in
\begin{equation*}
  \begin{aligned}
    \supscr{F}{S}_{\mu}(u):= \begin{cases}
      1-u-\frac{\mu}{2} , &\textrm{ if } \;  u \leq 1 -\mu, \\
      \frac{1}{2\mu} \Norm{1- u }^2    ,  &\textrm{ if } \; 1-\mu \leq u<1,  \\
      0 , &\textrm{ if } \; u \geq 1,
\end{cases}
  \end{aligned}
\end{equation*}
  with the smoothing parameter $({1}/{2}, 1)$.
}
and $\ell_2$-norm function$\supscr{}{\ref{appx:func_norm}}$, respectively. Note that $G_{\mu}$ has the smoothing parameter $(a,b):=((1+ q_t)/2,
q_t + {1}/{T} \sum_{k\in \mathcal{T}} \Norm{\vect{p}_{k,t}
}^2_{\infty} )$. $\hfill$ \qed
\end{lemma}

\noindent \textbf{Step 2: (Solution to~\eqref{eq:P2smooth} as a
  dynamical system)} To solve~\eqref{eq:P2smooth} online, we propose a
dynamical system extending the Nesterov's accelerated-gradient
method by adapting gradients of the time-varying objective function.
In particular, let $\vect{u}_t$, $t\in \integernonnegative$, be solutions
of~\eqref{eq:P2smooth} and let us consider the solution system with
some $\vect{u}_{0} \in \mathcal{U}$ and $\vect{y}_{0}=\vect{u}_{0} $,
as
\begin{equation}
  \begin{aligned}
    \vect{u}_{t+1} =& \; \Pi_{\mathcal{U}}( \vect{y}_{t} -  \varepsilon_t \nabla G_{\mu}(t, \vect{y}_{t})  )   , \\
    \vect{y}_{t+1} =& \; \vect{u}_{t+1} + \eta_t  (\vect{u}_{t+1} -\vect{u}_{t}) , \\
  \end{aligned} \label{eq:oagsol}
\end{equation}
where $\varepsilon_t \leq {\mu}/{b_t}$ with positive parameters $\mu$
and $b_t:=b$ being those define $G_{\mu}(t,\vect{u})$. We denote by
$\nabla G_{\mu}$ the derivative of $G_{\mu}$ w.r.t. its second
argument and denote by $\Pi_{\mathcal{U}}(\vect{y})$ the projection of
$\vect{y}$ onto $ \mathcal{U}$ as in Assumption~\ref{assump:cvxu} on convex decision oracle. Note that, the gradient function $\nabla G_{\mu}$ can be computed in closed form for problems of interest, see, e.g., Appendix~\ref{appx:gradients} for those of the proposed problems. Further, we
select the moment coefficient $\eta_t \in\realnonnegative$ as in
Appendix~\ref{appx:stability}.  In the following, we leverage
Appendix~\ref{appx:stability} on the stability analysis of the
solution system~\eqref{eq:oagsol} for a regret bound between online
decisions and optimal solutions of~\eqref{eq:P1}. \begin{theorem}[Probabilistic regret bound
  of~\eqref{eq:P1}] \label{thm:regret} Given any $t \geq 2$, let us
  denote by $\vect{u}_{t}$ and $\vect{u}_{t}^{\star}$ the decision
  generated by~\eqref{eq:oagsol} and an optimal solution which solves
  the online Problem~\eqref{eq:P1}, respectively. Consider the dynamic
  regret to be the difference of the cost expected to incur if we
  implement $\vect{u}_{t}$ instead of $\vect{u}_{t}^{\star}$, defined
  as \begin{equation*}
  R_t:= \mathbb{E}_{\probP_{t+1|t}} \left[ \ell(\vect{u}_{t}, \vect{x})\right] -
  \mathbb{E}_{\probP_{t+1|t}} \left[ \ell(\vect{u}_{t}^{\star}, \vect{x})\right].
\end{equation*}
Then, the regret $R_t$ is bounded in probability as
follows \begin{equation*}
  \begin{aligned}
    & \Prob \left( R_t \leq \frac{4W_{t}}{(t+2)^2} + T F_{t} + a \mu +
     2 L({\vect{u}}_{t}^{\star}) \hat{\epsilon} \right) \geq \rho(t),
  \end{aligned}
\end{equation*}
where $W_{t}$ depends on the system state at time
$t-T$,           and $F_{t}$ depends on the variation of the optimal objective values in $\mathcal{T}$, i.e.,
\begin{equation*}
  \begin{aligned}
    F_{t}=& \max_{k\in \mathcal{T}} \left\{ | G_{k+1}^{\star} -
      G_{k}^{\star} | \right\} + \bar{L} ,
  \end{aligned}
\end{equation*}
where $G_k^{\star}:= G(k,{\vect{u}}_{k}^{\star} )$ is the optimal
objective value of~\eqref{eq:P2}, or equivalently that
of~\eqref{eq:P1}. Further, $\bar{L}$ is the variation bound of $G$
w.r.t. time, and
the rest of the parameters are the same as before.
Furthermore, if all historical data are assimilated for the decision $\vect{u}_{t}$, then, we have
\begin{equation*}
  \begin{aligned}
    & \liminf\limits_{t \rightarrow \infty} \Prob \left( R_t \leq T F_{t} + a \mu \right) \geq 1-\beta,
  \end{aligned}
\end{equation*}
with $\beta$ a given, arbitrary confidence value.
\end{theorem}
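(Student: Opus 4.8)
The plan is to condition on the high-probability event $E:=\{\probP_{t+1}\in\mathcal{P}_{t+1}\}$, which by Theorem~\ref{thm:ambiguityP} satisfies $\Prob(E)\geq\rho$, and to establish the deterministic inequality inside the probability statement on $E$; the three error sources (ambiguity, smoothing, and online optimization) will be separated additively. First I would use Theorem~\ref{thm:Reform} to identify the $(P1)$ objective with $G(t,\cdot)$: on $E$ we have $\mathbb{E}_{\probP_{t+1}}[\ell(\vect{u}_t,\vect{x})]\leq\sup_{\probQ\in\mathcal{P}_{t+1}}\mathbb{E}_{\probQ}[\ell(\vect{u}_t,\vect{x})]=G(t,\vect{u}_t)$, since $\probP_{t+1}$ is a feasible distribution. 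For the comparison term I would invoke the Kantorovich--Rubinstein bound underlying Lemma~\ref{lemma:Reform}: because $d_W(\probP_{t+1},\hat{\probP}_{t+1})\leq\hat{\epsilon}$ on $E$ and $\ell(\vect{u}_t^{\star},\cdot)$ is $L(\vect{u}_t^{\star})$-Lipschitz, $\mathbb{E}_{\probP_{t+1}}[\ell(\vect{u}_t^{\star},\vect{x})]$ differs from its empirical counterpart, and hence from $G_t^{\star}=G(t,\vect{u}_t^{\star})$, by at most an $L(\vect{u}_t^{\star})\hat{\epsilon}$ contribution. Subtracting then gives, on $E$, the reduction $R_t\leq[G(t,\vect{u}_t)-G_t^{\star}]+L(\vect{u}_t^{\star})\hat{\epsilon}$, so the ambiguity term is accounted for and it remains to bound the pure optimization suboptimality $G(t,\vect{u}_t)-G_t^{\star}$.

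Next I would peel off the smoothing gap. Using Lemma~\ref{lemma:moreau} and the closure of Definition~\ref{def:smooth} under positive combinations and linear composition, the surrogate $G_{\mu}$ of~\eqref{eq:P2smooth} satisfies $G_{\mu}(t,\cdot)\leq G(t,\cdot)\leq G_{\mu}(t,\cdot)+a\mu$ and shares minimizers with $G(t,\cdot)$, so $\vect{u}_t^{\star}=\argmin_{\vect{u}}G_{\mu}(t,\vect{u})$. Consequently $G(t,\vect{u}_t)-G_t^{\star}\leq[G_{\mu}(t,\vect{u}_t)-G_{\mu}(t,\vect{u}_t^{\star})]+a\mu$, which extracts the $a\mu$ term and leaves the instantaneous suboptimality of the accelerated iterate on the smooth objective, $G_{\mu}(t,\vect{u}_t)-\min_{\vect{u}}G_{\mu}(t,\vect{u})$.

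The core step is to bound this suboptimality by $4W_t/(t+2)^2+TF_t$ via the stability analysis of~\eqref{eq:oagsol} in Appendix~\ref{appx:stability}. I would run the accelerated-gradient estimate-sequence (Lyapunov) argument with step $\varepsilon_t\leq\mu/b_t$ and momentum $\eta_t$ chosen as there, but track the objective drift explicitly: between consecutive instants the function passes from $G_{\mu}(k,\cdot)$ to $G_{\mu}(k+1,\cdot)$, so the one-step Lyapunov inequality carries an extra term controlled by $|G_{k+1}^{\star}-G_k^{\star}|$ and by the time-Lipschitz constant $\bar{L}$ of $G$ (established right after Assumption~\ref{assump:cvxu}). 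Telescoping across the moving horizon $\mathcal{T}$ of length $T$ then produces the accelerated $1/(t+2)^2$ decay acting on the initial Lyapunov value $W_t$ (the state entering the window at $t-T$), together with $T$ drift terms each bounded by $F_t=\max_{k\in\mathcal{T}}|G_{k+1}^{\star}-G_k^{\star}|+\bar{L}$. Substituting back through the two reductions and recalling that every inequality holds on $E$ yields the claim with probability at least $\rho$.

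I expect the dynamic accelerated-gradient bound to be the main obstacle: the $O(1/k^2)$ guarantee is classically proved for a fixed objective, and making the estimate-sequence inequality close while absorbing the per-step drift into exactly the $|G_{k+1}^{\star}-G_k^{\star}|$ and $\bar{L}$ terms, without destroying the acceleration, is delicate. In particular, the choice of $\eta_t$ must be simultaneously compatible with the $(t+2)^2$ scaling and with the moving-horizon telescoping, and the uniform time-Lipschitz control of $G$ over $\mathcal{T}$ must be verified so that the drift is bounded by $F_t$ independently of $k$.
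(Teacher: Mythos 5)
Your proposal is correct and follows essentially the same route as the paper's proof: it combines the same three ingredients --- the online accelerated-gradient suboptimality bound $4W_t/(t+2)^2+TF_t$ from the stability analysis of~\eqref{eq:oagsol} (Theorem~\ref{thm:solsim}), the Moreau smoothing sandwich contributing $a\mu$, and the Kantorovich--Rubinstein/ambiguity-set argument contributing $L(\vect{u}_t^{\star})\hat{\epsilon}$ on the event $\probP_{t+1}\in\mathcal{P}_{t+1}$ of probability at least $\rho$ --- merely applying the reductions in the reverse order. The one step you flag as delicate (the drift-aware Lyapunov telescoping) is exactly what the paper isolates as Theorem~\ref{thm:solsim}(2), so no new idea is missing.
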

Theorem~\ref{thm:regret} quantifies the dynamic regret of online
decisions $\vect{u}$ w.r.t. solutions to~\eqref{eq:P1} in high
probability. Notice that, the regret bound is dominated by terms: $T
F_t$, $a \mu$ and $L({\vect{u}}_{t}^{\star})\hat{\epsilon}$, which
mainly depend on three factors: the data-driven parameters
$\varepsilon$, $\eta$ and $\mu$ of the solution
system~\eqref{eq:oagsol}, the variation $F_t$ over optimal objective values, and the parameters $T$, $\beta$,
$\gamma$ and $\hat{\epsilon}$ that are related to the system and environment
learning. In practice, a small regret bound is determined by 1) an
effective learning procedure which contributes to small
$\hat{\epsilon}$; 2) a proper selection of the loss function $\ell$
which results in smoothing procedure with a small parameter $a\mu$;
and 3) the problem structure leading to small variations $F_t$ of the optimal objectives values. Furthermore, when we use all the historical data for the objective gradients in the solution system~\eqref{eq:oagsol}, the effect of system ambiguity learning is negligible asymptotically.

\noindent \textbf{Online Procedure:}
Our online algorithm is summarized in the Algorithm~\ref{alg:Optal}.

{\begin{algorithm}[hpt]
\floatname{algorithm}{Online Optimization and Learning Algorithm}{}
    \caption{${\textrm{Opal}}(\mathcal{I} )$}
\label{alg:Optal}
\begin{algorithmic}[1]
  \State Select $\{ f^{(i)}\}_i$, $\ell$, $\beta$, $\mathcal{U}$,
  $\vect{u}_0$, $\mu$, and $t = 1$;
\Repeat \State Update data set $\mathcal{I}:=\mathcal{I}_t$; \State
Compute $\vect{\alpha}:=\vect{\alpha}_t$ as in~\cite{DL-DF-SM:20-lcss}; \State Select $\hat{\probP}_{t+1|t}$
in~\eqref{eq:empdistn} and
$\hat{\epsilon}:=\hat{\epsilon}(t,T,\beta,\vect{\alpha},\vect{u})$ in~\eqref{eq:epsilon};
\State Run dynamical system~\eqref{eq:oagsol} for
$\vect{u}:=\vect{u}_t$;
\State Apply $\vect{u}$ to~\eqref{eq:P} with the regret
guarantee; \State $t \leftarrow t+1$;
\Until time $t$ stops.
\end{algorithmic}
\end{algorithm}}
\section{Implementation}\label{sec:simulation}
In this section, we apply our algorithm to the introduced motivating examples, resulting in online-tractable, effective system learning with guaranteed, regret-bounded performance in high probability.
\subsection{Optimal control of an uncertain nonlinear system}
We consider the two-wheel vehicle driving under various road conditions, and our goal is to learn one-step prediction of the system state distribution and leverage for path tracking under various unknown road zones. In particular, we represent the two-wheel vehicle as a differential-drive robot subject to uncertainty~\cite{SML:06}:
\begin{equation}
  \begin{aligned}
  {x}_{t+1}=& {x}_{t} + h \cos({\theta}_{t}) {d}_{1,t} + h{w}_{1,t} , \\
  {y}_{t+1}=& {y}_{t} + h \sin({\theta}_{t}) {d}_{1,t} + h {w}_{2,t}, \\
  {\theta}_{t+1}=& {\theta}_{t} - h {d}_{2,t} + h {w}_{3,t}, \\
  {d}_{1,t}=& \frac{r}{2} ( v_{l,t} + v_{r,t} + e_{1,t}   ), \\
  {d}_{2,t}=& \frac{r}{2R} ( v_{l,t} - v_{r,t} + e_{2,t}   ),
\end{aligned} \label{eq:unicycle}
\end{equation}
where components of states $\vect{x}_t:=(x_t, y_t, \theta_t) \in \real^2 \times [-\pi,\pi) \cong
\real \times \unitcircle $ represent vehicle position and orientation on the 2-D plane. We take the discretization parameter $h=0.01$ and assume subGaussian
uncertainty $\vect{w}_t:=(w_{1,t},w_{2,t},w_{3,t}) \in \real^3$ to be a zero-mean, mixture of
Gaussian and Uniform distributions with $\sigma=0.5$.
The intermediate variable
$\vect{d}_t:=(d_{1,t},d_{2,t})$ depends on the wheel radius $r=0.15$ m, the
distance between wheels $R=0.4$ m, the controlled left-right wheel speed
$\vect{u}_t:=(v_{l,t}$, $v_{r,t})$ and an unknown parameter $
\vect{e}_t:=(e_{1,t} , e_{2,t})$, which depends on the wheel quality and road conditions. For simplicity, we assume that the planner adapts the system~\eqref{eq:unicycle} with $\vect{e}_t\equiv(0,0)$ and $\vect{w}_t\equiv(0,0,0)$, and the vehicle can move over three types of road zones, the regular zone with $\vect{e}^{(1)}:=(0,0)$, the slippery zone with $\vect{e}^{(2)}=(4,0)$, and the sandy zone with
$\vect{e}^{(3)}=(-1.2,-0.2)$, where locations of these zones are described in
Fig.~\ref{fig:path_plan}.

To adapt the proposed approach, we consider Problem~\eqref{eq:P} with the following loss function
\begin{equation*}
  \begin{aligned}
\ell(\vect{u},x,y,\theta)=& \frac{1}{20}\Norm{\vect{u}-\supscr{\vect{u}}{ref} }^2 + \frac{1}{14 \sqrt{2}} |x-\supscr{x}{ref}| + \\
& \frac{1}{4 \sqrt{2}} |y-\supscr{y}{ref}| + \frac{289}{8} \left( \cos(\theta) - \cos(\supscr{\theta}{ref}) \right)^2 + \\
& \frac{289}{8} \left( \sin(\theta) - \sin(\supscr{\theta}{ref}) \right)^2,
  \end{aligned}
\end{equation*}
where $(\supscr{\vect{u}}{ref},\supscr{x}{ref},\supscr{y}{ref},\supscr{\theta}{ref})$ are signals generated by the planner, and we select the parameter $\mu=10^{-4}$ for components which are not smooth. In addition, we assume $\mathcal{U}=[-20,20]^2$ and utilize $p=3$  basis functions $\{f^{(i)} \}_i$ in form of~\eqref{eq:unicycle}, with
$\vect{w}_t\equiv (0,0,0)$, and
\begin{equation*}
  \begin{aligned}
  &  i=1, \quad e_1=0, \; & e_2=0, \quad \\
  &  i=2, \quad e_1=10, \; &  e_2=0, \quad \\
  &  i=3, \quad e_1=0, \; &  e_2=10. \;\;
  \end{aligned}
\end{equation*}
Note that the ground truth parameter $\vect{\alpha}^{\star}:=(1,0,0)$ in the regular zone, $\vect{\alpha}^{\star}:=(0.6,0.4,0)$ in the slippery zone,
and $\vect{\alpha}^{\star}:=(1.14,-0.12,-0.02)$ in the sandy zone.
At each time $t$, we have access to model sets
$\{f^{(i)} \}_i$ and as well as the real-time data set $\mathcal{I}_t$ with size $T_0=100$, which corresponds to the moving time window of order 0.1 second. For the system learning algorithm, notions of norm and inner product are those defined on the vector space $T(\real^2 \times \mathbb{S}) \equiv \real^3$.
We employ our online optimization and learning algorithm for the
characterization of the uncertain vehicle states, learning of the unknown road-condition parameter $\vect{e}$, and control towards planned behaviors in real time. The achieved system behaviors are demonstrated in Fig~\ref{fig:path_achieved}, contrasted with the case without the proposed approach, as in Fig.~\ref{fig:path_plan}. In the following, we analyze each case separately and notice how the proposed approach strikes balance between the given planned control $\supscr{\vect{u}}{ref}$ and the actual control $\vect{u}$ which reduces the weighted tracking error in road uncertainty.

\begin{figure}[tbp]\centering
\includegraphics[width=0.125\textwidth]{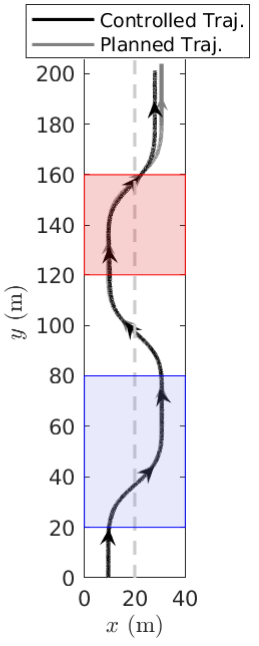}\includegraphics[width=0.405\textwidth]{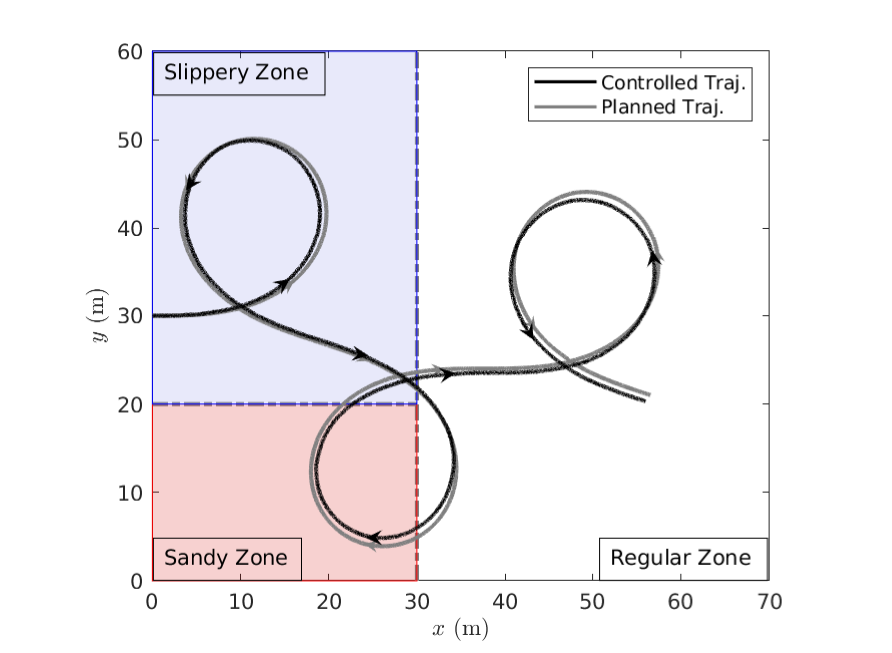}\caption{\footnotesize An example of the (gray) planned trajectory and (black) controlled system trajectory in various road zones, with the system state $\vect{x}=(x,y,\theta)$. The red region indicates sandy zone while the blue region indicates the slippery zone. With the implemented control, the vehicle follows the planned path with low regrets in high probability.
}\label{fig:path_achieved}\end{figure}

\textbf{Example (Lane-changing behavior adaptation)} In this scenario,
we assume the initial system state $\vect{x}_0=(10,0,\pi/2)$. Further, the
vehicle can access path plan in Fig.~\ref{fig:path_plan}(a) and as
well as the suggested wheel speed plan as the gray signal in
Fig.~\ref{fig:control_lane}(a). To demonstrate the learning effect of the
algorithm, we show in Fig.~\ref{fig:car_alpha} components $\alpha_1$
and $\alpha_2$ of $\vect{\alpha}=(\alpha_1,\alpha_2,\alpha_3)$, where
the black lines indicate value of the ground truth
$\vect{\alpha}^{\star}$ on the planned trajectory and the gray lines
represent the learned, real-time estimate of $\alpha_1$ and $\alpha_2$
at the actual vehicle position. Notice that $\vect{\alpha}^{\star}$ is
inaccessible in practice, and from this case study, the proposed
approach indeed learns the system dynamics effectively. See,
e.g.~\cite{DL-DF-SM:20-lcss} for more analysis regarding to the effect
of the learning behavior and ambiguity sets characterization on the
selection of $\epsilon$ and $\gamma$.

As the proposed loss function $\ell$ measures the weighted tracking
error, the resulting control system trajectory in
Fig.~\ref{fig:path_achieved}(a) already reveals the effectiveness of
the method and as well as the low regrets in probability. On the other
hand, because the system is highly non-linear and uncertain,
evaluating the actual optimal objective value of Problem~\eqref{eq:P}
is difficult. Therefore, it's very challenging to evaluate the regret
$R_t$ in practice, even though the its probabilistic bounded is
proved. Here, we provide in Fig.~\ref{fig:control_lane}(b)  the realized loss
$\ell$ and as well as the realized objective value of
Problem~\eqref{eq:P2}, where the loss $\ell$ reveals one possible
objective value of~\eqref{eq:P}, and the objective value
of~\eqref{eq:P2} serves as an upper-bound of that of~\eqref{eq:P} in
high probability.       In addition, notice that the derived (black) control signal in
Fig.~\ref{fig:control_lane}(a) has undesirable, high-oscillatory
behavior. This is because the chosen loss function $\ell$ is only
locally convex in $\vect{x}$. When the system disturbances are
significant, the proposed approach then revealed certain degradation
and control being oscillatory. Nevertheless, a desirable system
behavior in Fig.~\ref{fig:path_achieved}(a) is achieved.

\begin{figure}[tbp]\centering
\includegraphics[width=0.24\textwidth]{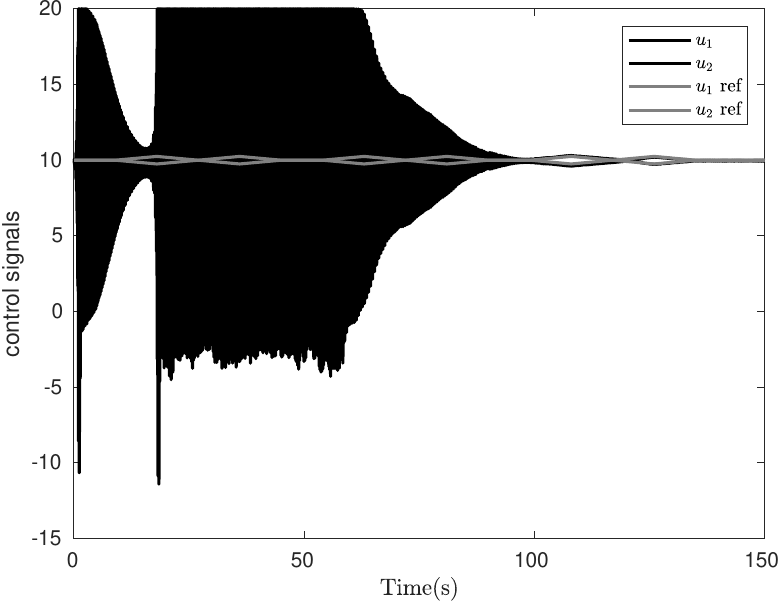}\includegraphics[width=0.24\textwidth]{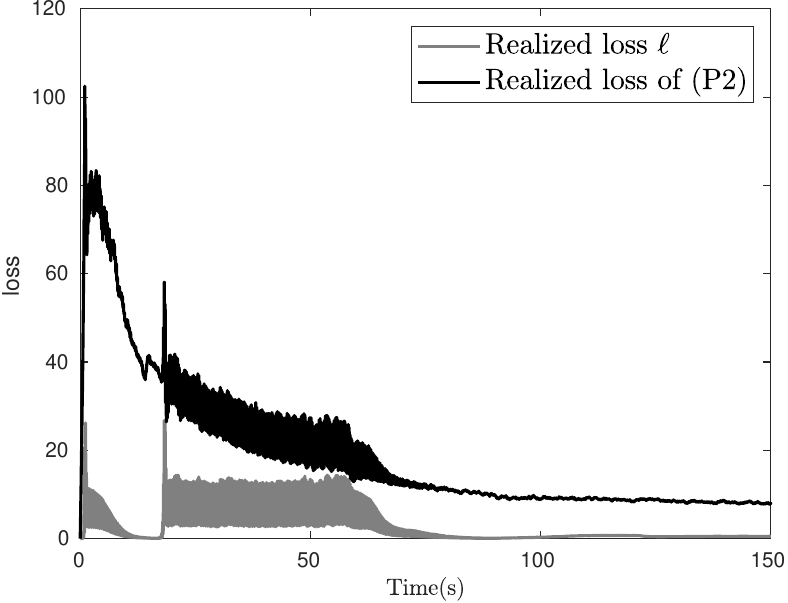}\\ $\quad$ (a)  $\qquad \qquad \qquad\qquad\qquad\quad$  (b)
\caption{\footnotesize (a) The (gray) control signal provided by the planner and an example of the (black) control signal derived from the proposed approach. (b) The realized loss $\ell$ and the achieved objective of~\eqref{eq:P2}.
}\label{fig:control_lane}\end{figure}

\begin{figure}[tbp]\centering
\includegraphics[width=0.235\textwidth]{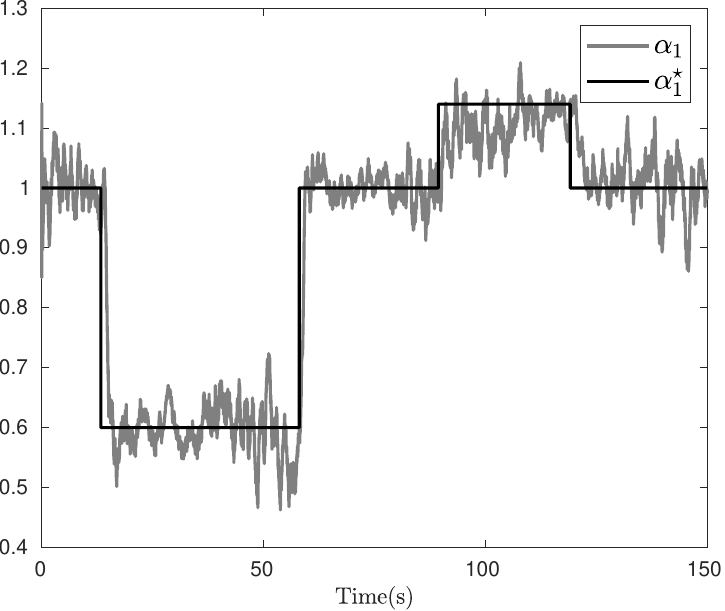} \includegraphics[width=0.245\textwidth]{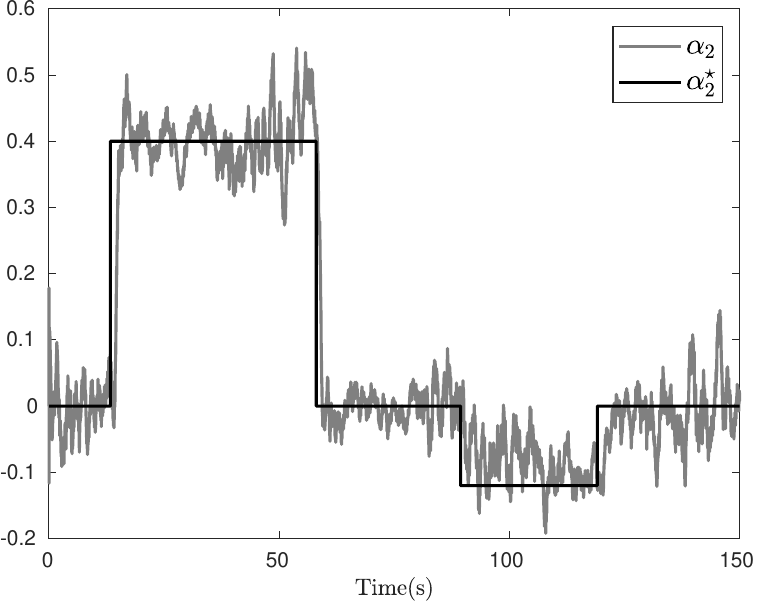}\caption{\footnotesize The component $\alpha_1$ and $\alpha_2$ of the real-time parameter $\vect{\alpha}:=(\alpha_1,\alpha_2,\alpha_3)$ in the learning procedure.}\label{fig:car_alpha}\end{figure}

\textbf{Example (Circular route tracking)}
In this scenario, we consider $\vect{x}_0=(0,30,0)$. We omit the details as the analysis shares the same spirit as the last lane-changing example.

\subsection{Online resource allocation
  problem} \label{subsec:resource} We consider an online resource
allocation problem where an agent or decision maker aims to 1) achieve
at least target profit under uncertainty, and 2) allocate resources as
uniformly as possible. To do this, the agent distributes available
resources, e.g., wealth, time, energy or human resources, to various
projects or assets. In particular, for the trading-market motivating example, let us consider that the agent
tries to make an online allocation $\vect{u}\in \mathcal{U}$ of a unit
wealth to three assets. At each time $t$, the agent receives random
return rates $\vect{x}_t \in \realnonnegative^3$ of assets from some unknown and uncertain dynamics
\begin{equation}
  \vect{x}_{t+1}  =
\vect{x}_t +  hA(t) + h \vect{w}_t, \textrm{ with some } \vect{x}_{0} \in \real^3,
 \label{eq:allocation}
\end{equation}
where $h=10^{-3}$ is a stepsize, the vector $A(t)$ is randomly
generated, unknown and piecewise constant, and the uncertainty vector
$\vect{w}_t$ is assumed to be sub-Gaussian with $\sigma=0.1$. Note that
this model can serve to characterize a wide class of dynamic (linear and
nonlinear) systems. In addition, we assume that the third asset is
value preserved, i.e., the third component of $A(t)$ and $\vect{w}_t$
are zero and $x_3 \equiv 1$. Over time, an example of the resulting unit return
rates $\vect{x}$ is demonstrated in Fig.~\ref{fig:assets}. Then, we
denote by $r_0=1.3$ and $\langle \vect{u}, \vect{x}_{t+1} \rangle$ the
target profit and the predicted instantaneous profit, respectively. Note that
the decision maker aims to obtain at least a $30 \%$ profit and
allocate resources online for this purpose. In particular, the
decision maker implements an allocation online if $\langle \vect{u},
\vect{x}_{t+1} \rangle \leq r_0$, otherwise does nothing. This results
in~\eqref{eq:P} with the loss function
\begin{equation*}
\ell(\vect{u},\vect{x})= \max \{ 0, 1-  \frac{1}{r_0}\langle \vect{u}, \vect{x}  \rangle   \},
\end{equation*}
and set $\mathcal{U}$ a unit simplex. We propose $p=3$  basis functions
\begin{equation*}
  f^{(1)}=\vect{x}, \; f^{(2)}=\vect{x}+ 0.1 h \vect{e}_1 ,\;  f^{(3)}=\vect{x}+ 0.1 h \vect{e}_2,
        \end{equation*}
where $\vect{e}_1=\trans{(1,0,0)}$ and $\vect{e}_2=\trans{(0,1,0)}$.
At each $t$, we assume that only historical data are available for online resource allocations. Applying the proposed
probabilistic characterization of $\vect{x}_{t+1}$ as
in~\eqref{eq:P1}, we equivalently write it as in
form~\eqref{eq:P2smooth}, where
\begin{equation*}
  \begin{aligned}
    &G_{\mu}(t,\vect{u} )= \frac{1}{T} \sum\limits_{k\in
      \mathcal{T}}     \supscr{F}{S}_{\mu}(\langle \vect{u}, \frac{\vect{p}_{k,t}}{r_0}
    \rangle ) + \frac{q_{t}}{r_0} F_{\mu}(\vect{u}), \; \mu=0.01,
  \end{aligned}
\end{equation*}
with functions $\supscr{F}{S}_{\mu}$$\supscr{}{\ref{appx:switch_func}}$ and $F_{\mu}$$\supscr{}{\ref{appx:func_norm}}$, and
 real-time data $\vect{p}_{k,t}$ and $q_{t}$ determined as in
Problem 2. We claim that $G_{\mu}(t,\vect{u} )$ has a time-dependent
Lipschitz gradient constant in $\vect{u}$ given by $\Lip(G_{\mu})=q_t
/ {r_0} + {1}/{(r_0^2 T)} \sum_{k\in \mathcal{T}} \Norm{\vect{p}_{k,t}
}^2_{\infty}$, and we use $\varepsilon:=1/\Lip(G_{\mu})$ in the
solution system~\eqref{eq:oagsol} to compute the online decisions.
\begin{figure}[tbp]\centering
\includegraphics[width=0.245\textwidth]{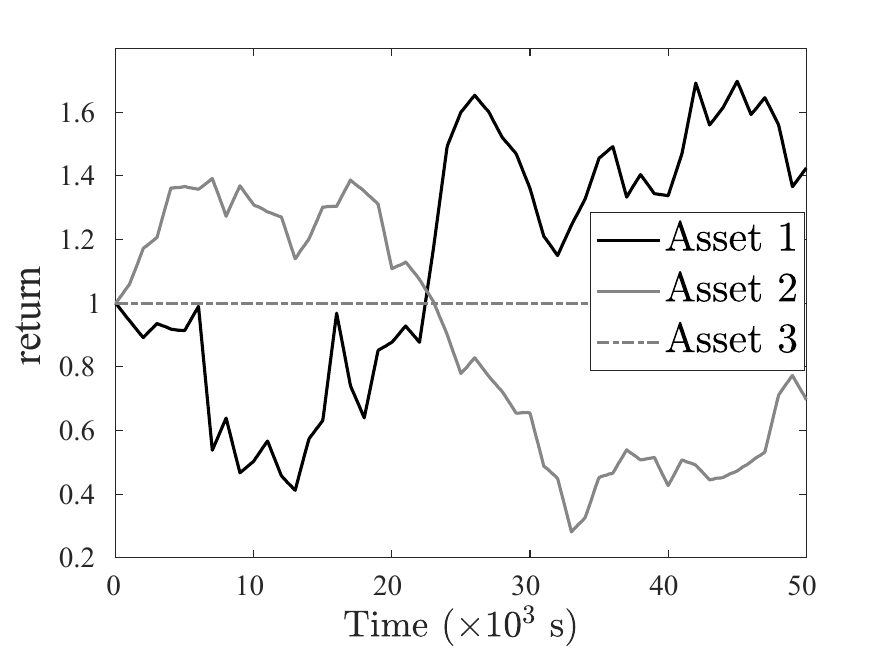}\includegraphics[width=0.225\textwidth]{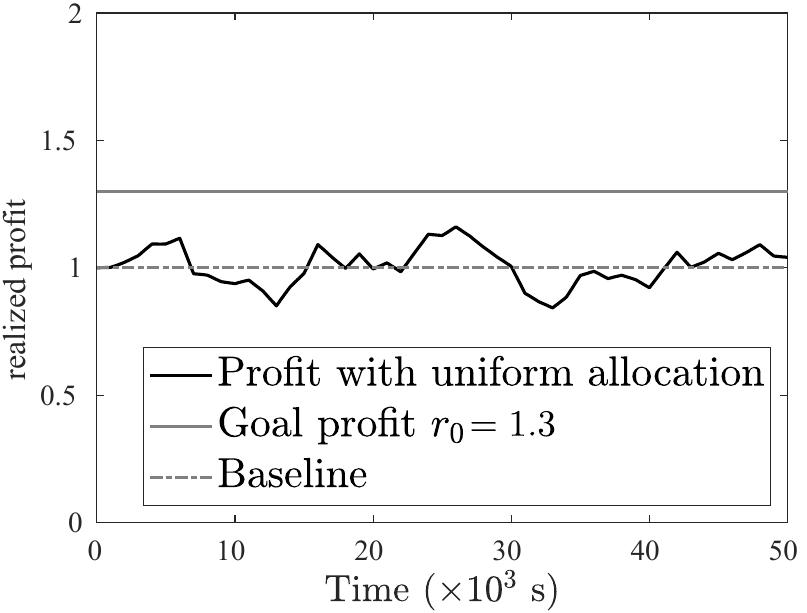}\caption{\footnotesize An example of random returns
 $\vect{x}=(x_1,x_2,x_3)$, where returns of the first two assets
 $x_1$, $x_2 \in [0, +\infty)$ are highly fluctuating and the third is
 value-preserving with return $x_3\equiv 1$. Without asset allocation, agent does not achieve the goal profit $r_0=1.3$ and has a chance of losing assets.}\label{fig:assets}\end{figure}
\begin{figure}[tbp]\centering
\includegraphics[width=0.22\textwidth]{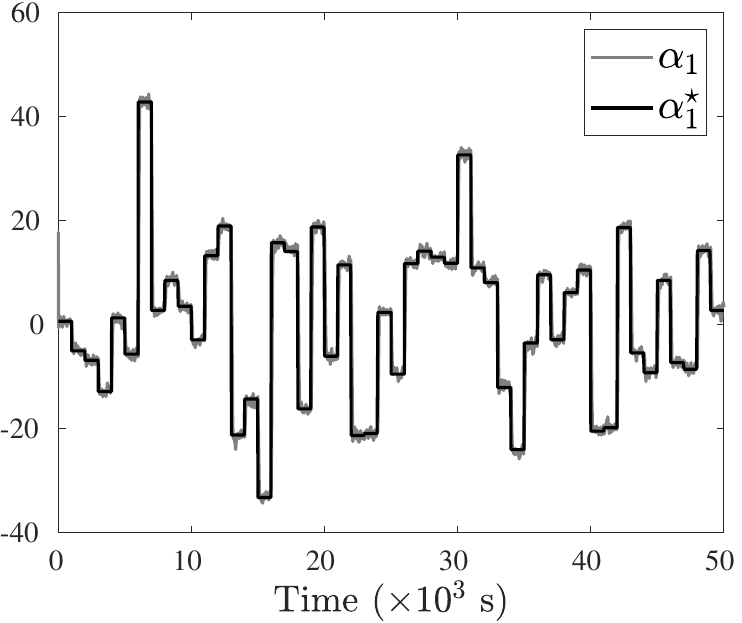} \includegraphics[width=0.23\textwidth]{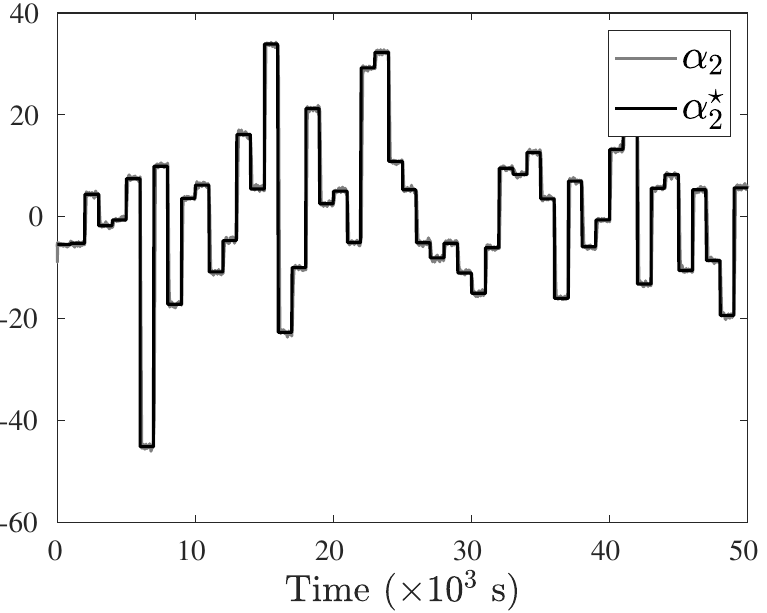}\caption{\footnotesize The component $\alpha_1$ and $\alpha_2$ of the real-time parameter $\vect{\alpha}:=(\alpha_1,\alpha_2,\alpha_3)$ in learning, where the values $\alpha^{\star}_1$ and $\alpha^{\star}_2$ are the online-inaccessible ground truth. Notice the responsive behavior of the proposed learning algorithm.
}\label{fig:alpha_allocation}\end{figure}
\begin{figure}[tbp]\centering
\includegraphics[width=0.23\textwidth]{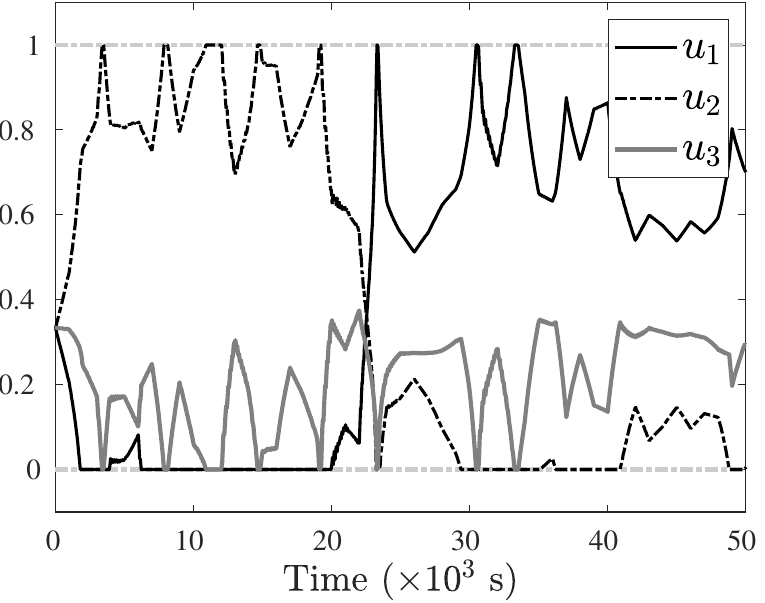}\includegraphics[width=0.23\textwidth]{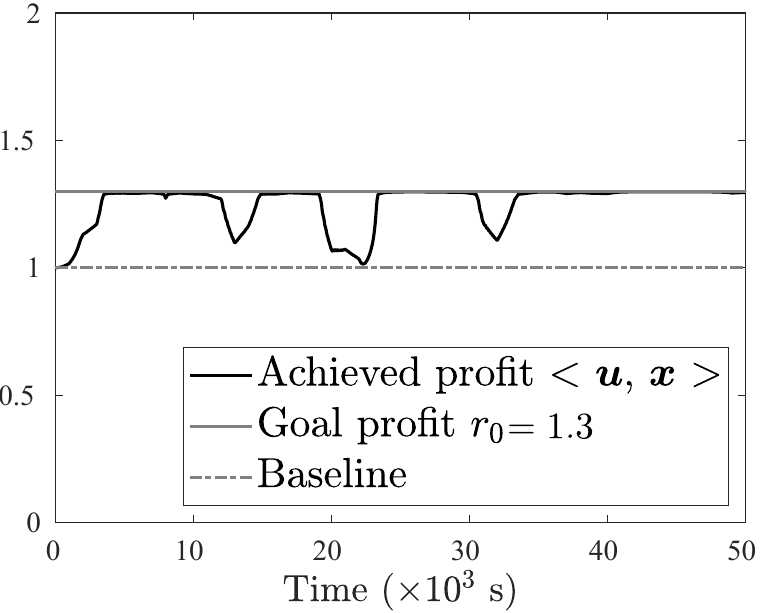}\caption{\small Real-time resource allocation $\vect{u}$ and profit $\langle \vect{u}, \vect{x}  \rangle$. Notice how the decision $\vect{u}=(u_1,u_2,u_3)$ respects constraints and how the allocation tries to balance the assets when the goal profit $r_0$ is met.
}\label{fig:control_allocation}\end{figure}

Fig.~\ref{fig:alpha_allocation} shows the real-time evolution $\alpha_1$ and $\alpha_2$ of the
parameter $\vect{\alpha}:=(\alpha_1,\alpha_2,\alpha_3)$, while the
behavior of $\alpha_3$ can be similarly characterized. In these
figures, black lines $\alpha^{\star}_1$ and $\alpha^{\star}_2$ are
determined by the unknown signal $A(t)$ while gray lines $\alpha_1$ and $\alpha_2$ are those
 computed as in~\cite{DL-DF-SM:20-lcss}. Note that
$\vect{\alpha^{\star}}$ represents the unknown
dynamics $f$ and they are not accessible in reality. It can be seen
that the proposed method effectively learns $\vect{\alpha}^{\star}$.

Fig.~\ref{fig:control_allocation} demonstrates the online resource
allocation obtained by implementing~\eqref{eq:oagsol} and the achieved
real-time profit $\langle \vect{u}, \vect{x} \rangle$. The decision
$\vect{u}$ starts from the uniform allocation
$\vect{u}_{0}=(1/3,1/3,1/3)$ and is then adjusted to approach the
target profit $r_0=1.3$.  Once the target is achieved, the agent then
maintains the profit while trying to balance the allocation if
possible. When the return rate $\vect{x}$ is low/unbalanced,
as in Fig.~\ref{fig:assets}, the agent tries to improve
and achieve the target profit by allocating resources more
aggressively. Though did not appear in the current
  scenario, in case that the return rate is high and the target
profit value is achieved, the agent focuses on balancing the
allocation while maintaining the profit. If both the target
  profit and allocation balance are achieved, then the agent stops
re-allocating resources and monitors the return rate $\vect{x}$ until
the switch turns on, e.g., when the near future profit prediction
drops below $r_0$ again.  In addition, notice how the target profit was achieved with
  the proposed control strategy as demonstrated in
  Fig.~\ref{fig:control_allocation}, which contrasts with uniform
  allocation case as in Fig.~\ref{fig:assets}.
\begin{figure}[tbp]\centering
\includegraphics[width=0.26\textwidth]{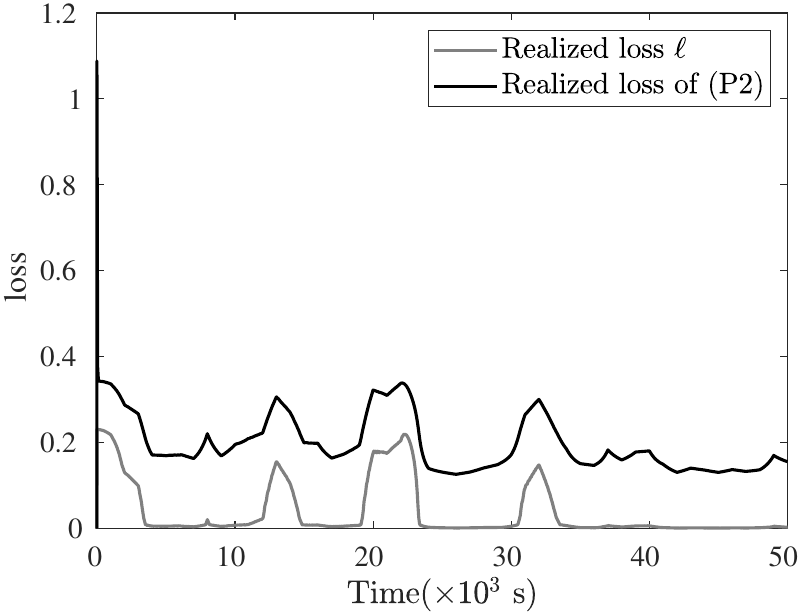} \caption{\footnotesize The realized loss $\ell$ and the achieved objective of~\eqref{eq:P2}. }\label{fig:asset_loss}\end{figure}

Fig.~\ref{fig:asset_loss} demonstrates the evaluation
  of the time-varying loss $\ell$ as well as the realized objective
  value of Problem~\eqref{eq:P2}. Due to the unknown time-varying
  distributions $\probP_{t|t-1}$, the evaluation of the objective values of
  Problem~\eqref{eq:P} is intractable, and the realized loss
  of~\eqref{eq:P2} serves as a high-confidence upper bound of that
  of\eqref{eq:P}. Nevertheless, the target profit is achieved with low
  regret in high confidence, as revealed  in
  Fig.~\ref{fig:control_allocation}.
  \vspace*{-2ex}
\section{Conclusions}\label{sec:conclusion}
In this paper, we proposed a unified solution framework for online
learning and optimization problems in form of~\eqref{eq:P}. The
proposed method allowed us to learn an unknown and uncertain dynamic
system, while providing a characterization of the system
with online-quantifiable probabilistic guarantees that certify the
performance of online decisions. The approach provided tractable,
online convex version of~\eqref{eq:P}, via a series of equivalent
reformulation techniques. We explicitly demonstrated the framework via
two problem classes conforming to~\eqref{eq:P}: an optimal control
problem under uncertainty and an online resource allocation
problem. These two problem classes resulted in explicit, online and
non-smooth convex optimization problems. We extended Nesterov's
accelerated-gradient method to an online fashion and provided a
solution system for online decision generation of~\eqref{eq:P}.  The
quality of the online decisions were analytically certified via a
probabilistic regret bound, which revealed its relation to the
learning parameters and ambiguity sets. Two motivating examples applying the proposed framework were empirically tested, demonstrating the effectiveness of the proposed framework with the bounded regret guarantees in probability.
We leave the relaxation of assumptions and the comparison of this work with other methods as the future work.
 \vspace*{-2ex}
\appendix \label{appx:whole}
\vspace*{-1ex}
\subsection{Computation of the objective
  gradients} \label{appx:gradients}
  \vspace*{-1ex}
Let $\ell$, $G$ and $G_{\mu}$ be those in Lemma~\ref{lemma:smoothP} on examples of~\eqref{eq:P2smooth}. We now derive $\nabla G_{\mu}:=\nabla_{\vect{u}} G_{\mu}(t,\vect{u} )$ as follows. \\
\noindent \textbf{Problem 1 (Optimal control under uncertainty): }
  \begin{equation*}
    \begin{aligned}
    & \nabla_{\vect{u}} G_{\mu}(t,\vect{u} )= \\
      & \hspace{4ex} \frac{1}{\mu} \vect{u} + \frac{1}{T} \sum\limits_{k\in
        \mathcal{T}} \nabla_{\vect{u}} F_{\mu} ( \vect{p}_{k,t} ) +
      \frac{\gamma }{T} \sum\limits_{i=1}^{p} \sum\limits_{k\in
        \mathcal{T}} \nabla_{\vect{u}} F_{\mu}( \vect{H}_k^{(i)} ),
    \end{aligned}
  \end{equation*}
  where, for each $k \in \mathcal{T}$, the term $\nabla_{\vect{u}}
  F_{\mu} ( \vect{p}_{k,t})$ is
  \begin{equation*}
    \begin{aligned}
            \begin{cases}
        \frac{1}{\mu} \trans{\left( \sum\limits_{i=1}^{p} \alpha_i f^{(i)}_2({t},\hat{\vect{x}}_{t}) \right) } \vect{p}_{k,t},  &\textrm{if } \; \Norm{\vect{p}_{k,t}} \leq \mu,  \\
        \frac{1}{\Norm{\vect{p}_{k,t}} } \trans{\left(
            \sum\limits_{i=1}^{p} \alpha_i
            f^{(i)}_2({t},\hat{\vect{x}}_{t}) \right)} \vect{p}_{k,t} ,
        &\; \ow,
      \end{cases} \\
    \end{aligned}
  \end{equation*}
  and, for $k \in \mathcal{T}$, $i\in \until{p}$, the term $ \nabla_{\vect{u}} F_{\mu} (  \vect{H}_k^{(i)} )$ is
  \begin{equation*}
    \begin{aligned}
            \begin{cases}
        - \frac{1}{\mu} \trans{(f^{(i)}_2(t,\hat{\vect{x}}_{t}) )} \vect{H}_k^{(i)},  &\textrm{if } \; \Norm{\vect{H}_k^{(i)}} \leq \mu,  \\
        - \frac{1}{\Norm{\vect{H}_k^{(i)}} } \trans{(f^{(i)}_2(t,\hat{\vect{x}}_{t}) )} \vect{H}_k^{(i)}  , &\; \ow.
      \end{cases}
    \end{aligned}
  \end{equation*}
  \noindent \textbf{Problem 2 (Online resource allocation): }
\begin{equation*}
  \begin{aligned}
    & \nabla_{\vect{u}} G_{\mu}(t,\vect{u} )= \frac{1}{T}
    \sum\limits_{k\in
      \mathcal{T}}     \nabla_{\vect{u}} \supscr{F}{S}_{\mu}(\langle \vect{u},
    \vect{p}_{k,t} \rangle ) + q_{t} \nabla_{\vect{u}}
    F_{\mu}(\vect{u}),
  \end{aligned}
\end{equation*}
where
    \begin{equation*}
      \nabla_{\vect{u}} F_{\mu} ( \vect{u} ):= \begin{cases}
        \frac{1}{\mu}  \vect{u},  &\textrm{if } \; \Norm{\vect{u}} \leq \mu,  \\
        \frac{1}{\Norm{\vect{u}} } \vect{u}  , &\; \ow,
        \end{cases}
    \end{equation*}
    and, for each $k \in \mathcal{T}$, the gradient $\nabla_{\vect{u}}
    \supscr{F}{S}_{\mu}(\langle \vect{u}, \vect{p}_{k,t} \rangle )$ is
    \begin{equation*}
      \begin{aligned}
                  \begin{cases}
         - \vect{p}_{k,t}   , &\textrm{ if } \;  \langle \vect{u}, \vect{p}_{k,t}  \rangle \leq 1 -\mu, \\
    - \frac{1-\langle \vect{u}, \vect{p}_{k,t} \rangle
    }{\mu}     \vect{p}_{k,t},  &\textrm{ if } \; 1-\mu \leq \langle \vect{u}, \vect{p}_{k,t}  \rangle <1,  \\
    0 , &\textrm{ if } \; \langle \vect{u}, \vect{p}_{k,t} \rangle
    \geq 1.
  \end{cases}
\end{aligned}
\end{equation*}
These explicit expressions provide ingredients for the solution system. With different
selections of the norm, the expression varies accordingly.

\vspace*{-2ex}
\subsection{Stability Analysis of the Solution
  System} \label{appx:stability}
\vspace*{-1ex}
Here, we adapt dissipativity theory
to address the performance of the online solution
system~\eqref{eq:oagsol}. This part of the work is an
online-algorithmic extension of the existing Nesterov's accelerated-gradient
method and its convergence analysis
in~\cite{BH-LL:17,LL-BR-AP:16,AB-MT:09}. Our extension~\eqref{eq:oagsol} inherits from the work in~\cite{LL-BR-AP:16}, where the difference is that gradient computations in~\eqref{eq:oagsol} are from time-varying objective functions in~\eqref{eq:P2smooth}.
  To simplify the discussion, the notation we used in this
subsection is different from that in the main body of the
paper. Consider the online problem, analogous to~\eqref{eq:P2smooth},
defined as follows
\begin{equation}
  \begin{aligned}
      \min\limits_{\vect{x} \in \mathcal{X} } \; f_{t}(\vect{x}), \quad t=0,1, 2,\ldots
  \end{aligned}
  \label{eq:P2sim} \end{equation}
where $f_{t}(\vect{x})$ is locally Lipschitz in $t$ with the parameter
$h(\vect{x})$ and, at each time $t$, the objective function $f_t$ are
$m_t$-strongly convex and $L_t$-smooth, with $m_t \geq 0$ and
$L_t>0$. The convex set $\mathcal{X} \subset \real^n$ is analogous to
that in Assumption~\ref{assump:cvxu} on convex decision oracle. The solution system
to~\eqref{eq:P2sim}, analogous to~\eqref{eq:oagsol}, is \begin{equation}
  \begin{aligned}
    \vect{x}_{t+1} =& \; \Pi( \vect{y}_{t} - \alpha_t \nabla f_{t}( \vect{y}_{t})  )   , \\
    \vect{y}_{t+1} =& \; \vect{x}_{t+1} + \beta_{t+1} \; (\vect{x}_{t+1} -\vect{x}_{t}) , \\
    & \textrm{with some } \vect{y}_{0}= \vect{x}_{0} \in \mathcal{X},
  \end{aligned} \label{eq:solsim}
\end{equation}
where $\alpha_t \leq 1/ L_t$ and $\beta_t$ is selected iteratively,
following
\begin{equation*}
  \delta_{-1}=1, \; \delta_{t+1}:=\frac{1+ \sqrt{1+4\delta_t^2}}{2},
  \; \beta_t:=\frac{\delta_{t-1}-1}{\delta_t}.
\end{equation*}
Note that $\delta_{t}^2 -\delta_{t}=\delta_{t-1}^2$,
$t=0,1,2,\ldots$. The projection $\Pi(\vect{x} )$ at each time
$t$     is equivalently written as
\begin{equation*}
  \Pi(\vect{x} )= \argmin\limits_{\vect{z} \in \real^n } \frac{1}{2}\Norm{\vect{z}-\vect{x}}^2 + \alpha_t \ell(\vect{z}),
\end{equation*}
with $\ell(\vect{z})=0$ if $\vect{z} \in \mathcal{X}$, otherwise
$+\infty$.  Note that the projection operation is a convex problem
with the objective function being strongly convex. Thus, $\Pi(\vect{x}
)$ is a singleton (the unique minimizer) and satisfies the optimality
condition~\cite{RTR-RJBW:98}
\begin{equation*}
  \vect{x} - \Pi(\vect{x} ) \in \alpha_t \partial \ell (\Pi(\vect{x})),
\end{equation*}
where the r.h.s.~is the sub-differential set of $\ell$ at
$\Pi(\vect{x})$. Equivalently, we write the above condition as
\begin{equation*}
\Pi(\vect{x} ) = \vect{x} - \alpha_t \partial \ell (\Pi(\vect{x})).
\end{equation*}
We apply this equivalent representation to the solution
system~\eqref{eq:solsim}, resulting in
\begin{equation}
  \begin{aligned}
    \vect{x}_{t+1} =& \;  \vect{y}_{t}                                            - \alpha_t  \nabla f_{t}( \vect{y}_{t}) - \alpha_t \partial \ell (\vect{w}_{t} )     , \\
    \vect{y}_{t+1} =& \; \vect{x}_{t+1} + \beta_{t+1} \; (\vect{x}_{t+1} -\vect{x}_{t}) , \\
    \vect{w}_{t}=& \vect{x}_{t+1}. \\
  \end{aligned} \label{eq:solsimlinear}
\end{equation}
Note that~\eqref{eq:solsimlinear} is not an explicit online algorithm,
as the state $\vect{x}_{t+1}$ is yet to be determined. However, we
leverage this equivalent reformulation for the convergence analysis of
solutions to~\eqref{eq:solsim} to a sequence of optimizers
of~\eqref{eq:P2sim}, denoted by $\{ \vect{x}_t^{\star}
\}$. To do this, let $\vect{z}_{t}:=(\vect{x}_{t} -\vect{x}_{t}^{\star}, \;
\vect{x}_{t-1} -\vect{x}_{t-1}^{\star} ) $ denote the tracking error
vector and represent~\eqref{eq:solsimlinear} as the error dynamical
system
\begin{equation}
\begin{aligned}
  \vect{z}_{t+1} =& \; A_t \vect{z}_{t} + B_{t}^{\vect{u}} \vect{u}_{t} + B_{t}^{\vect{v}} \vect{v}_{t} , \\
  & \textrm{with } \vect{z}_{1}=(\vect{x}_{1} -\vect{x}_{1}^{\star},
  \; \vect{x}_{0} -\vect{x}_{0}^{\star} ),
\end{aligned} \label{eq:solerror}
\end{equation}
with the gradient input $\vect{u}_t:= \nabla f_t(\vect{y}_t)
+ \partial \ell (\vect{w}_t)$, the reference signal $\vect{v}_t:=
(\vect{x}_{t}^{\star} -\vect{x}_{t-1}^{\star}, \;
\vect{x}_{t+1}^{\star} -\vect{x}_{t}^{\star})$, the matrices
\begin{equation*}
  A_t = \begin{bmatrix} 1+\beta_t  & -\beta_t  \\ 1 & 0 \end{bmatrix}, \;
  B_{t}^{\vect{u}} = \begin{bmatrix} -\alpha_t   \\  0 \end{bmatrix}, \;
  B_{t}^{\vect{v}} = \begin{bmatrix} \beta_t  & -1   \\  0 & 0 \end{bmatrix}, \;
\end{equation*}
and the auxiliary variables
\begin{equation*}
  \begin{aligned}
    \vect{y}_t - \vect{x}_{t}^{\star}&= \begin{bmatrix} 1+\beta_t  &  -\beta_t \end{bmatrix} \vect{z}_t + \begin{bmatrix} \beta_t  &  0 \end{bmatrix} \vect{v}_{t} , \\
    \vect{w}_{t} - \vect{x}_{t}^{\star}&= \begin{bmatrix} 1 &
      0 \end{bmatrix} \vect{z}_{t+1} + \begin{bmatrix} 0 &
      1 \end{bmatrix} \vect{v}_{t}.
    \end{aligned}
\end{equation*}

We provide the following stability analysis of the system.
\begin{theorem}[Stability of~\eqref{eq:solsim}] \label{thm:solsim}
Consider the solution algorithm~\eqref{eq:solsim}, or equivalently~\eqref{eq:solsimlinear}. \\
(1) For each $t \geq 1$, we have the following
\begin{equation*}
  \begin{aligned} \small
   f_t(\vect{x}_t) -  f_t(\vect{x}_{t+1})  \geq
 \trans{\vect{\xi}_{t}}
X_{1,t} \; \vect{\xi}_{t},  \\
  \end{aligned}
\end{equation*}
\begin{equation*}
  \begin{aligned} \small
   f_t(\vect{x}_t^{\star}) -  f_t(\vect{x}_{t+1})   \geq
\trans{\vect{\xi}_{t}}
 X_{2,t}  \; \vect{\xi}_{t}.  \\
  \end{aligned}
\end{equation*}
Here, $\vect{\xi}_{t} := (\vect{z}_t , \;
\vect{u}_t, \; \vect{v}_{t}) $, and
\begin{equation*} \small
  X_{1,t}:= \frac{1}{2}
  \begin{pmatrix}
  m \beta^2 & -m \beta^2 & - \beta & m \beta^2 & 0\\ - m \beta^2 & m
  \beta^2 & \beta & -m \beta^2 & 0 \\ - \beta & \beta & \alpha (2-L
  \alpha) & - \beta & 0 \\ m \beta^2 & -m \beta^2 & - \beta & m
  \beta^2 & 0 \\ 0 & 0 & 0 & 0 & 0
\end{pmatrix},
\end{equation*}
\vspace*{-2ex}
\begin{equation*} \small
  X_{2,t}:= \frac{1}{2}
  \begin{pmatrix}
  m (1+\beta)^2 & -\eta & - (1+\beta) & \eta & 0 \\ - \eta & m \beta^2
  & \beta & -m \beta^2 & 0 \\ - (1+\beta) & \beta & \alpha (2-L
  \alpha) & - \beta & 0 \\ \eta  & -m \beta^2  & - \beta & m \beta^2 &
  0 \\ 0 & 0 & 0 & 0 & 0
\end{pmatrix},
\end{equation*}
with $\eta=m (1+\beta)\beta $ and the parameters $(m,L,\alpha,\beta)$
are a short-hand notation for $(m_t,L_t,\alpha_t,\beta_t)$. \\

\textbf{(2)} Given the horizon parameter $T_0 \in \integerpositive$
with $T=\min \{t-1, T_0 \}$. Then, for any $t \geq 2$, the solution
$\vect{x}_{t}$ from~\eqref{eq:solsim} achieves
\begin{equation*}
  \begin{aligned} \small
    &  f_{t}(\vect{x}_{t}) - f_{t}(\vect{x}_{t}^{\star})  \leq  \frac{4G_{t}}{(t+2)^2} + T F_{t} + T K_{t} \\
    & \hspace{4ex} +     \frac{4(t-T-1+\delta_{0})^2 } {(t+2)^2} ( f_{t-T}(\vect{x}_{t-T})
    - f_{t-T}(\vect{x}_{t-T}^{\star}) ) .
  \end{aligned}
\end{equation*}
where the time-dependent parameters $G_{t}$, $F_{t}$ and $K_{t}$ are
determined by $f_{t}$, $\alpha_{t}$ and $\beta_{t}$.
\end{theorem}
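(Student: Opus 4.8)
The plan is to follow the dissipativity / quadratic-constraint methodology of~\cite{LL-BR-AP:16,BH-LL:17,AB-MT:09}, carried through with the time-varying data $(f_t,\vect{x}_t^{\star},\vect{v}_t)$ and the finite moving horizon $T$: part~(1) extracts two quadratic ``supply-rate'' inequalities from the regularity of $f_t$ and the projection, and part~(2) combines them into a $\delta_t$-weighted storage inequality that I telescope over the window. For part~(1) I would first rewrite the projected step in the implicit form of~\eqref{eq:solsimlinear}, $\vect{x}_{t+1}=\vect{y}_t-\alpha_t\vect{u}_t$ with $\vect{u}_t=\nabla f_t(\vect{y}_t)+\vect{s}_t$ and $\vect{s}_t\in\partial\ell(\vect{w}_t)$. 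For the first inequality I add the $m_t$-strong-convexity lower bound of $f_t$ at $\vect{y}_t$ evaluated at $\vect{x}_t$ to the $L_t$-smoothness upper bound of $f_t$ at $\vect{y}_t$ evaluated at $\vect{x}_{t+1}$; subtracting bounds $f_t(\vect{x}_t)-f_t(\vect{x}_{t+1})$ below by $\langle\nabla f_t(\vect{y}_t),\vect{x}_t-\vect{x}_{t+1}\rangle+\tfrac{m_t}{2}\Norm{\vect{x}_t-\vect{y}_t}^2-\tfrac{L_t}{2}\Norm{\vect{x}_{t+1}-\vect{y}_t}^2$. Substituting $\vect{x}_{t+1}-\vect{y}_t=-\alpha_t\vect{u}_t$ together with the auxiliary identities for $\vect{y}_t-\vect{x}_t^{\star}$ and $\vect{w}_t-\vect{x}_t^{\star}$ writes every vector as a fixed linear image of $\vect{\xi}_t=(\vect{z}_t,\vect{u}_t,\vect{v}_t)$; collecting inner products and squared norms into a symmetric array recovers $X_{1,t}$ (the $m\beta^2$ block coming from $\vect{x}_t-\vect{y}_t$, whose coefficient vector across $(\vect{z}_t,\vect{v}_t)$ is $[-\beta,\beta,-\beta]$, and the $\alpha(2-L\alpha)$ entry from $\Norm{\vect{x}_{t+1}-\vect{y}_t}^2$ and the gradient cross term). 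The second inequality is the same construction anchored at the optimizer: I use strong convexity of $f_t$ at $\vect{y}_t$ evaluated at $\vect{x}_t^{\star}$ — which is where the $(1+\beta_t)$ factor enters through $\vect{y}_t-\vect{x}_t^{\star}$, explaining the $m(1+\beta)^2$ block of $X_{2,t}$ — and absorb the gap between $\vect{u}_t$ and $\nabla f_t(\vect{y}_t)$ via subgradient monotonicity of the indicator, $\langle\vect{s}_t,\vect{x}_t^{\star}-\vect{w}_t\rangle\leq\ell(\vect{x}_t^{\star})-\ell(\vect{w}_t)=0$, valid since both iterates are feasible.

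For part~(2) I would introduce the storage function dictated by the recursion $\delta_t^2-\delta_t=\delta_{t-1}^2$, namely a $\delta$-weighted combination $V_t:=\delta_{t-1}^2\big(f_t(\vect{x}_t)-f_t(\vect{x}_t^{\star})\big)+\tfrac12\Norm{\,\cdot\,}^2$ of the function-value gap and a weighted tracking-error term. Forming the combination $\delta_t^2 X_{1,t}+\delta_t X_{2,t}$ of the two part-(1) inequalities reproduces, in the time-invariant case, the exact one-step decrease $V_{t+1}\leq V_t$ behind Nesterov's $O(1/t^2)$ rate; the requisite certificate is that this weighted matrix is negative semidefinite on the range of the dynamics, which I verify from $\alpha_t\leq 1/L_t$, the definition of $\beta_t$, and $\delta_t^2-\delta_t=\delta_{t-1}^2$.

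In the online setting the drift of $f_t$ and $\vect{x}_t^{\star}$ breaks the exact cancellation, so the dissipation inequality becomes $V_{t+1}-V_t\leq r_t$. I would split $r_t$ into a change-of-optimal-value part $f_{t+1}(\vect{x}_{t+1}^{\star})-f_t(\vect{x}_{t+1}^{\star})$, bounded by the Lipschitz-in-time constant $\bar L$ of the objective and hence folded into $F_t=\max_{k}|G_{k+1}^{\star}-G_k^{\star}|+\bar L$, and a reference-drift part driven by $\vect{v}_t=(\vect{x}_t^{\star}-\vect{x}_{t-1}^{\star},\vect{x}_{t+1}^{\star}-\vect{x}_t^{\star})$, folded into $K_t$. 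Summing $V_{k+1}-V_k\leq r_k$ over the window $k=t-T,\ldots,t-1$ telescopes the $V_k$, accumulates the residual into $T(F_t+K_t)$, and leaves the boundary term $V_{t-T}$ carrying the factor $\delta_{t-T-1}^2=(t-T-1+\delta_0)^2$; dividing by $\delta_{t-1}^2$ and invoking a standard growth estimate of the form $\delta_{t-1}\geq (t+1)/2$ turns the $\delta$-weights into the $4/(t+2)^2$ prefactors, yielding the stated bound with $G_t,F_t,K_t$ determined by $f_t,\alpha_t,\beta_t$.

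\textbf{Main obstacle.} The crux is the time-varying one-step dissipation step: offline, $\delta_t^2 X_{1,t}+\delta_t X_{2,t}\preceq 0$ is a clean algebraic identity, but here both $f_t$ and $\vect{x}_t^{\star}$ move, so I must keep the leading quadratic form negative semidefinite while isolating the genuinely first-order drift residuals and certifying that they sum to $O\!\big(T(F_t+K_t)\big)$ rather than accumulating uncontrolled cross terms across the window. This bookkeeping, together with matching the boundary factor $\delta_{t-T-1}^2$ to the truncation at $t-T$, is where the care is required.
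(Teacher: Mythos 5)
Your proposal follows essentially the same route as the paper's proof: the same pair of strong-convexity/smoothness inequalities anchored at $\vect{x}_t$ and $\vect{x}_t^{\star}$ with the indicator's subgradient terms absorbed via feasibility of $\vect{x}_t$, $\vect{x}_t^{\star}$ and $\vect{w}_t=\vect{x}_{t+1}$, followed by a $\delta$-weighted dissipation inequality telescoped over the moving window (the paper keeps a purely quadratic storage $V_t$ and places the function-value gap in the supply rate, which is an equivalent rewriting of your combined Lyapunov function, and it absorbs the $\vect{v}_t$-drift into negative-semidefinite blocks rather than into $K_t$, with $K_t$ instead bounding the temporal variation $f_{t+1}(\vect{x}_{t+1})-f_t(\vect{x}_{t+1})$ at the iterate). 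The only slip is the weighting of the combination, which should be $\delta_{t-1}^2 X_{1,t}+\delta_t X_{2,t}$ rather than $\delta_t^2 X_{1,t}+\delta_t X_{2,t}$ so that the identity $\delta_t^2-\delta_t=\delta_{t-1}^2$ yields the telescoping coefficients $-\delta_t^2$ and $\delta_{t-1}^2$; this correction is forced by the recursion you already invoke and does not affect the argument.
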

\vspace*{-2ex}
\begin{proof}{Theorem~\ref{thm:solsim}}
  (1) By the $m$-strong convexity and $L$-smoothness of $f$, we have
      \begin{align} \small
      f(\vect{x}) - f(\vect{y}) & \geq \trans{\nabla f(\vect{y})}
      (\vect{x} - \vect{y}) + \frac{m}{2} \Norm{\vect{x}
        -\vect{y}}^2, \label{eq:Mconvex} \\       f(\vect{y}) - f(\vect{x}) & \geq \trans{\nabla f(\vect{y})}
      (\vect{y} - \vect{x}) - \frac{L}{2} \Norm{\vect{y} -\vect{x}}^2
      . \label{eq:Lsmooth}     \end{align}
      (1a) Consider~\eqref{eq:Mconvex} with $(\vect{x}, \vect{y}) \equiv
    (\vect{x}_t, \vect{y}_t)$. We leverage $\vect{y}_{t} = \;
    \vect{x}_{t} + \beta \; (\vect{x}_{t} -\vect{x}_{t-1})$
    and     the distributive law    \footnote{Apply 1)
      $\trans{a}c=\trans{(a+b)}(c-d)+\trans{(a+b)}d-\trans{b}c$ and 2)
      $\trans{c}{c}= \trans{(c-d)}(c-d) + 2 \trans{(c-d)}d +
      \trans{d}d$, with $a=\nabla f(\vect{y}_t)$, $b=\partial \ell
      (\vect{w}_t)$, $c=\vect{x}_{t-1} - \vect{x}_t$,
      $d=\vect{x}_{t-1}^{\star} - \vect{x}_t^{\star} $,} for
  \begin{equation*}
    \begin{aligned}
      &  f(\vect{x}_t) -  f(\vect{y}_t)  \\
      & \hspace{2ex} \geq \beta \trans{\nabla f(\vect{y}_t)}
      (\vect{x}_{t-1} - \vect{x}_t) + \frac{m \beta^2}{2} \Norm{\vect{x}_{t-1} -\vect{x}_t}^2,  \\
      & \hspace{2ex} = \beta \trans{ (\nabla f(\vect{y}_t) + \partial
        \ell (\vect{w}_t) )}
      (\vect{x}_{t-1} - \vect{x}_t - \vect{x}_{t-1}^{\star} + \vect{x}_t^{\star}) \\
      & \hspace{2ex} + \frac{m \beta^2}{2} \Norm{\vect{x}_{t-1} -
        \vect{x}_t -
        \vect{x}_{t-1}^{\star} +\vect{x}_t^{\star}}^2 \\
      & \hspace{2ex} + \beta \trans{ (\nabla f(\vect{y}_t) + \partial
        \ell (\vect{w}_t) )}
      (\vect{x}_{t-1}^{\star} - \vect{x}_t^{\star}) \\
      & \hspace{2ex} - \beta \trans{\partial \ell (\vect{w}_t) } (\vect{x}_{t-1} - \vect{x}_t) \\
      & \hspace{2ex} + m \beta^2 \trans{(\vect{x}_{t-1} - \vect{x}_t -
        \vect{x}_{t-1}^{\star} +
        \vect{x}_t^{\star} ) } (\vect{x}_{t-1}^{\star} - \vect{x}_t^{\star}) \\
      & \hspace{2ex} + \frac{m \beta^2}{2} \Norm{\vect{x}_{t-1}^{\star} -\vect{x}_t^{\star}}^2. \\
                  \end{aligned}
  \end{equation*}
  We re-organize the the right-hand-side into the matrix form as
  \begin{equation*} \small
   \frac{1}{2} \trans{\vect{\delta}_{t}}
   \begin{pmatrix}
   m \beta^2 & -m \beta^2 & - \beta & m \beta^2 \\ - m \beta^2 & m \beta^2 & \beta & -m \beta^2  \\ - \beta & \beta & 0 & - \beta \\ m \beta^2 & -m \beta^2 & - \beta & m \beta^2 \end{pmatrix}  \vect{\delta}_{t}  - \beta \trans{\partial \ell (\vect{w}_t) } (\vect{x}_{t-1} - \vect{x}_t),
  \end{equation*}
                      with $\small \trans{\vect{\delta}_{t}} := ( \vect{x}_t -
  \vect{x}_t^{\star}, \vect{x}_{t-1} - \vect{x}_{t-1}^{\star}, \nabla
  f(\vect{y}_t) + \partial \ell (\vect{w}_t), \vect{x}_{t}^{\star}
  -\vect{x}_{t-1}^{\star} )$.

  (1b) Consider~\eqref{eq:Lsmooth} with $(\vect{x}, \vect{y}) \equiv
  (\vect{x}_{t+1}, \vect{y}_t)$. We leverage $\vect{x}_{t+1} =
  \vect{y}_{t} - \alpha \nabla f_{t}( \vect{y}_{t}) - \alpha \partial
  \ell (\vect{w}_{t} )$ and the distribution
  law,          resulting in
  \begin{equation*}
    \begin{aligned}
      & f(\vect{y}_t) - f(\vect{x}_{t+1}) \geq \alpha
      \trans{\nabla f(\vect{y}_t)} ( \nabla f(\vect{y}_t) +  \partial \ell (\vect{w}_t) ) \\
      & \hspace{26ex}
      - \frac{L \alpha^2}{2} \Norm{\nabla f(\vect{y}_t) +
        \partial \ell (\vect{w}_t) }^2,  \\
      & \hspace{2ex} = \frac{\alpha (2-L \alpha)}{2}  \Norm{\nabla f(\vect{y}_t) +  \partial \ell (\vect{w}_t) }^2 \\
      & \hspace{22ex} - \alpha \trans{\partial \ell (\vect{w}_t)} ( \nabla f(\vect{y}_t) +  \partial \ell (\vect{w}_t) ).  \\
    \end{aligned}
  \end{equation*}
  Now, we
    sum the terms involving $\partial \ell(\vect{w}_t) $ in the
    r.h.s.~of inequalities in (1a) and (1b),
  leverage~\eqref{eq:solsimlinear},
          and then apply the convexity of $\ell$, $\vect{x}_t \in \mathcal{X}$ and
  $\vect{w}_t=\vect{x}_{t+1} \in \mathcal{X}$, to obtain the following
  \begin{equation*}
    \begin{aligned} \tiny
      & -\beta \trans{\partial \ell (\vect{w}_t) }
      (\vect{x}_{t-1} - \vect{x}_t) - \alpha \trans{\partial \ell
        (\vect{w}_t)} ( \nabla f(\vect{y}_t) +  \partial \ell (\vect{w}_t) ) \\
      &  =- \trans{\partial \ell (\vect{w}_t) } (\vect{x}_{t} -
      \vect{w}_t)
      \geq \ell (\vect{w}_t)  - \ell (\vect{x}_t)  =0,
    \end{aligned}
  \end{equation*}
which results in $f(\vect{x}_t) -  f(\vect{x}_{t+1})  \geq
   \trans{\vect{\xi}_{t}}
  X_{1,t} \; \vect{\xi}_{t}$.

  Note that we have identified $(f, m, L, \alpha, \beta )$ with $(f_t,
  m_t, L_t, \alpha_t, \beta_t )$, and note that $\nabla
  f_t(\vect{x}^{\star}_t) + \partial \ell (\vect{x}^{\star}_t)=0$.

  (1c) Similarly, consider~\eqref{eq:Mconvex} with $(\vect{x},
  \vect{y}) \equiv (\vect{x}_t^{\star}, \vect{y}_t)$. From
  $\vect{y}_{t} = \; \vect{x}_{t} + \beta \; (\vect{x}_{t}
  -\vect{x}_{t-1})$ and the distributive law,
    \begin{equation*}
    \begin{aligned}
      &  f(\vect{x}_t^{\star}) -  f(\vect{y}_t)  \\
      & \geq \trans{\nabla f(\vect{y}_t)} (\vect{x}_{t}^{\star} -
      \vect{y}_t)
      + \frac{m }{2} \Norm{\vect{x}_{t}^{\star} -\vect{y}_t}^2,  \\
      & = \trans{ (\nabla f(\vect{y}_t) + \partial \ell (\vect{w}_t)
        )}
      (\vect{x}_{t}^{\star} - \vect{y}_{t} + \beta \vect{x}_{t}^{\star} - \beta \vect{x}_{t-1}^{\star}  ) \\
      & + \frac{m}{2} \Norm{ - (1+ \beta) (\vect{x}_{t}
        -\vect{x}_{t}^{\star} )
        + \beta ( \vect{x}_{t-1} -\vect{x}_{t-1}^{\star} ) }^2  \\
      & - \beta \trans{ (\nabla f(\vect{y}_t) + \partial \ell
        (\vect{w}_t) )}
      (\vect{x}_{t}^{\star} - \vect{x}_{t-1}^{\star})          -  \trans{\partial \ell (\vect{w}_t) } (\vect{x}_{t}^{\star} - \vect{y}_t) \\
      & - m \beta \trans{ [- (1+ \beta) (\vect{x}_{t}
        -\vect{x}_{t}^{\star} ) + \beta ( \vect{x}_{t-1}
        -\vect{x}_{t-1}^{\star} ) ] }
      (\vect{x}_{t}^{\star} - \vect{x}_{t-1}^{\star}) \\
      & + \frac{m \beta^2}{2} \Norm{\vect{x}_{t}^{\star} -\vect{x}_{t-1}^{\star}}^2, \\
      & = \frac{1}{2} \trans{\vect{\delta}_{t}}
   \begin{pmatrix}
     m (1+\beta)^2 & -\eta & - (1+\beta) & \eta \\ - \eta & m \beta^2
     & \beta & -m \beta^2 \\ - (1+\beta) & \beta & 0 & - \beta \\ \eta
     & -m \beta^2  & - \beta & m \beta^2
   \end{pmatrix}  \vect{\delta}_{t} \\
   & - \trans{\partial \ell (\vect{w}_t) } (\vect{x}_{t}^{\star} -
   \vect{y}_t) ,
    \end{aligned}
  \end{equation*}
  with $\eta=m (1+\beta)\beta $.  We add this inequality to that in
  (1b) and leverage
  \begin{equation*}
    \begin{aligned} \tiny
      & -  \trans{\partial \ell (\vect{w}_t) }
      (\vect{x}_{t}^{\star} - \vect{y}_t)
      - \alpha \trans{\partial \ell (\vect{w}_t)} ( \nabla
      f(\vect{y}_t) +
      \partial \ell (\vect{w}_t) ) \\
      & =- \trans{\partial \ell (\vect{w}_t) } (\vect{x}_{t}^{\star} -
      \vect{w}_t) \geq \ell (\vect{w}_t) - \ell (\vect{x}_t^{\star})
      =0,
    \end{aligned}
  \end{equation*}
  resulting in
        $  f(\vect{x}_t^{\star}) - f(\vect{x}_{t+1}) \geq
      \trans{\vect{\xi}_{t}}
      X_{2,t} \; \vect{\xi}_{t}. $
(2) Let us define the time varying function
\begin{equation*}
  V_t(\vect{z}_t):=  \trans{\begin{bmatrix} \vect{z}_t \\
      \vect{x}_{t}^{\star} -\vect{x}_{t-1}^{\star}
    \end{bmatrix}} H_t \begin{bmatrix} \vect{z}_t \\
    \vect{x}_{t}^{\star} -\vect{x}_{t-1}^{\star}
  \end{bmatrix},
\end{equation*}
where we take
\begin{equation*}
  H_t:= \frac{1}{2 \alpha_{t-1}}
  \begin{bmatrix}
    \delta_{t-1} \\ 1- \delta_{t-1} \\ \delta_{t-1}
  \end{bmatrix}
  \begin{bmatrix}
    \delta_{t-1}, & 1- \delta_{t-1}, & \delta_{t-1}
  \end{bmatrix},
\end{equation*}
with $\{ \alpha_t\}_t$ those in the solution system~\eqref{eq:solsim} and $\{ \delta_t \}_t$ the sequence of scalars which defines $\{\beta_t\}_t$. Now, verify
\begin{equation*}
  \begin{aligned}
    &V_{t+1}(\vect{z}_{t+1}) - \frac{\alpha_{t-1}}{\alpha_{t}}
    V_t(\vect{z}_t) = \trans{\vect{\xi}_{t}} J_{t} \vect{\xi}_{t},
  \end{aligned}
\end{equation*}
where $\vect{\xi}_{t} := (\vect{z}_t , \; \vect{u}_t, \; \vect{v}_{t})$, which are those define~\eqref{eq:solerror}, resulting in
$\vect{\xi}_{t} := (\vect{x}_{t} -\vect{x}_{t}^{\star}, \;
\vect{x}_{t-1} -\vect{x}_{t-1}^{\star}, \nabla f_t(\vect{y}_t)
+ \partial \ell (\vect{w}_t), \vect{x}_{t}^{\star} -\vect{x}_{t-1}^{\star}, \;
\vect{x}_{t+1}^{\star} -\vect{x}_{t}^{\star} )$
and
  \begin{equation*}
  J_{t} = \frac{1}{2 \alpha_{t}}
  \begin{pmatrix}
    0 & 0 & - \alpha_t \delta_{t} \delta_{t-1} & - \delta_{t-1}  & 0\\
    0 & 0 & \alpha_t \beta_{t} \delta_{t}^2  & \beta_{t} \delta_{t} & 0 \\
    - \alpha_t \delta_{t}\delta_{t-1} & \alpha_t \beta_{t}
    \delta_{t}^2 & \alpha_t^2 \delta_{t}^2 & - \alpha_t \beta_{t}
    \delta_{t}^2
    & 0 \\
    - \delta_{t-1} & \beta_{t} \delta_{t} & -
    \alpha_t \beta_{t} \delta_{t}^2 & 1-  2 \delta_{t-1} & 0 \\
    0 & 0 & 0 & 0 & 0 \end{pmatrix}.
\end{equation*}
Let us compute
\begin{equation*} \tiny
\begin{aligned}
  & M_t:= \delta_{t-1}^2 X_{1,t}
  + \delta_t X_{2,t}  \\
  & = \frac{1}{2} \begin{pmatrix} m_t(\delta_{t}^2 -1) & -m_{t}
    \beta_{t} \delta_{t}\delta_{t-1}
    & -\delta_{t}\delta_{t-1}  & m_{t} \beta_{t} \delta_{t}\delta_{t-1} & 0\\
    -m_{t} \beta_{t} \delta_{t}\delta_{t-1} & m_t
    \beta_{t}^2\delta_{t}^2
    & \beta_{t}\delta_{t}^2 & -m_{t} \beta_{t}^2\delta_{t}^2 & 0 \\
    - \delta_{t}\delta_{t-1} & \beta_{t}\delta_{t}^2 &
    \alpha_{t}(2-L_t \alpha_{t}) \delta_{t}^2
    & - \beta_{t}\delta_{t}^2 & 0 \\
    m_{t} \beta_{t} \delta_{t}\delta_{t-1}
    & -m_{t} \beta_{t}^2\delta_{t}^2 & - \beta_{t}\delta_{t}^2 & m_{t} \beta_{t}^2 \delta_{t}^2 & 0 \\
    0 & 0 & 0 & 0 & 0 \end{pmatrix},
\end{aligned}
\end{equation*}
and then achieve
\begin{equation*}
  \begin{aligned}
    \trans{\vect{\xi}_{t}  } (J_{t} - M_{t})  \vect{\xi}_{t}
    &= \trans{\begin{bmatrix} \vect{z}_{t} \\ \vect{x}_{t}^{\star}
        -\vect{x}_{t-1}^{\star}
      \end{bmatrix} } N_{1,t}  \begin{bmatrix} \vect{z}_{t}
      \\ \vect{x}_{t}^{\star} -\vect{x}_{t-1}^{\star} \end{bmatrix} \\
    &  + \trans{\begin{bmatrix} \vect{z}_{t} \\ \vect{x}_{t}^{\star}
        -\vect{x}_{t-1}^{\star}
      \end{bmatrix} } N_{2,t}
    \begin{bmatrix} \vect{z}_{t} \\ \vect{x}_{t}^{\star}
      -\vect{x}_{t-1}^{\star}
    \end{bmatrix}  \\
    & - \alpha_{t}(1-L_{t} \alpha_{t})
    \trans{\vect{u}_{t}}\vect{u}_{t},
  \end{aligned}
\end{equation*}
with, for each $t\geq 1$,
\begin{equation*}
\begin{aligned}
  N_{1,t} &:= \frac{1}{2}
  \begin{pmatrix}
  -  m_t(\delta_{t}^2 -1)   & m_{t} \beta_{t} \delta_{t}\delta_{t-1} & -m_{t} \beta_{t} \delta_{t}\delta_{t-1} \\
    m_{t} \beta_{t} \delta_{t}\delta_{t-1}  & -m_{t} \beta_{t}^2\delta_{t}^2 & m_{t} \beta_{t}^2\delta_{t}^2 \\
    -m_{t} \beta_{t} \delta_{t}\delta_{t-1} & m_{t} \beta_{t}^2\delta_{t}^2 & -m_{t} \beta_{t}^2\delta_{t}^2
  \end{pmatrix},   \\
  & \cong \frac{m_{t}}{2}
  \begin{pmatrix}
  -  (\delta_{t}^2 -1)   &  \beta_{t} \delta_{t}\delta_{t-1} & 0 \\
     \beta_{t} \delta_{t}\delta_{t-1}  & - \beta_{t}^2\delta_{t}^2 & 0 \\
    0 & 0 & 0
  \end{pmatrix} \preceq 0,
\end{aligned}
\end{equation*}
and, using the fact that $\delta_{t}> (t+1)/2$, $\forall \; t \geq 0$, we have
\begin{equation*}
  N_{2,t}:= \frac{1}{2} \begin{pmatrix}
 0 & 0 & - \delta_{t-1} \\
  0 & 0 & \beta_{t}\delta_{t} \\
    -\delta_{t-1} &  \beta_{t}\delta_{t} & 1-2 \delta_{t-1} &
\end{pmatrix}  \preceq 0.
\end{equation*}
Then, if we select $\alpha_{t} \leq 1/L_{t}$, it results in
\begin{equation*}
    \trans{\vect{\xi}_{t}  } (J_{t} - M_{t})  \vect{\xi}_{t}  \leq 0.
\end{equation*}
We rewrite it as
\begin{equation*}
  \begin{aligned}
    &  V_{t+1}(\vect{z}_{t+1}) -
    \frac{\alpha_{t-1}}{\alpha_{t}} V_t(\vect{z}_t) \leq  \trans{\vect{\xi}_{t}  }  M_{t}  \vect{\xi}_{t} , \\
    & \leq   \delta_{t-1}^2 ( f_t(\vect{x}_t) - f_t(\vect{x}_{t+1}) )
    +
    \delta_t ( f_t(\vect{x}_t^{\star})-  f_t(\vect{x}_{t+1})), \\
    & = - \delta_{t}^2( f_t(\vect{x}_{t+1}) - f_t(\vect{x}_t^{\star})
    ) + \delta_{t-1}^2 ( f_t(\vect{x}_t) - f_t(\vect{x}_t^{\star}) ).
  \end{aligned}
\end{equation*}
As $f_t$ being locally Lipschitz in $t$, there exists a non-negative
function $h(\vect{x})$ such that
\begin{equation*}
  f_{t+1}(\vect{x}_{t+1}) - f_t(\vect{x}_{t+1}) \leq h(\vect{x}_{t+1}),
\end{equation*}
resulting in
\begin{equation*}
  \begin{aligned}
&  V_{t+1}(\vect{z}_{t+1}) -
\frac{\alpha_{t-1}}{\alpha_{t}} V_t(\vect{z}_t)  \\
& \leq -  \delta_{t}^2( f_{t+1}(\vect{x}_{t+1}) -
f_{t+1}(\vect{x}_{t+1}^{\star})  )  +   \delta_{t-1}^2 ( f_t(\vect{x}_t) - f_t(\vect{x}_t^{\star}) ) \\
& -  \delta_{t}^2( f_{t+1}(\vect{x}_{t+1}^{\star}) -
f_t(\vect{x}_t^{\star})  ) + \delta_{t}^2 h(\vect{x}_{t+1}), \;
\forall\, t
  \end{aligned}
\end{equation*}
Summing up the above set of inequalities over the moving horizon
window $t \in \mathcal{T}=\untilinterval{t-1}{t-T}$, where $T=\min
\{t-1, T_0 \}$ with some $T_0 \in \integerpositive$, we obtain
\begin{equation*}
  \begin{aligned}
    & V_{t}(\vect{z}_{t}) + \sum_{k\in \mathcal{T}} (1-
    \frac{\alpha_{k-1}}
    {\alpha_{k}}) V_k(\vect{z}_k) -  V_{t-T}(\vect{z}_{t-T}) \\
    & \leq -  \delta_{t-1}^2( f_{t}(\vect{x}_{t}) - f_{t}(\vect{x}_{t}^{\star})  ) \\
    & \quad  +   \delta_{t-T-1}^2 ( f_{t-T}(\vect{x}_{t-T}) - f_{t-T}(\vect{x}_{t-T}^{\star}) ) \\
    & \quad - \sum_{k\in \mathcal{T}} \delta_{k}^2(
    f_{k+1}(\vect{x}_{k+1}^{\star}) - f_k(\vect{x}_k^{\star}) ) +
    \sum_{k \in \mathcal{T}} \delta_{k}^2 h(\vect{x}_{k+1}).
  \end{aligned}
\end{equation*}
Let us denote by $G_t$, $K_t$, and $F_t$, respectively, the horizon
accumulated potential, the bound of the locally Lipschitz function
$h$, and the variation bound of the optimal objective values. That is,
\begin{equation*}
  \begin{aligned}
    G_{t}:=& V_{t-T}(\vect{z}_{t-T})
    - V_{t}(\vect{z}_{t}) - \sum_{k\in \mathcal{T} } (1-  \frac{\alpha_{k-1}}{\alpha_{k}}) V_k(\vect{z}_k)   , \\
    K_{t}:=& \max_{k \in \mathcal{T}} \left\{ h(\vect{x}_{k+1}) \right\} , \\
    F_{t}:=& \max_{k \in \in \mathcal{T}} \left\{ |
      f_{k+1}(\vect{x}_{k+1}^{\star}) - f_k(\vect{x}_k^{\star}) |
    \right\}.
  \end{aligned}
\end{equation*}
Then, using the fact that \textbf{(1)} $\delta_{t-1} \geq (t+2)/2$,
for all $t\geq 0$; \textbf{(2)} $\delta_{t-T-1} \leq t-T-1 +
\delta_{0}$ with $\delta_{0}=(1+ \sqrt{5})/2$, and \textbf{(3)}
$\delta_{t}$ is monotonically increasing, we have
\begin{equation*}
  \begin{aligned}
&  f_{t}(\vect{x}_{t}) - f_{t}(\vect{x}_{t}^{\star})  \leq  \frac{4G_{t}}{(t+2)^2} + T F_{t} + T K_{t} \\
  & \hspace{4ex} +   \frac{4(t-T-1+\delta_{0})^2 } {(t+2)^2} ( f_{t-T}(\vect{x}_{t-T}) - f_{t-T}(\vect{x}_{t-T}^{\star}) ) .
  \end{aligned}
\end{equation*}
Note that, when $t \leq T_0+1$, we have $T=t-1$. This gives
\begin{equation*}
  \begin{aligned}
    & f_{t}(\vect{x}_{t}) - f_{t}(\vect{x}_{t}^{\star}) \leq
    \frac{4G_{t}}{(t+2)^2} +
    (t-1) F_{t} + (t-1) K_{t} \\
    & \hspace{20ex} +     \frac{4\delta_{0}^2 } {(t+2)^2} ( f_{1}(\vect{x}_{1}) -
    f_{1}(\vect{x}_{1}^{\star}) ) . \hfill \textrm{\qed}
  \end{aligned}
\end{equation*}
  \begin{comment}
\end{comment}
\end{proof}

\vspace*{-2ex}
\subsection{Proofs of lemmas and theorems}
\vspace*{-2ex}
 \begin{proof}{Lemma~\ref{lemma:Reform}}
            By the definition of the ambiguity set, we have that, for
   any distribution $\probQ \in
   \mathcal{P}_{t+1}(\vect{\alpha},\vect{u})$
 \begin{equation*}
   \subscr{d}{W}(\probQ,\hat{\probP}_{t+1|t}) \leq  \hat{\epsilon}    ,
 \end{equation*}
 which, by Kantorovich-Rubinstein Theorem, is equivalent to
 \begin{equation*}
   \int_{\mathcal{Z}} {h(\vect{x})\probQ(d{\vect{x}})}
   -\int_{\mathcal{Z}} {h(\vect{x})\hat{\probP}_{t+1|t}(d{\vect{x}})} \leq  \hat{\epsilon}    , \quad \forall h \in
     \mathcal{L},
 \end{equation*}
 where $\mathcal{L}$ is the set of functions with Lipschitz constant $1$ and $\mathcal{Z}$ is the support of the random variable $\vect{x}$. For a given $\vect{u}$, let us select $h$ to be
 \begin{equation*}
 h(\vect{x}):=\frac{\ell(\vect{u}, \vect{x})}{L(\vect{u})},
 \end{equation*}
 where $L$ is the positive Lipschitz function as in~Assumption~\ref{assump:cvxloss}. Substituting $h$ to the above inequality, we have
 \begin{equation*}
   \begin{aligned}
&    \int_{\mathcal{Z}} {\ell(\vect{u}, \vect{x})\probQ(d{\vect{x}})}
   -\int_{\mathcal{Z}} {\ell(\vect{u}, \vect{x})\hat{\probP}_{t+1|t}(d{\vect{x}})}
       \leq  \hat{\epsilon}    L(\vect{u}),
   \end{aligned}
 \end{equation*}
 or equivalently
 \begin{equation*}
   \begin{aligned}
     \mathbb{E}_{\probQ} \left[ \ell(\vect{u}, \vect{x})\right]   \leq \mathbb{E}_{\hat{\probP}_{t+1|t}(\vect{\alpha},\vect{u})} \left[ \ell(\vect{u}, \vect{x})\right] + \hat{\epsilon}      L(\vect{u}).
   \end{aligned}
 \end{equation*}
 As the inequality holds for every $\probQ \in \mathcal{P}_{t+1}$, therefore
 \begin{equation*}
   \begin{aligned}
 & \sup\limits_{\probQ \in \mathcal{P}_{t+1}(\vect{\alpha},\vect{u})}  \; \mathbb{E}_{\probQ} \left[ \ell(\vect{u}, \vect{x})\right]  \\
 & \hspace{8ex} \leq \mathbb{E}_{\hat{\probP}_{t+1|t}(\vect{\alpha},\vect{u})} \left[ \ell(\vect{u}, \vect{x})\right] + \hat{\epsilon}(t,T,\beta,\vect{\alpha},\vect{u}) L(\vect{u}). \hfill \textrm{\qed}
   \end{aligned}
 \end{equation*}
                  \end{proof}

 \begin{proof}{Theorem~\ref{thm:Reform}}
   We show this by constructing a distribution in the ambiguity
   set. By Assumption~\ref{assump:gradient} on convex and
   gradient-accessible functions, there exists an
   index    $j \in \mathcal{T}$ such that the derivative $\nabla_{\vect{x}}
   \ell(\vect{u}, \vect{x})$ at $(\vect{u}, \bar{\vect{x}}^{(j)})$,
   $\bar{\vect{x}}^{(j)}:=\sum_{i=1}^{p} \alpha_i
   \xi_{j}^{(i)}(\vect{\alpha},\vect{u})$, satisfies
 \begin{equation*}
  \Norm{\nabla_{\vect{x}} \ell(\vect{u}, \bar{\vect{x}}^{(j)})}  =L(\vect{u}).
 \end{equation*}
 Now using this index $j$, we construct a parameterized distribution
 as follows
 \begin{equation*}
   \probQ(\Delta \vect{x}) = \frac{1}{T} \sum\limits_{k\in \mathcal{T}, k \ne j} \delta_{\{
     \sum\limits_{i=1}^{p} \alpha_i \xi_k^{(i)}(\vect{\alpha},\vect{u}) \}} + \frac{1}{T} \delta_{\{
     \bar{\vect{x}}^{(j)} +\Delta \vect{x} \}},
\end{equation*}
where $\Delta \vect{x} \in \real^n$ with $\Norm{\Delta \vect{x}} \leq
T \hat{\epsilon} $. By the definition of the ambiguity set and, since the support of
the distribution $\probP$ is $\Xi_{t+1}=\real^n$, we have $\probQ(\Delta
\vect{x}) \in \mathcal{P}_{t+1}(\vect{\alpha}, {\vect{u}} )$.

Next, we quantify the lower bound of the following term  \begin{equation*}
   \begin{aligned}
     & \mathbb{E}_{\probQ(\Delta \vect{x})} \left[ \ell(\vect{u},
       \vect{x})\right]
     -  \mathbb{E}_{\hat{\probP}_{t+1|t}(\vect{\alpha},\vect{u})} \left[ \ell(\vect{u}, \vect{x})\right] \\
     & \hspace{3cm} = \frac{1}{T} \left( \ell(\vect{u},
       \bar{\vect{x}}^{(j)} + \Delta \vect{x} ) - \ell(\vect{u},
       \bar{\vect{x}}^{(j)} ) \right).
   \end{aligned}
 \end{equation*}
 By Assumption~\ref{assump:gradient} on the convexity of $\ell$ on
 $\vect{x}$, we have
 \begin{equation*}
   \begin{aligned}
     \ell(\vect{u}, \bar{\vect{x}}^{(j)} + \Delta \vect{x} ) & -
   \ell(\vect{u}, \bar{\vect{x}}^{(j)} )                \geq \trans{\nabla_{\vect{x}}\ell(\vect{u}, \bar{\vect{x}}^{(j)})} \Delta \vect{x}. \\
      \end{aligned}
 \end{equation*}
Then, by selecting
\begin{equation*}
  \Delta \vect{x}:= \frac{T \hat{\epsilon}   \nabla_{\vect{x}} \ell(\vect{u}, \bar{\vect{x}}^{(j)})
  }{\Norm{\nabla_{\vect{x}} \ell(\vect{u}, \bar{\vect{x}}^{(j)})} } ,
\end{equation*}
we have
\begin{equation*}
  \trans{\nabla_{\vect{x}}\ell(\vect{u}, \bar{\vect{x}}^{(j)})} \Delta \vect{x} =T  \hat{\epsilon}   L(\vect{u}).
\end{equation*}
These bounds result in
\begin{equation*}
  \begin{aligned}
& \mathbb{E}_{\probQ(\Delta \vect{x})} \left[ \ell(\vect{u}, \vect{x})\right] -  \mathbb{E}_{\hat{\probP}_{t+1|t}(\vect{\alpha},\vect{u})} \left[ \ell(\vect{u}, \vect{x})\right]
\geq  \hat{\epsilon}  L(\vect{u}).
  \end{aligned}
\end{equation*}
As $\probQ(\Delta \vect{x}) \in \mathcal{P}_{t+1}(\vect{\alpha},
{\vect{u}} )$, therefore
 \begin{equation*}
   \begin{aligned}
     & \sup\limits_{\probQ \in
       \mathcal{P}_{t+1}(\vect{\alpha},\vect{u})} \;
     \mathbb{E}_{\probQ} \left[ \ell(\vect{u}, \vect{x})\right]
  \geq \mathbb{E}_{\hat{\probP}_{t+1|t}(\vect{\alpha},\vect{u})} \left[
   \ell(\vect{u}, \vect{x})\right] +
 \hat{\epsilon}  L(\vect{u}).
   \end{aligned}
 \end{equation*}
 Finally, with Assumption~\ref{assump:cvxloss} on Lipschitz loss functions and
 Lemma~\ref{lemma:Reform} on an upper bound of~\eqref{eq:P1}, we equivalently write Problem~\eqref{eq:P1}
 as
 \begin{equation*}
   \begin{aligned}
     \inf\limits_{\vect{u} \in \mathcal{U}}
     \mathbb{E}_{\hat{\probP}_{t+1|t}(\vect{\alpha},\vect{u})} \left[
       \ell(\vect{u}, \vect{x})\right] +
     \hat{\epsilon}(t,T,\beta,\vect{\alpha},\vect{u}) L(\vect{u}),
   \end{aligned}
 \end{equation*}
 which is the Problem~\eqref{eq:P2}.
 $\hfill$ \qed \end{proof}

 \begin{proof}{Lemma~\ref{lemma:Lip}}
   This is the direct application of the definition of the local
   Lipschitz condition.  $\hfill$ \qed \end{proof}

 \begin{proof}{Lemma~\ref{lemma:moreau}}
   First, we have
 \begin{equation*}
   F_{\mu}(\vect{u})
   \leq F(\vect{u}) + \frac{1}{2\mu} \Norm{\vect{u}- \vect{u} }^2
   = F(\vect{u}), \; \forall \; \vect{u} \in \mathcal{U}.
 \end{equation*}
 Then, we compute
 \begin{equation*}
   \begin{aligned}
     F(\vect{u}) -F_{\mu}(\vect{u}) &= \sup\limits_{\vect{z} \in
       \mathcal{U}} \left\{ F(\vect{u}) - F(\vect{z}) -
       \frac{1}{2\mu} \Norm{\vect{z}- \vect{u} }^2   \right\} , \\
     &\leq \sup\limits_{\vect{z} \in \mathcal{U}} \left\{
       \trans{\vect{g}(\vect{u}) }
       (\vect{u} -\vect{z}) - \frac{1}{2\mu} \Norm{\vect{z}- \vect{u} }^2   \right\}, \\
    & \leq \sup\limits_{\vect{z} } \left\{
       \trans{\vect{g}(\vect{u}) }
       (\vect{u} -\vect{z}) - \frac{1}{2\mu} \Norm{\vect{z}- \vect{u} }^2   \right\},  \\
     &\leq \frac{\mu}{2} \trans{\vect{g}(\vect{u}) } \vect{g}(\vect{u})
     \leq \frac{D}{2} \mu,
   \end{aligned}
 \end{equation*}
 where the equality comes from the definition of $F_{\mu}(\vect{u})$, the first inequality leverages the convexity of $F$, the second one relaxes the constraint set, the third one applies the achieved optimizer $\vect{z}^{\star}=\vect{u}-\mu \vect{g}(\vect{u})$, and the last one is from the boundedness of subgradients.

Further, given $F$ as described, it is
 well-known (see, e.g.,~\cite[Proposition 12.15]{HHB-PLC:11} for
 details) that $F_{\mu}$ is convex and continuously differentiable
 where its gradient $\nabla F_{\mu}$ is Lipschitz continuous with
 constant $1/\mu$. In addition, the minimizer
 $\vect{z}^{\star}(\vect{u})$ of $F_{\mu}$ is achievable and unique,
 resulting in an explicit gradient expression of $F_{\mu}$ as follows
 \begin{equation*}
  \nabla F_{\mu} (\vect{u}) =  \frac{1}{\mu}(\vect{u} -  \vect{z}^{\star}(\vect{u})  ).
 \end{equation*}
   In addition, we claim that, if $F$ is $M$-strongly convex, $F_{\mu}$
 is ${M}/{(1+ \mu M)}$-strongly convex, following~\cite[Theorem
 2.2]{CL-CS:97}.  Finally, we equivalently write the minimization
 problem as follows \begin{equation*}
  \begin{aligned}
    \min_{\vect{u} \in \mathcal{U}} F_{\mu}(\vect{u}) &=
    \min_{\vect{u} \in \mathcal{U}} \min\limits_{\vect{z} \in
      \mathcal{U}}
    \left\{ F(\vect{z}) + \frac{1}{2\mu} \Norm{\vect{z}- \vect{u} }^2   \right\} \\
    &= \min_{\vect{z} \in \mathcal{U}}
    \min\limits_{\vect{u} \in \mathcal{U}} \left\{ F(\vect{z}) + \frac{1}{2\mu} \Norm{\vect{z}- \vect{u} }^2   \right\} \\
    &= \min_{\vect{z} \in \mathcal{U}}  F(\vect{z}), \\
  \end{aligned}
\end{equation*}
where the first line applies the achievability of the minimizer of the
problem that defines $F_{\mu}$, the second switches the
minimization operators, the third applies the fact that $\vect{u}=\vect{z}$ solves the
inner problem. This concludes that any $\vect{u}$ that
minimizes $F_{\mu}$ also minimizes $F$, and vice versa.  $\hfill$
\qed \end{proof}

   \begin{proof}{Theorem~\ref{thm:regret}}
   Let us consider the solution system~\eqref{eq:oagsol}. At each time $t$, let us select $\varepsilon:=\varepsilon_t=1/ \Lip(G_{\mu})$, or equivalently, $\mu/b$ with $b = \max_{k\in\mathcal{T}} b_k $. Let $\eta_t$ satisfy
   \begin{equation*}
     \delta_{-1}=1, \; \delta_{t+1}:=\frac{1+ \sqrt{1+4\delta_t^2}}{2}, \; \eta_t:=\frac{\delta_{t-1}-1}{\delta_t}.
   \end{equation*}
   Then, by Theorem~\ref{thm:solsim} with $t \geq 2$, the following holds
   \begin{equation}
     \begin{aligned}
   G_{\mu}(t,\vect{u}_{t} ) - G_{\mu}(t,{\vect{u}}_{t}^{\star} )
    \leq & \frac{4W_{t}}{(t+2)^2} + T F_{t},
  \end{aligned} \label{eq:boundG}
   \end{equation}
where ${\vect{u}}_{t}^{\star}$ is a solution to~\eqref{eq:P2smooth}, $T=\min \{t-1, T_0 \}$ with some horizon parameter $T_0 \in \integerpositive$. Notice that $T_0$ is the length of the used historical data whenever such data are available. The time-varying parameter $W_{t}$ depends on the initial objective discrepancy and the accumulated energy storage in the considered time horizon $\mathcal{T}$, and $F_t$ is the variation bound of the optimal objective values in $\mathcal{T}$. Specifically, we have
\begin{equation*}
  \begin{aligned}
    F_{t}=& \max_{k\in \mathcal{T}} \left\{ | G_{\mu}(k+1,{\vect{u}}_{k+1}^{\star} ) - G_{\mu}(k,{\vect{u}}_{k}^{\star} )  |    \right\} + \bar{L} ,
  \end{aligned}
\end{equation*}
with $\bar{L}$ the variation bound of $G_{\mu}(t,\vect{u}_{t} )$ w.r.t. time $t$. Let us consider the storage function  $V_t(\vect{z}_t):=  \trans{\vect{z}_{t}} H_t \vect{z}_{t}$, where $\vect{z}_{t}:=(\vect{u}_{t} -\vect{u}_{t}^{\star}, \; \vect{u}_{t-1} -\vect{u}_{t-1}^{\star} , \; \vect{u}_{t}^{\star} -\vect{u}_{t-1}^{\star} ) $ and
\begin{equation*}
  H_t:= \frac{1}{2 \varepsilon_{t-1}} \begin{bmatrix} \delta_{t-1} \\ 1- \delta_{t-1} \\ \delta_{t-1}
  \end{bmatrix} \begin{bmatrix} \delta_{t-1} & 1- \delta_{t-1} & \delta_{t-1}
  \end{bmatrix} \succeq 0.
\end{equation*}
Then we have
\begin{equation*}
  \begin{aligned}
  &  W_{t}= V_{t-T}(\vect{z}_{t-T}) - V_{t}(\vect{z}_{t}) - \sum_{k\in \mathcal{T} } (1-  \frac{\varepsilon_{k-1}}{\varepsilon_{k}}) V_k(\vect{z}_k) \\
    & \hspace{8ex} + (t-T-1+\delta_{0})^2  ( f_{t-T}(\vect{x}_{t-T}) - f_{t-T}(\vect{x}_{t-T}^{\star}) ),
  \end{aligned}
\end{equation*}
where the first two term is the energy decrease in the horizon $\mathcal{T}$; the third sum term indicates the instantaneous energy change, which depends on the online, estimated Lipschitz constant; the last term depends on the goodness of the initial decision at the beginning of the current $\mathcal{T}$. Note how the selection of $\varepsilon_t$ and $T$ affect $W_t$ (or $G_t$ in Theorem~\ref{thm:solsim}). In the most conservative scenario, we select $\varepsilon_t:= \min \{ \varepsilon_{t-1}, \mu/b_t  \}$ and $T_0=\infty$, which results in a constant upper bound of $W_{t}$ as follows
\begin{equation*}
  \begin{aligned}
  &  W_{t} \leq V_{1}(\vect{z}_{1})  + \delta_{0}^2  ( f_{1}(\vect{x}_{1}) - f_{1}(\vect{x}_{1}^{\star}) ),
  \end{aligned}
\end{equation*}
therefore, in this case, the bound~\eqref{eq:boundG} essentially depends on the growing term $(t-1)F_t$. A less conservative way is to use moving horizon strategy, with $\varepsilon_t:= \min \{ \varepsilon_{t-1}, \mu/b_t  \}$ but a finite $T_0$. Then, as $t$ is sufficiently large, we have
\begin{equation*}
  \begin{aligned}
  &  W_{t} \leq V_{t-T}(\vect{z}_{t-T})  + t^2  ( f_{t-T}(\vect{x}_{t-T}) - f_{t-T}(\vect{x}_{t-T}^{\star}) ),
  \end{aligned}
\end{equation*}
where, in this case, the bound~\eqref{eq:boundG} essentially depends on $F_t$ and $f_{t-T}(\vect{x}_{t-T}) - f_{t-T}(\vect{x}_{t-T}^{\star})$.

   Now, we consider for any $t\geq 2$. By Definition~\ref{def:smooth}, there exists a constant $a>0$ such that
   \begin{equation*}
     G(t,{\vect{u}}_{t} ) -a \mu \leq G_{\mu}(t,{\vect{u}}_{t} ),    \end{equation*}
   and by Lemma~\ref{lemma:moreau}, we have that $\vect{u}_{t}^{\star}$ is a minimizer of~\eqref{eq:P2smooth} if and only if it is that of~\eqref{eq:P2}, and
   \begin{equation*}
     G_{\mu}(t,{\vect{u}}_{t}^{\star} ) \equiv G(t,{\vect{u}}_{t}^{\star} ).    \end{equation*}
      This results in
   \begin{equation}
     \begin{aligned}
   G(t,\vect{u}_{t} ) - G(t,{\vect{u}}_{t}^{\star} )
    \leq & \frac{4W_{t}}{(t+2)^2} + T F_{t} + a \mu ,
     \end{aligned}
     \label{eq:bound_P2}
   \end{equation}
with an equivalent expression of $F_{t}$ as
\begin{equation*}
  \begin{aligned}
    F_{t}=& \max_{k\in \mathcal{T}} \left\{ | G_{k+1}^{\star} - G_{k}^{\star}   |    \right\} + \bar{L} ,
  \end{aligned}
\end{equation*}
where $G_k^{\star}:= G(k,{\vect{u}}_{k}^{\star} )$ is the optimal objective value of~\eqref{eq:P2} or, later we see, equivalent to that of~\eqref{eq:P1}.

   Next, by Theorem~\ref{thm:Reform} on the equivalence of~\eqref{eq:P1} and~\eqref{eq:P2}, $\vect{u}_{t}^{\star}$ is a minimizer of~\eqref{eq:P2} if and only if it is also that of~\eqref{eq:P1}, and
   \begin{equation} \label{eq:sub_1}
    G(t,{\vect{u}}_{t}^{\star} ) \equiv \sup\limits_{\probQ \in \mathcal{P}_{t+1}(\vect{\alpha},\vect{u}_{t}^{\star})}  \; \mathbb{E}_{\probQ} \left[ \ell(\vect{u}_{t}^{\star}, \vect{x})\right].
   \end{equation}
                           Further, as in Section~\ref{sec:tractable}, we claim that Problem~\eqref{eq:P1} provides a probabilistic bound for the objective of~\eqref{eq:P}, resulting in
   \begin{align}
       &  {\Prob}\left( \probP_{t+1|t} \in
      \mathcal{P}_{t+1}     \right) \geq \rho(t), \textrm{ or equivalently, }  \label{eq:sub_intermediate1} \\
       & \Prob \left(  \mathbb{E}_{\probP_{t+1|t}} \left[ \ell(\vect{u}_t, \vect{x})\right] \leq G(t,\vect{u}_{t} )  \right) \geq \rho(t), \label{eq:sub_intermediate2}
   \end{align}
   with $\rho(t)$ as in Theorem~\ref{thm:ambiguityset}.
                 Then by~\eqref{eq:sub_intermediate1}, we know that $\probP_{t+1|t} \in \mathcal{P}_{t+1}$ if and only if $\subscr{d}{W}(\probP_{t+1|t},\hat{\probP}_{t+1|t}) \leq  \hat{\epsilon}$ where $\hat{\epsilon}$ is selected as in Theorem~\ref{thm:ambiguityset}.
      Further, since $\subscr{d}{W}$ is a metric, for any $\probQ \in \mathcal{P}_{t+1}$, we claim
   \begin{equation*}
   \begin{aligned}
   \subscr{d}{W}(\probQ, \probP_{t+1|t})  & \leq \subscr{d}{W}(\probQ, \hat{\probP}_{t+1|t}) + \subscr{d}{W}(\probP_{t+1|t}, \hat{\probP}_{t+1|t}), \\
   & \leq \hat{\epsilon}+ \hat{\epsilon} \leq 2 \hat{\epsilon}.
   \end{aligned}
   \end{equation*}
   By Assumption~\ref{assump:cvxloss} and the same proof procedure of Lemma~\ref{lemma:Reform} on the above inequality, we have, for every $\vect{u}$, the following:
   \begin{equation*}
     \sup\limits_{\probQ \in \mathcal{P}_{t+1}(\vect{\alpha},\vect{u})}  \; \mathbb{E}_{\probQ} \left[ \ell(\vect{u}, \vect{x})\right] \leq \mathbb{E}_{\probP_{t+1|t}} \left[ \ell({\vect{u}}, \vect{x})\right]   + 2 L({\vect{u}}) \hat{\epsilon}.
   \end{equation*}
By taking $\vect{u}:=\vect{u}_{t}^{\star}$ and using~\eqref{eq:sub_1}, we have
   \begin{equation} \label{eq:sub_2}
    G(t,{\vect{u}}_{t}^{\star} ) \leq \mathbb{E}_{\probP_{t+1|t}} \left[ \ell({\vect{u}_{t}^{\star}}, \vect{x})\right]   + 2L({\vect{u}_{t}^{\star}}) \hat{\epsilon}.
   \end{equation}
   We combine the inequality~\eqref{eq:bound_P2},~\eqref{eq:sub_intermediate2} and~\eqref{eq:sub_2}, resulting in
   \begin{equation*}
     \begin{aligned}
            &  \mathbb{E}_{\probP_{t+1|t}} \left[ \ell(\vect{u}_t, \vect{x})\right] - \mathbb{E}_{\probP_{t+1|t}} \left[ \ell({\vect{u}}_{t}^{\star}, \vect{x})\right]
                \\
   &    \hspace{2cm} \leq \frac{4W_{t}}{(t+2)^2} + T F_t + a \mu + 2 L({\vect{u}}_{t}^{\star}) \hat{\epsilon},
                 \end{aligned}
   \end{equation*}
   with the probability at least $\rho(t)$, holds for any $t \geq 2$.
      Furthermore, if all historical data are assimilated for the decision $\vect{u}_{t}$, i.e., we select $T_0=\infty$ with $\varepsilon_t:= \min \{ \varepsilon_{t-1}, \mu/b_t  \}$, then, the term $W_t$ is upper bound by a constant and, the radius $\hat{\epsilon}$ asymptotically goes to zero due to the selection as in~\cite[Section IV]{DL-DF-SM:20-lcss}. Consequently, this results in
\begin{equation*}
  \begin{aligned}
    & \liminf\limits_{t \rightarrow \infty} \Prob \left( R_t \leq T F_{t} + a \mu \right) \geq 1-\beta.   \end{aligned} \hfill \textrm{\qed}
\end{equation*}
  \end{proof}

\vspace*{-2ex}

\bibliographystyle{ieeetr}
\bibliography{bib/alias,bib/SMD-add,bib/JC,bib/SM}

\begin{IEEEbiography}[{\includegraphics[width=1in,height=1.25in,clip,keepaspectratio]{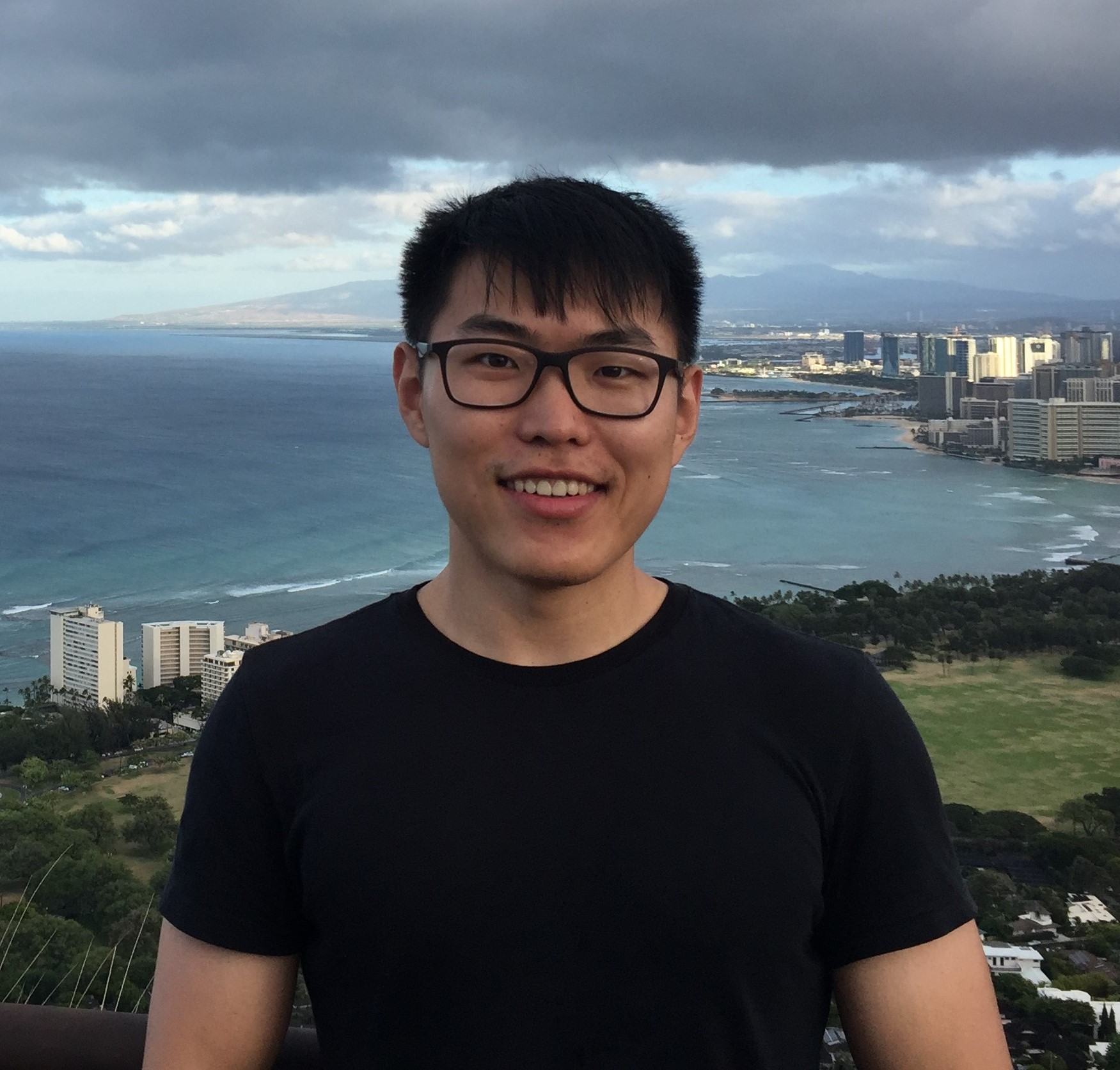}}]{Dan
    Li} receieved the B.E. degree in automation from the Zhejiang
  University, Hangzhou, China, in 2013, the M.Sc. degree in chemical
  engineering from Queen's University, Kingston, Canada, in 2016. He
  is currently a Ph.D. student at University of California, San Diego,
  CA, USA. His current research interests include data-driven systems
  and optimization, dynamical systems and control, optimization
  algorithms, applied computational methods, and stohastic systems. He
  received Outstanding Student Award from Zhejiang University in 2012,
  Graduate Student Award from Queen's University in 2014, and
  Fellowship Award from University of California, San Diego, in 2016.
\end{IEEEbiography}
\begin{IEEEbiography}[{\includegraphics[width=1in,height=1.25in,clip,keepaspectratio]{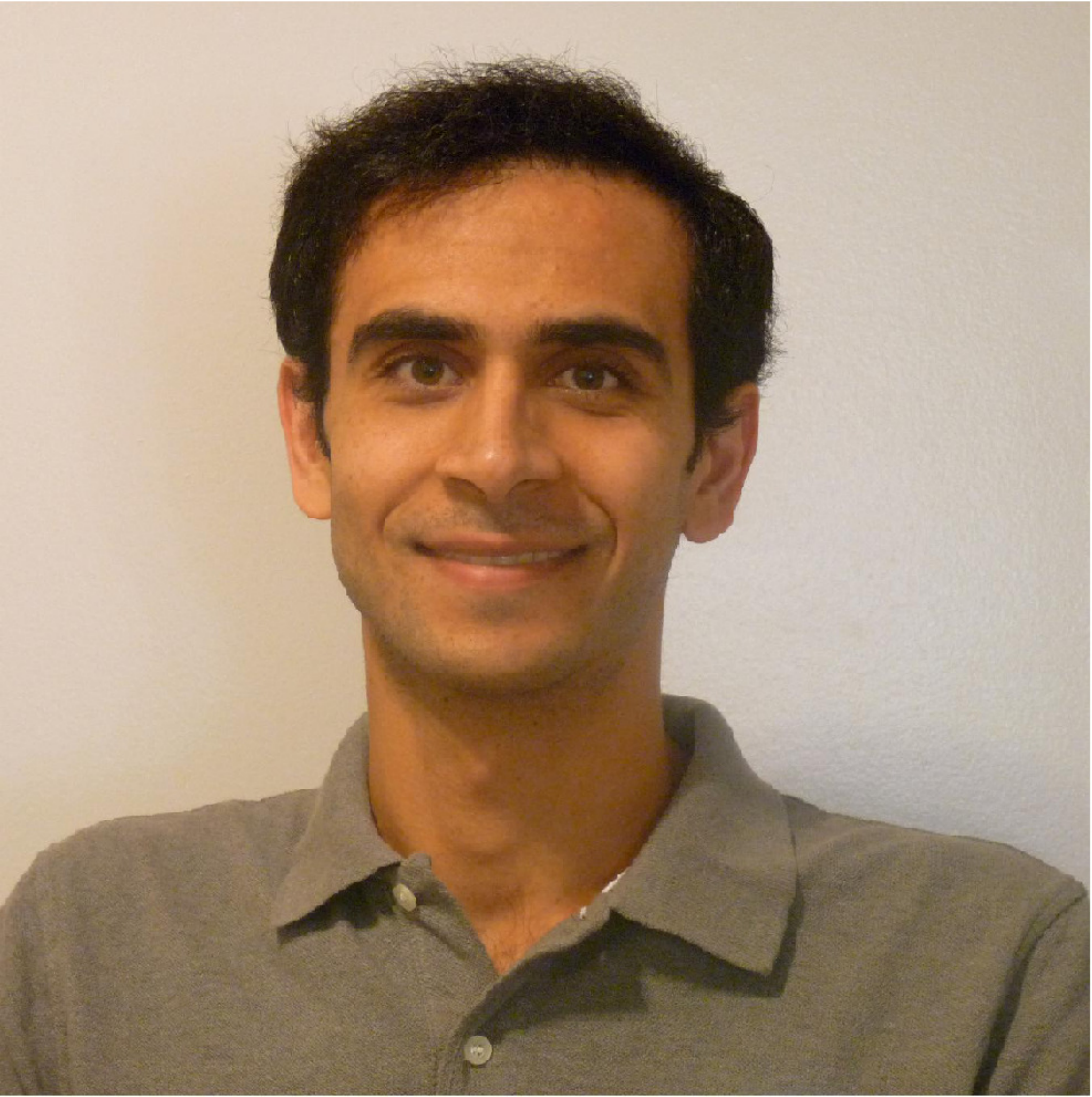}}]{Dariush Fooladivanda}
received the Ph.D. degree from the University of Waterloo, in 2014, and a B.S. degree from the Isfahan University of Technology, all in electrical engineering. He is currently a Postdoctoral Research Associate in the Department of Electrical Engineering and Computer Sciences at the University of California Berkeley. His research interests include theory and applications of control and optimization in large scale dynamical systems.
\end{IEEEbiography}
\begin{IEEEbiography}[{\includegraphics[width=1in,height=1.25in,clip,keepaspectratio]{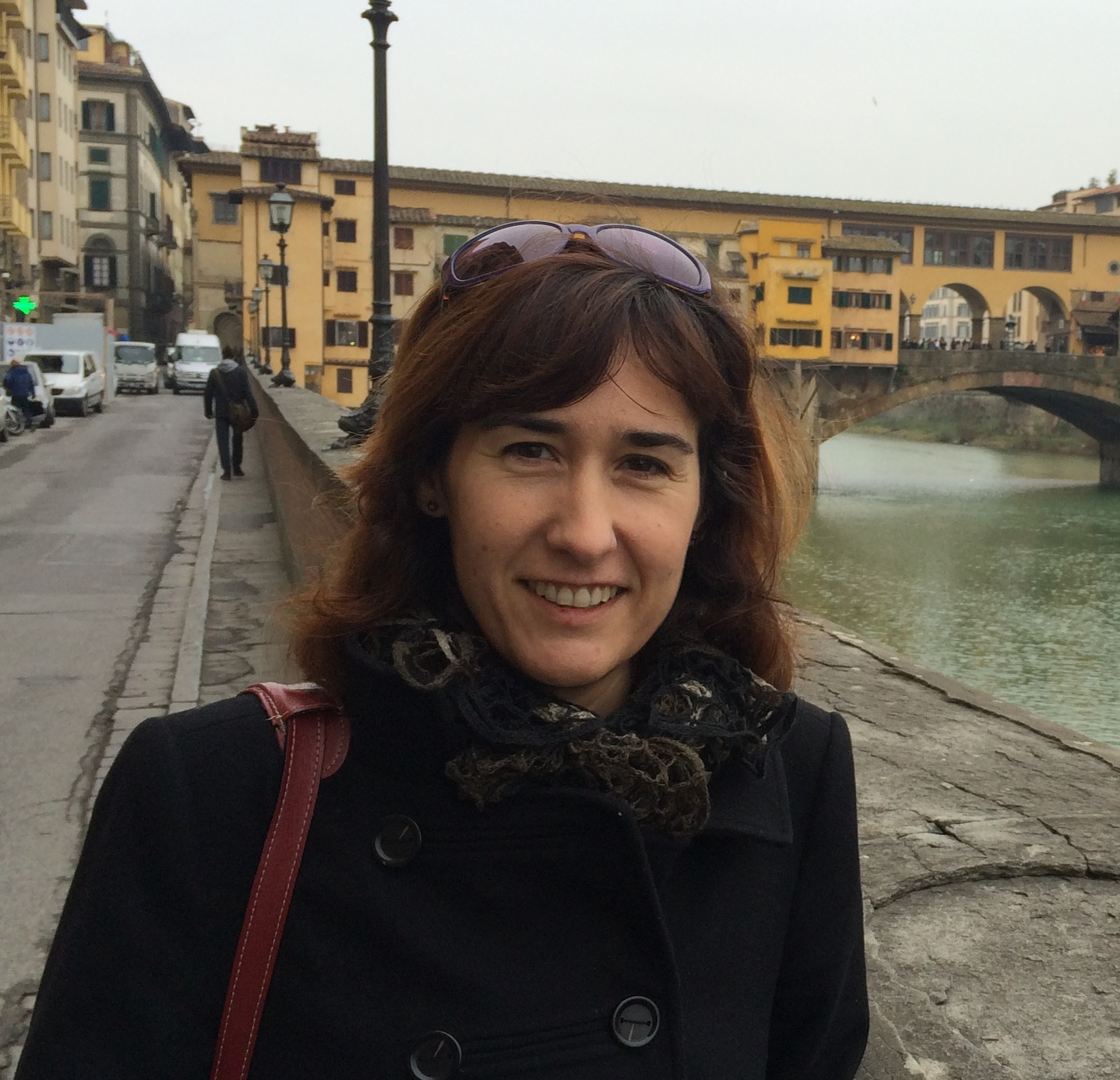}}]{Sonia
   Martínez} is a Professor at the Department of Mechanical and
 Aerospace Engineering at the University of California, San
 Diego. She received her Ph.D. degree in Engineering Mathematics from
 the Universidad Carlos III de Madrid, Spain, in May 2002. Following
 a year as a Visiting Assistant Professor of Applied Mathematics at
 the Technical University of Catalonia, Spain, she obtained a
 Postdoctoral Fulbright Fellowship and held appointments at the
 Coordinated Science Laboratory of the University of Illinois,
 Urbana-Champaign during 2004, and at the Center for Control,
 Dynamical systems and Computation (CCDC) of the University of
 California, Santa Barbara during 2005.

 Her research interests include networked control systems,
 multi-agent systems, and nonlinear control theory with applications
 to robotics and cyber-physical systems. For her work on the control
 of underactuated mechanical systems she received the Best Student
 Paper award at the 2002 IEEE Conference on Decision and Control. She
 co-authored with Jorge Cortés and Francesco Bullo "Motion
 coordination with Distributed Information" for which they received
 the 2008 Control Systems Magazine Outstanding Paper Award. She is a
 Senior Editor of the IEEE Transactions on Control of Networked
 Systems and an IEEE Fellow.
\end{IEEEbiography}

\end{document}